\DeclareMathOperator{\aff}{aff}
\DeclareMathOperator{\cone}{cone}
\DeclareMathOperator{\conv}{conv}
\DeclareMathOperator{\dist}{dist}
\DeclareMathOperator{\size}{size}
\DeclareMathOperator{\vars}{Vars}
\newcommand{\citelist}[1]{\raisebox{.2ex}{[}#1\raisebox{.2ex}{]}}
\newcommand{\scite}[1]{\citeauthor{#1}, \citeyear{#1}}
\newcommand{\multiciteiii}[3]%
  {\citelist{\scite{#1}, \citeyear{#2}, \citeyear{#3}}}
\renewcommand{\phi}{\varphi}
\renewcommand{\epsilon}{\varepsilon}
\renewcommand{\emptyset}{\varnothing}  
\newtheorem{definition}{Definition}
\newtheorem{theorem}{Theorem}
\newtheorem{lemma}{Lemma}
\newtheorem{corollary}{Corollary}
\newtheorem{example}{Example}
\begin{document}

\title[Constraint Satisfaction and Semilinear Expansions of Addition]{Constraint Satisfaction and Semilinear Expansions of Addition over the Rationals and the Reals}
\date{}

\author{Peter Jonsson$^\ast$}
\thanks{$^\ast$The first author was partially supported by the Swedish Research Council (VR) under grant 621-2012-3239.}
\address{
Department of Computer and Information Science,
Link\"{o}pings universitet\\
SE-58183 Link\"{o}ping, Sweden}
\email{peter.jonsson@liu.se}
\author{Johan Thapper}
\address{
Laboratoire d'Informatique Gaspard-Monge,
Universit\'e Paris-Est Marne-la-Vall\'ee\\
5 boulevard Descartes, F-77420 Champs-sur-Marne,
France}
\email{thapper@u-pem.fr}

\pagestyle{plain}

\begin{abstract}
A semilinear relation is a 
finite union of finite intersections of open and closed half-spaces 
over, for instance, the reals, the rationals,
or the integers.
Semilinear
relations have been studied in connection with
algebraic geometry, automata theory, and
spatiotemporal reasoning. 
We consider semilinear relations over the rationals and the reals.
Under this assumption, the computational
complexity of the constraint satisfaction problem (CSP) is known
for all finite sets containing
$R_+=\{(x,y,z) \mid x+y=z\}$, $\leq$, and $\{1\}$.
These problems correspond to expansions of the linear programming feasibility problem.
We generalise this result and fully determine the complexity for all
finite sets of semilinear relations containing $R_+$.
This is accomplished in part by introducing an algorithm,
based on computing affine hulls,
which solves a new class of semilinear CSPs in polynomial time.
We further analyse the complexity of linear optimisation over
the solution set and the existence of integer solutions.

\medskip
\noindent \textbf{Keywords.} 
Constraint satisfaction problems,
Semilinear sets,
Algorithms,
Computational complexity
\end{abstract}

%

\maketitle

\section{Introduction}

We work over a ground set (or domain) $X$,
which will either be the rationals, ${\mathbb Q}$, or the reals, ${\mathbb R}$.
We say that a relation $R \subseteq X^k$ is {\em semilinear} if
it can be represented
as a finite union of finite intersections of open and closed half-spaces in $X^k$.
Alternatively, $R$ is semilinear if it is
first-order definable in $\{R_+,\leq,\{1\}\}$ where
$R_+=\{(x,y,z) \in X^3 \mid x+y=z\}$~\cite{Ferrante:Rackoff:sicomp75}. 
Semilinear relations appear in many different contexts within mathematics and
computer science: they are, for instance, frequently encountered in
algebraic geometry, automata theory, spatiotemporal reasoning, and computer algebra.
Semilinear relations have also attained a fair amount of attention in connection
with {\em constraint satisfaction problems} (CSPs). 
In a CSP, we are given a set of variables, a (finite or infinite) domain of values,
and a finite set of constraints.
The question is whether or not we can assign values to the variables
so that all constraints are satisfied.
From a complexity theoretical viewpoint,
solving general constraint satisfaction problems is obviously a hard problem.
Various ways of refining the problem can be adopted to allow a more meaningful analysis.
A common refinement is that of introducing a \emph{constraint language};
a finite set $\Gamma$ of allowed relations.
One then considers the problem CSP$(\Gamma)$ in which all constraint
in the input must be members of $\Gamma$.
This parameterisation of constraint
satisfaction problems has proved to be very fruitful for CSPs over both finite
and infinite domains. 
Since $\Gamma$ is finite,
the computational complexity of such a problem does not depend on the actual representation
of the constraints.

The complexity of finite-domain CSPs has been studied for a long time
and a powerful algebraic toolkit has gradually formed~\cite{Bulatov:etal:sicomp2005}.
Much of this work has been devoted to the 
{\em Feder-Vardi conjecture}~\cite{Feder:Vardi:stoc93,Feder:Vardi:siamjc98}
which posits that every finite-domain CSP
is either polynomial-time solvable or NP-complete.
Infinite-domain CSPs, on the other hand, constitute a much more diverse set of problems.
In fact, \emph{every} computational problem is polynomial-time equivalent to an infinite-domain
CSP~\cite{Bodirsky:Grohe:icalp2008}.
Obtaining a full understanding of their computational complexity is thus out of
the question,
and some further restriction is necessary.
In this article, this restriction will be to study semilinear relations and constraint languages.

A relation $R \subseteq X^k$ is said to be {\em essentially convex} if 
for all $p,q \in R$ there are only finitely many points on
the line segment between $p$ and $q$ that are not in $R$.
A constraint language $\Gamma$ is said to be essentially convex if every relation in $\Gamma$ is
essentially convex.
The main motivation for this study
is the following result:
\begin{theorem}[Bodirsky et al.~\cite{Bodirsky:etal:lmcs2012}] \label{bodirskytheorem}
Let $\Gamma$ be a finite set of semilinear relations over ${\mathbb Q}$ or ${\mathbb R}$ such that
$\{R_+,\leq,\{1\}\} \subseteq \Gamma$. Then,

\begin{enumerate}
\item
CSP$(\Gamma)$ is polynomial-time solvable if $\Gamma$ is essentially convex,
and
NP-complete otherwise; and
\item
the problem of optimizing a linear polynomial
over the solution set of CSP$(\Gamma)$ is polynomial-time solvable if and only if CSP$(\Gamma)$ is polynomial-time solvable (and NP-hard otherwise).
\end{enumerate}
\end{theorem}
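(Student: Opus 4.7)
I would split the statement into the complexity dichotomy of part~(1) and the optimisation equivalence of part~(2), tackling part~(1) first.

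For the hardness half of part~(1), assume some $R \in \Gamma$ fails to be essentially convex, witnessed by $p, q \in R$ and infinitely many points of $[p,q]$ outside $R$. Because $R$ is semilinear, its trace on $[p,q]$ is itself semilinear, hence a finite union of sub-intervals of $[p,q]$; therefore the complement in $[p,q]$ contains an open sub-interval $(a,b)$ entirely disjoint from $R$. Using the linear machinery $\{R_+, \leq, \{1\}\}$ one can primitive-positively interpret, along the chord parameter, a unary ``gap'' relation forbidding a fixed rational interval. Such a gap is a binary-choice primitive: a variable must lie either below $a$ or above $b$. Combined with linear arithmetic, this is enough to encode a standard NP-hard problem such as Not-All-Equal 3-SAT by mapping truth values to the two sides of the gap, with the linear constraints enforcing the clause conditions.

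For the tractability half of part~(1), assume every $R \in \Gamma$ is essentially convex. Write each $R$ in disjunctive normal form as a finite union of polyhedral cells (intersections of strict and non-strict half-spaces). An instance of CSP$(\Gamma)$ then reduces, after choosing one cell per constraint, to a linear-programming feasibility instance; the difficulty is that the number of cell-choices is exponential. The essential-convexity hypothesis implies structural rigidity: the ``removed'' part of each relation lies in a union of hyperplanes of strictly smaller dimension than the relation's affine hull. I would therefore run an iterative procedure that maintains the affine hull $A$ of the current candidate solution set, calls an LP oracle to produce a point on $A$ satisfying the closures of all constraints, and, whenever that point violates a disjunctive (strict) constraint, detects a lower-dimensional affine subspace of $A$ into which every remaining feasible solution must be forced. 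Since the dimension drops by at least one per failed iteration, only polynomially many LP calls are required.

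Part~(2) follows once part~(1) is in hand. The direction ``$\Leftarrow$'' is immediate: a feasibility question is an optimisation question with any constant objective. For the converse, assume CSP$(\Gamma)$ is polynomial-time solvable. Every rational linear inequality is primitive-positive definable from $\{R_+, \leq, \{1\}\}$, so appending the constraint $c \cdot x \leq t$ to an instance keeps us inside CSP$(\Gamma)$. Standard polyhedral estimates bound the bit-size of optimal vertices of the feasible region, giving a polynomially bounded search interval for the optimum; binary search on $t$ using the CSP$(\Gamma)$ oracle then recovers it in polynomial time. The main obstacle I anticipate is the tractability half of part~(1): making the affine-hull/LP loop rigorously handle the mixture of strict and non-strict half-spaces, and verifying that essential convexity really does guarantee the dimension-drop invariant in the presence of disjunctive constraints. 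The hardness direction is more combinatorial but conceptually cleaner once the correct ``gap'' primitive is pinned down, and part~(2) is essentially bookkeeping on top of part~(1).
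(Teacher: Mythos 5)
First, note that the paper does not prove this statement at all: it is Theorem~\ref{bodirskytheorem}, imported verbatim from Bodirsky, Jonsson and von~Oertzen~\cite{Bodirsky:etal:lmcs2012}, and only its tractability half is re-stated later (as Theorem~\ref{thm:essconvtractable}) for use as a black box. So your attempt is being measured against the cited external proof rather than anything in this text. Your hardness outline and your treatment of part~(2) are essentially the standard route: o-minimality turns a failure of essential convexity into a pp-definable unary relation excluding an interval, which (after bounding, translating and shrinking with $\leq$, $\{1\}$ and $R_+$) yields an approximate $\{-1,1\}$ and a sum-constraint reduction from Not-All-Equal 3-SAT --- exactly the machinery this paper itself rebuilds in Lemmas~\ref{lem:smalldelta1}--\ref{universalhardness}; and binary search over a pp-definable objective bound is the standard way to get the optimisation equivalence, modulo the bookkeeping needed to distinguish ``attained'', ``approached'' and ``unbounded''.

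The genuine gap is in the tractability half of part~(1), and you have correctly located it yourself but underestimated it. Two specific problems. (i) The structural claim that for an essentially convex semilinear $R$ the set $\conv(R)\setminus R$ lies in finitely many convex pieces of strictly smaller dimension is true but is not a one-liner; it is essentially Lemma~\ref{lemur} of this paper (proved there under even weaker hypotheses via an inductive Carath\'eodory argument), and your sketch asserts it without argument. (ii) More seriously, the loop invariant is wrong as stated: if your LP oracle returns a single point of $A$ satisfying the closed relaxations but landing in a ``hole'' of some $R_i$, this does \emph{not} certify that every solution lies in a lower-dimensional affine subspace --- the point may simply be unluckily placed in a slit while full-dimensional solutions exist right next to it. To make the dimension-drop argument sound you need a covering statement of the form: if the (convex, full-dimensional relative to $A$) relaxation region is entirely consumed by the finitely many lower-dimensional convex holes, then it is contained in the affine hull of a single hole, and only \emph{then} may you recurse on that affine hull; otherwise a solution exists and you must be able to exhibit one, e.g.\ by working with a relative-interior point rather than an arbitrary LP-feasible point. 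Without this covering lemma and the relative-interior refinement, the step ``only polynomially many LP calls are required'' does not follow, and the algorithm as described can answer ``no'' on satisfiable instances.
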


One may suspect that there are semilinear constraint languages $\Gamma$ such that
CSP$(\Gamma) \in P$ but $\Gamma$ is not essentially convex. This is indeed
true and we identify two such cases.
In the first case, we consider relations with large ``cavities''.
It is not surprising that the algorithm for essentially convex relations
(and the ideas behind it) cannot be applied in the presence of
such highly non-convex relations.
Thus, we introduce a new algorithm which solves CSPs of this type in
polynomial time.
It is based on computing affine hulls
and the idea of improving an easily representable upper bound on the solution space by
looking at one constraint at a time; a form of ``local consistency'' method.
In the second case, we consider relations $R$ that are not necessarily essentially convex,
but look essentially convex when viewed form the origin.
That is, any points $p$ and $q$ that witnesses a not essentially convex relation
lie on a line that passes outside of the origin.
We show that we can remove all such
holes from $R$ to find an equivalent constraint language that is essentially convex,
and thereby solve the problem in polynomial time.

Combining these algorithmic results with matching
NP-hardness results and the fact that
CSP$(\Gamma)$ is always in NP for a semilinear constraint language $\Gamma$ 
(cf.~Theorem~5.2 in Bodirsky et al.~\cite{Bodirsky:etal:lmcs2012})
yields a dichotomy:

\begin{theorem} \label{superresult}
Let $\Gamma$ be a finite set of semilinear constraints that contains $R_+$.
Then, CSP$(\Gamma)$ is either in P or NP-complete.
\end{theorem}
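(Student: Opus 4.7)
The plan is to establish the dichotomy by combining a membership result with a case analysis that matches every tractable structural class to an algorithm and every remaining class to an NP-hardness reduction. Since $\Gamma$ is semilinear, CSP$(\Gamma)$ lies in NP by Theorem~5.2 of Bodirsky et al., so I only need to show that every such problem is either polynomial-time solvable or NP-hard. The hypothesis that $R_+\in\Gamma$ is what allows one to simulate addition (and hence scaling by positive rationals) inside the constraint language, which is essential both for the algorithmic transformations and for the hardness reductions.

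The first step is to classify the relations in $\Gamma$ by geometric properties. Say $R\in\Gamma$ is an \emph{obstacle} if it is not essentially convex, i.e.\ if there are $p,q\in R$ with infinitely many points of the open segment $pq$ lying outside $R$. I would separate the obstacles according to the position of such a segment relative to the origin: either (i) $R$ can be viewed as essentially convex when considered as a union of rays from the origin (the ``convex-from-origin'' case hinted at in the introduction), or (ii) $R$ contains a ``large cavity'', meaning a full-dimensional region in $\aff R$ that is disjoint from $R$ but cannot be removed by the origin trick. The first substep is to prove that these two situations, together with the essentially convex case, cover all semilinear relations up to the presence of NP-hard obstacles.

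Once the classification is in place, the tractable branches proceed as follows. If every $R\in\Gamma$ is essentially convex, then $\Gamma\cup\{\leq,\{1\}\}$ is still essentially convex and Theorem~\ref{bodirskytheorem} applies after showing that $\leq$ and $\{1\}$ can be harmlessly added (either they are already primitive positive definable from $\Gamma$, or the instance can be extended by extra variables without changing feasibility); this yields CSP$(\Gamma)\in\mathsf{P}$. In the convex-from-origin case, I would replace each relation $R\in\Gamma$ by the essentially convex relation $R'$ obtained by filling in the holes and argue, using $R_+$ to encode positive scalings, that CSP$(\Gamma)$ and CSP$(\Gamma')$ have the same satisfiable instances, reducing to the previous case. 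In the large-cavity case, I would invoke the new affine-hull algorithm advertised in the introduction: start from the trivial upper bound $X^n$ on the solution set, and iteratively tighten it by replacing it with the affine hull of its intersection with the solution sets of the individual constraints until a fixed point is reached; correctness rests on showing that for the relations in this class, the resulting affine set is either empty or contains a genuine solution.

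Finally, for every $\Gamma$ that fails to fit any of these classes, I would produce a polynomial-time reduction from an already-known NP-hard semilinear CSP (for instance one of the ``gap'' relations used in the hardness direction of Theorem~\ref{bodirskytheorem}) by primitively positive defining such a relation from $\Gamma\cup\{R_+\}$. I expect the main obstacle to lie precisely here: showing that the three geometric classes above are exhaustive among the tractable side, and that every obstacle not absorbed by them is expressive enough, in combination with $R_+$, to pp-define a canonical hard relation. The bookkeeping needed to translate the purely geometric condition ``$R$ has a cavity that survives the origin trick'' into an algebraic pp-definition of a hard template is the delicate part; once this is done, the dichotomy follows immediately by assembling the four subresults above.
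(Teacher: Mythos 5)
Your high-level skeleton (NP membership, three algorithmic classes matching the introduction, hardness for the rest) points in the right direction, but there are genuine gaps. The most concrete one: you omit the $0$-valid tractable class, and it is not subsumed by your three classes. Take $\Gamma=\{R_+,R\}$ with $R=\{(0,0,0,0)\}\cup\{(x,y,z,1)\mid (x,y,z)\in R_{\rm NAE}\}$. Here CSP$(\Gamma)$ is trivially in P (assign $0$ everywhere), yet $R$ is not essentially convex, the witnessing segment from $(0,0,0,0)$ to $(-1,-1,1,1)$ lies on a line \emph{through} the origin (so your origin trick does not remove the hole), and $\langle\Gamma\rangle$ contains the bounded non-constant unary relation $\{0,1\}$ obtained by projecting $R$ to its last coordinate (so the affine-consistency algorithm is not applicable). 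Your plan would therefore send this $\Gamma$ to the hardness branch, where no pp-definition of a hard template can exist unless P$=$NP. Relatedly, your claim that $\{1\}$ can be ``harmlessly added'' to an essentially convex $\Gamma$ is false in general for exactly this reason (adding $\{1\}$ destroys $0$-validity); fortunately it is also unnecessary, since the tractability of essentially convex languages holds without $\leq$ and $\{1\}$ (Theorem~\ref{thm:essconvtractable}).

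The second gap is the one you yourself flag as ``the delicate part'': the exhaustiveness of your geometric trichotomy and the pp-definition of a hard template from everything outside it. This is where essentially all of the paper's work lies, and the mechanism is missing from your proposal. The correct dividing lines are stated not in terms of the geometry of the relations of $\Gamma$ but in terms of the \emph{unary} relations pp-definable in $\langle\Gamma\rangle$: affine consistency is correct exactly when $\langle\Gamma\rangle$ contains no bounded non-constant unary relation (a ``bnu''); the hole-removal reduction applies when every unary relation in $\langle\Gamma\rangle$ is essentially convex; and NP-hardness is derived (via Not-All-Equal 3SAT) once one can pp-define a bnu bounded away from $0$ together with a unary relation excluding an interval. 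To make these cases exhaustive the paper first splits on two properties (P$_0$) and (P$_\infty$) of pp-definable unary relations; when either fails, CSP$(\Gamma)$ is shown equivalent to CSP$(\cone(\Gamma))$ for a \emph{homogeneous} language, which is then either $0$-valid or polynomial-time equivalent to its expansion by $\{1\}$, reducing to the classification with constants. Without this reduction to unary relations and to the homogeneous case, the claim that your classes cover all tractable languages and that every remaining language pp-defines a canonical hard relation is unsubstantiated.
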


Our result immediately generalises the first part of Theorem~\ref{bodirskytheorem}.
It also generalises another result by Bodirsky et al.~\cite{Bodirsky:etal:jlc2012}
concerning expansions of $\{R_+\}$ with relations that are first-order definable
in $\{R_+\}$.
One may note that this class of relations is a severely restricted subset
of the semilinear relations since it admits quantifier elimination
over the structure $\{+,\{0\}\}$, where $+$ denotes the binary addition function.
This follows from the more general fact that the
first-order theory of torsion-free divisible abelian groups admits quantifier elimination (see
e.g. Theorem 3.1.9 in \cite{Marker:MT}).
One may thus alternatively view relations that are first-order definable in $\{R_+\}$
as finite unions of sets defined by homogeneous linear systems of equations. 

We continue by generalising the second
part of Theorem~\ref{bodirskytheorem}, too: if $\Gamma$ is semilinear and contains $\{R_+,\{1\}\}$, then
the problem of optimising a linear polynomial
over the solution set of CSP$(\Gamma)$ is polynomial-time solvable if and only 
CSP$(\Gamma)$ is polynomial-time solvable (and NP-hard otherwise).
We also study the problem of finding integer solutions to
CSP$(\Gamma)$ for certain semilinear constraint languages $\Gamma$. 
Here, we obtain some partial results but a full classification remains elusive.
Our results shed some light on the {\em scalability property} introduced
by Jonsson and Lööw~\cite{Jonsson:Loow:AI13}.

This article has the following structure. We begin by formally defining constraint
satisfaction problems and semilinear relations together with some terminology and minor 
results in Section~\ref{sec:prels}.
The algorithms and tractability results that are presented in Section~\ref{sec:tractability}
while the hardness results can be found in Section~\ref{sec:hard}.
By combining the results from Section~\ref{sec:tractability} and \ref{sec:hard}, we prove
Theorem~\ref{superresult} in Section~\ref{sec:mainresult}.
We partially generalise Theorem~\ref{superresult}  to optimisation problems 
in Section~\ref{sec:optresult},
and we study the problem of finding integer solutions in Section~\ref{sec:intsol}.
Finally, we discuss some obstacles to further generalisations in Section~\ref{sec:waysforward}.
This article is a revised and extended version of a conference paper~\cite{Jonsson:Thapper:mfcs2014}.

\section{Preliminaries}
\label{sec:prels}

\subsection{Constraint satisfaction problems}

Let $\Gamma=\{R_1,\ldots,R_n\}$ be a finite set
of finitary relations over some domain $X$ (which will usually be infinite).
We refer to $\Gamma$ as
a \emph{constraint language}.
In order to avoid some uninteresting trivial cases, we will assume that all constraint languages are
non-empty and contain non-empty relations only.

A first-order formula is called \emph{primitive positive}
if it is of the form 
$\exists x_1,\dots,x_n. \psi_1 \wedge \dots \wedge \psi_m$,
where each $\psi_i$ is an atomic formula, i.e., either
$x=y$ or $R(x_{i_1},\dots,x_{i_k})$ with $R$ the relation
symbol for a $k$-ary relation from $\Gamma$. We call such
a formula a {\em pp-formula}. 
Note that all variables do not have to be existentially quantified;
if they are, then we say that the formula is a {\em sentence}.
Given a pp-formula $\Phi$, we let
$\vars(\Phi)$ denote the set of variables appearing in $\Phi$.
The atomic formulas $R(x_{i_1},\dots,x_{i_k})$ in a pp-formula $\Phi$ are also called the
\emph{constraints} of $\Phi$.

The \emph{constraint satisfaction problem for a constraint language $\Gamma$} 
(CSP$(\Gamma)$ for short)
is the following decision problem:

\medskip
\begin{center}
\fbox{
  \parbox{0.9\textwidth}{
{\bf Problem:} CSP$(\Gamma)$, where $\Gamma$ is a finite set of relations over a domain $X$.

\noindent
{\bf Input:} A primitive positive sentence $\Phi$ over $\Gamma$.

\noindent
{\bf Output:} `yes' if $\Phi$ is true in $\Gamma$, `no' otherwise.
}
}
\end{center}
\medskip

The exact representation of the relations in $\Gamma$ is unessential 
since we exclusively study finite constraint languages.

A relation $R(x_1,\ldots,x_k)$ is {\em pp-definable from}
$\Gamma$ if there exists a quantifier-free pp-formula $\phi$ 
over $\Gamma$ such that
\[R(x_1,\ldots,x_k) \equiv \exists y_1,\ldots,y_n.\phi(x_1,\ldots,x_k,y_1,\ldots,y_n).\]
The set of all relations that are pp-definable over $\Gamma$ is denoted by
$\langle \Gamma \rangle$.
The following easy but important result explains the role
of primitive positive definability for studying the computational complexity of CSPs.
We will use it extensively in the sequel without making explicit references.
\begin{lemma}[Jeavons~\cite{Jeavons:tcs98}] \label{pp-def}
Let $\Gamma$ be a constraint language and $\Gamma' \subseteq \langle \Gamma \rangle$ a
finite subset. 
Then CSP$(\Gamma')$ is polynomial-time
reducible to CSP$(\Gamma)$.
\end{lemma}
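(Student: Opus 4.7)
The plan is a direct syntactic substitution argument. Since $\Gamma'$ is finite, for each relation $R \in \Gamma'$ we can fix, once and for all, a pp-definition
\[ R(x_1,\ldots,x_k) \;\equiv\; \exists y_1,\ldots,y_{n_R}.\; \phi_R(x_1,\ldots,x_k,y_1,\ldots,y_{n_R}), \]
where $\phi_R$ is a conjunction of atomic formulas whose relation symbols name members of $\Gamma$. The size of each $\phi_R$ is a constant (depending only on $R$), and there are only finitely many such $R$, so the total ``code'' of these definitions has constant size independent of the input.

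First, I would describe the reduction. Given an input pp-sentence $\Phi = \exists x_1,\ldots,x_N.\,\psi_1 \wedge \cdots \wedge \psi_m$ to CSP$(\Gamma')$, walk through the constraints $\psi_j$ one by one. For each $\psi_j$ of the form $R(x_{i_1},\ldots,x_{i_k})$ with $R \in \Gamma'$, replace it by the formula obtained from $\phi_R$ by substituting $x_{i_\ell}$ for $y_\ell$ in its free positions, while renaming each bound variable of $\phi_R$ to a fresh variable not used elsewhere. Existentially quantify all the fresh variables on the outside. Equality atoms $x=y$ in the resulting formula can be eliminated in the usual way by identifying variables (or by using any relation from $\Gamma$, if $\Gamma'$ genuinely contains $=$ one handles it with a small further pp-definition; but since we have already absorbed equalities as part of pp-formulas, this is a minor cosmetic step). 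The result is a pp-sentence $\Phi'$ over $\Gamma$, i.e., a legitimate instance of CSP$(\Gamma)$.

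Next I would verify equivalence and the time bound. Equivalence follows because the pp-definition of $R$ is logically equivalent to $R$ itself, and conjunction with fresh existential variables preserves satisfiability under substitution; so any satisfying assignment of $\Phi'$ restricted to the original variables satisfies $\Phi$, and conversely any satisfying assignment of $\Phi$ can be extended by witnesses for the existential variables of each pp-definition to satisfy $\Phi'$. For the time bound, each constraint in $\Phi$ is replaced by a block of size $O(1)$ (bounded by $\max_{R \in \Gamma'}|\phi_R|$), so $|\Phi'| = O(|\Phi|)$ and the construction is carried out in linear time.

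The proof is essentially bookkeeping, and there is no real obstacle; the only place to be careful is the fresh-variable renaming, which ensures that different occurrences of the same $R$-constraint do not accidentally identify the existential witnesses of distinct copies of $\phi_R$. Once the reduction is described, the correctness and polynomial-time claims are immediate.
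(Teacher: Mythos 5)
Your argument is correct and is precisely the standard substitute-and-rename proof that the cited reference (Jeavons) gives; the paper itself states this lemma without proof, simply citing it. The only cosmetic slip is that the fresh variables should replace the \emph{bound} variables $y_1,\dots,y_{n_R}$ of $\phi_R$ while the constraint's arguments $x_{i_1},\dots,x_{i_k}$ replace its \emph{free} variables, which is clearly what you intend.
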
 

Let $\Gamma=\{R_1,\ldots,R_k\}$ and $\Gamma'=\{R'_1,\ldots,R'_k\}$ be two
constraint languages such that $R_i$ and $R'_i$ are relations of the same arity.
Given an instance $\Phi$ of CSP$(\Gamma)$, let $\Phi'$ denote the instance
where each occurrence of a relation $R_i$ is replaced by $R'_i$.
We say that CSP$(\Gamma)$ is \emph{equivalent} to CSP$(\Gamma')$ if
$\Phi$ is true in $\Gamma$ if and only if $\Phi'$ is true in $\Gamma'$.
It is clear that if CSP$(\Gamma)$ and CSP$(\Gamma')$ are equivalent CSPs, then they
have the same complexity (up to a trivial linear-time transformation).

\subsection{Semilinear relations}

The domain, $X$, of every relation in this article will be 
the set of rationals, ${\mathbb Q}$,
or
the set of reals, ${\mathbb R}$.
In all cases, the set of coefficients, $Y$, will be the set of rationals,
but in order to avoid confusion, we will still make this explicit in our notation.
We define the following sets of relations.
\begin{itemize}
\item
$LE_X[Y]$ denotes the set of linear equalities over $X$ with coefficients in $Y$.
\item
$LI_X[Y]$ denotes the set of (strict and non-strict) linear inequalities over $X$ with coefficients in $Y$.
\end{itemize}

Sets defined by finite conjunctions of inequalities from $LI_X[Y]$ are called \emph{linear sets}.
The set of \emph{semilinear sets}, $SL_X[Y]$, is defined to be the set of finite unions
of linear sets.
We will refer to
$SL_{\mathbb Q}[{\mathbb Q}]$ and
$SL_{\mathbb R}[{\mathbb Q}]$ as semilinear relations over ${\mathbb Q}$ and ${\mathbb R}$,
respectively.
One should be aware of the representation of objects in $LE_X[Y]$ and $LI_X[Y]$
compared to $SL_X[Y]$. In $LE_X[Y]$ and $LI_X[Y]$, we view the equalities and inequalities
as syntactic objects which we can use for building logical formulas. Now, recall the
definition of a linear set: it is defined by a {\em conjunction} of inequalities.
However, a linear set is not a logical formula, it is a subset of $X^k$.
The same thing holds for semilinear sets: they are defined by {\em unions} of linear sets
and should thus not be viewed as logical formulas. This distinction has certain
advantages when it comes to terminology and notation but it also emphasise a
difference in the way we view and use these objects. 
The objects in $LE_X[Y]$ and $LI_X[Y]$ are often used in a logical context (such
as pp-definitions) while the semilinear relations 
are typically used in a geometric context.

Given a relation $R$ of arity $k$, let 
$R|_X = R \cap X^k$ and
$\Gamma|_X = \{ R|_X \mid R \in \Gamma \}$.
We demonstrate that CSP$(\Gamma)$ and CSP$(\Gamma|_{\mathbb Q})$ are equivalent
as constraint satisfaction problems whenever $\Gamma \subseteq SL_{\mathbb R}[{\mathbb Q}]$.
Thus, we will exclusively concentrate on relations from
$SL_{\mathbb Q}[{\mathbb Q}]$ in the sequel.
Let $\Gamma \subseteq SL_{\mathbb R}[{\mathbb Q}]$ and
let $\Phi$ be an instance of CSP$(\Gamma)$. Construct
an instance $\Phi'$ of CSP$(\Gamma|_{\mathbb Q})$ by replacing each occurrence
of $R$ in $\Phi$ by $R|_{\mathbb Q}$. 
If $\Phi'$ has a solution, then $\Phi$ has
a solution since $R|_{\mathbb Q} \subseteq R$ for each $R \in \Gamma$.
If $\Phi$ has a solution, then it has a rational solution by
Lemma 3.7 in Bodirsky et al.~\cite{Bodirsky:etal:lmcs2012}
so $\Phi'$ has a solution, too.

The following lemma is a direct consequence of our definitions: this particular 
property is often referred to as $o$-{\em minimality} in the literature~\cite{Hodges:ASMT}.
\begin{lemma}\label{fakeomin}
Let $R \in SL_X[Y]$ be a unary semilinear relation.
Then, $R$ can be written as a finite union of open, half-open, and closed intervals with endpoints in $Y \cup \{-\infty, \infty\}$ together
with a finite set of points in $Y$.
\end{lemma}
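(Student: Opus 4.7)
The plan is to unpack the definition of $SL_X[Y]$ in the unary case and observe that the only one-dimensional linear sets are intervals (possibly empty, possibly unbounded, possibly degenerate). Since a semilinear set is by definition a finite union of linear sets, the result will follow by simply rewriting that union and segregating the singleton pieces.

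First I would write $R = L_1 \cup \cdots \cup L_m$, where each $L_i$ is a linear set, i.e., a finite conjunction of inequalities from $LI_X[Y]$. In the unary case, each such inequality has the form $ax \leq b$, $ax < b$, $ax \geq b$, or $ax > b$ with $a,b \in Y = {\mathbb Q}$. If $a = 0$, the inequality reduces to a statement about constants and defines either $X$ or $\emptyset$. If $a \neq 0$, dividing by $a$ (and flipping the inequality when $a < 0$) turns the inequality into a half-line whose endpoint $b/a$ again lies in $Y$, because $Y$ is closed under division of non-zero elements.

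Next I would observe that each $L_i$ is the intersection of finitely many such half-lines (possibly with $X$ and $\emptyset$). Intersecting half-lines in one dimension amounts to taking the maximum of the lower bounds and the minimum of the upper bounds, with the boundary type (open/closed) determined by the strictness of the tightest bounds on each side. Thus every non-empty $L_i$ is either an interval (open, half-open, or closed) with endpoints in $Y \cup \{-\infty,+\infty\}$, or a single point in $Y$ (the degenerate closed interval obtained when the lower and upper bounds coincide and both are non-strict).

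Finally, I would take the union over $i$ and separate the degenerate $L_i$'s (the singletons) from the genuine intervals to match the statement. The only things to check are minor corner cases: the trivial inequalities with $a=0$, empty linear sets, and singleton linear sets. I do not anticipate any real obstacle — the lemma is, as announced, a direct unpacking of the definitions, with the only arithmetic content being that $Y = {\mathbb Q}$ is closed under taking ratios, so that the endpoints of the half-lines obtained from inequalities with coefficients in $Y$ remain in $Y$.
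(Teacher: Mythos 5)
Your proof is correct and is exactly the argument the paper has in mind: the paper omits a proof, stating only that the lemma ``is a direct consequence of our definitions,'' and your direct unpacking of unary linear sets as intervals with endpoints in $Y\cup\{-\infty,\infty\}$ (using that $Y=\mathbb{Q}$ is closed under division by non-zero elements) is that intended consequence. Nothing is missing.
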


The set of semilinear relations can also be defined as those relations that are first-order definable
in $\{R_+, \leq, \{1\}\}$~\cite{Ferrante:Rackoff:sicomp75}.
In particular, $SL_X[Y]$ is closed under pp-definitions.
Consequently, Lemma~\ref{fakeomin} is applicable to all relations discussed in this article.

\begin{lemma}[Lemma~4.3 in Bodirsky et al.~\cite{Bodirsky:etal:jlc2012}] \label{generateequations}
Let $r_1,\ldots,r_k,r \in {\mathbb Q}$.
The relation 
$\{(x_1,\ldots,x_k) \in {\mathbb Q}^k \; | \; r_1x_1+\ldots+r_kx_k=r\}$
is pp-definable in $\{R_+,\{1\}\}$ and
it is pp-definable
in $\{R_+\}$ if $r = 0$.
Furthermore, the pp-formulas that define the relations can be computed in polynomial time.
\end{lemma}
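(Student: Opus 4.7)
The plan is to build up the desired equation from simpler pp-definable relations within $\{R_+\}$ (augmented with $\{1\}$ when $r \neq 0$). First I would establish some basic primitives: the singleton $\{0\}$ is pp-definable since $R_+(x,x,x)$ forces $x+x=x$, so $x=0$; from this, the negation relation $y = -x$ is pp-definable by $\exists z.\ R_+(x,y,z) \wedge R_+(z,z,z)$; and an $m$-ary sum relation $y = x_1 + \cdots + x_m$ is pp-definable by chaining $m-1$ copies of $R_+$ with $m-2$ auxiliary variables. All of these sub-formulas have size linear in their parameters.

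Next I would pp-define scalar multiplication by an integer, $M_n(x,y) \equiv (y = nx)$. Naively iterating $R_+$ would give size $\Theta(|n|)$, which is exponential in the bit-length of $n$. To stay polynomial, I would use repeated doubling: introduce auxiliary variables $d_0, d_1, \ldots, d_\ell$ for $\ell = \lfloor \log_2 |n| \rfloor$, together with $d_0 = x$ and constraints $R_+(d_i,d_i,d_{i+1})$, so that $d_i = 2^i x$; then sum the $d_i$ corresponding to the set bits of $|n|$ to obtain a variable equal to $|n|x$, and if $n < 0$ compose with the negation primitive. The resulting sub-formula has size $O(\log |n|)$.

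To assemble the full equation, let $q$ be a common denominator of $r_1, \ldots, r_k, r$ (for instance, the product of their denominators, which has bit-length polynomial in the input). Multiplying through by $q$ yields an equivalent equation $a_1 x_1 + \cdots + a_k x_k = b$ with integer coefficients $a_i = q r_i$ and integer constant $b = q r$. I would introduce fresh variables $u_1, \ldots, u_k, t$ with constraints $M_{a_i}(x_i, u_i)$ for each $i$ and $u_1 + \cdots + u_k = t$ via the $k$-ary sum relation. When $r = 0$ (so $b = 0$), I would additionally impose $R_+(t,t,t)$; this uses only $\{R_+\}$. When $r \neq 0$, I would add a fresh $w$ constrained by $\{1\}(w)$ together with $M_b(w,t)$ to force $t = b$. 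The main obstacle, and the reason $\{1\}$ is required when $r \neq 0$, is the polynomial-size requirement: it is met only because of the doubling trick for $M_n$ and because $q$ can be chosen with polynomially many bits; both are straightforward to verify by induction on the construction.
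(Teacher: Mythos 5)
Your construction is correct: the essential points --- repeated doubling to define integer scalar multiplication with formulas of size $O(\log|n|)$, clearing denominators by a common multiple of polynomial bit-length, and using $\{1\}$ only to pin down the nonzero constant term while handling $r=0$ with $R_+(t,t,t)$ over $\{R_+\}$ alone --- are exactly what the polynomial-size requirement demands. The paper itself gives no proof of this lemma (it is quoted as Lemma~4.3 of Bodirsky et al.), and your argument is essentially the standard one underlying that reference.
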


It follows that 
$LE_{{\mathbb Q}}[{\mathbb Q}] \subseteq \langle \{R_+,\{1\}\} \rangle$ and
$LI_{{\mathbb Q}}[{\mathbb Q}] \subseteq \langle \{R_+, <, \leq, \{1\}\} \rangle$. 
One may also note that every homogeneous linear equation (with coefficients from ${\mathbb Q}$) 
is pp-definable in $\{R_+\}$.

\subsection{Unary semilinear relations}

For a rational $c$, and a unary relation $U \subseteq {\mathbb Q}$,
let $c \cdot U = \{ c \cdot x \mid x \in U \} \in \langle \{ R_+, U \} \rangle$.
When $c = -1$‚ we will also write $-U$ for $(-1) \cdot U$.

Given a relation $R \subseteq {\mathbb Q}^k$ and two distinct
points $a,b \in {\mathbb Q}^k$, we define
\[{\mathcal L}_{R,a,b}(y) \equiv \exists x_1,\ldots,x_k \: . \: R(x_1,\ldots,x_k) \wedge 
\textstyle\bigwedge_{i=1}^k x_i=(1-y) \cdot a_i+y \cdot b_i.\]
The relation ${\mathcal L}_{R,a,b}$ is a parameterisation of the intersection
between the relation $R$ and a line through the points $a$ and $b$.
Note that ${\mathcal L}_{R,a,b}$ is a member of $\langle LE_{\mathbb Q}[{\mathbb Q}] \cup \{R\} \rangle$ so, 
by Lemma~\ref{generateequations}, ${\mathcal L}_{R,a,b}$ is a member of
$\langle \{R_+,\{1\},R\} \rangle$, too.

A $k$-ary relation $R$ is {\em bounded} if
there exists an $a \in {\mathbb Q}$ such that $R \subseteq [-a,a]^k$.
A unary relation $U$ is \emph{unbounded in one direction} if 
$U$ is not bounded, but
there exists an $a \in {\mathbb Q}$ such that one of the following holds:
$U \subseteq [a,\infty)$; or $U \subseteq (-\infty,a]$.
A unary relation is called a \emph{bnu} (for \emph{bounded}, \emph{non-constant}, and \emph{unary}) if it is bounded and contains more than one point.

\begin{lemma} \label{halfboundgenerate}
Let $U$ be a unary relation in $SL_{\mathbb Q}[{\mathbb Q}]$
that is unbounded in one direction.
Then, 

\begin{enumerate}
\item
$\langle \{R_+, \{1\}, U \} \rangle$ contains
a bnu.
\item
if, in addition, $U$ contains both positive and negative elements, then
$\langle \{R_+,U\} \rangle$ contains a non-empty bounded unary relation.
\end{enumerate}
\end{lemma}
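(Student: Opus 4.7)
The plan is to construct the required unary relations by intersecting $U$ with a suitably scaled or shifted copy of itself, exploiting that by Lemma~\ref{fakeomin} a semilinear $U$ is a finite union of intervals and isolated rational points, and hence, if $U$ is unbounded to the right, then $U$ must contain an entire unbounded interval $(N,\infty)$ for some $N \in {\mathbb Q}$. Throughout, I assume without loss of generality that $U$ is unbounded to the right with $U \subseteq [a,\infty)$ for some rational $a$; the case of unboundedness to the left reduces to this by replacing $U$ with $-U$, which lies in $\langle\{R_+,U\}\rangle$ (using that $\{0\}$ is pp-defined by the formula $R_+(y,y,y)$).

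For part (1), since $\{1\}$ is available, we can realise affine shifts by Lemma~\ref{generateequations}. The idea is to reflect $U$ around a sufficiently large point. Pick a rational $c > 2N$ and define
\[
V(x) \;\equiv\; U(x) \wedge (-U)(x-c),
\]
which unfolds to $V = \{x : x \in U \text{ and } c-x \in U\}$ and is pp-definable from $\{R_+,\{1\},U\}$. The constraints $x \geq a$ and $c-x \geq a$ give $V \subseteq [a,c-a]$, so $V$ is bounded. Conversely, every $x \in (N,c-N)$ satisfies $x > N$ and $c-x > N$, so both values lie in $(N,\infty) \subseteq U$; thus $V \supseteq (N,c-N)$, which is a non-empty open interval since $c > 2N$. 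Hence $V$ is a bnu.

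For part (2) the situation is trickier because without $\{1\}$ only \emph{homogeneous} rational linear combinations are pp-definable (again by Lemma~\ref{generateequations}); we may scale but not shift. The hypothesis that $U$ contains elements of both signs lets us pick rationals $p>0$ and $q<0$ in $U$ and form the negative scalar $r = q/p$. Define
\[
V(x) \;\equiv\; U(x) \wedge U(rx),
\]
which lies in $\langle\{R_+,U\}\rangle$ since the relation $y = rx$ is a homogeneous linear equation. Then $V$ contains $p$ (because $p \in U$ and $rp = q \in U$), so $V$ is non-empty. Moreover $x \geq a$ together with $rx \geq a$ and $r<0$ force $x \leq a/r$, so $V \subseteq [a,a/r]$ is bounded.

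The main obstacle, albeit a small one, is precisely this asymmetry between the two parts: without the constant $1$ the toolkit is restricted to homogeneous linear operations, so the clean reflection argument of part (1) is not available. The resolution is the choice $r = q/p$, which exploits the opposite-sign hypothesis to guarantee that the negatively scaled copy $U(rx)$ still meets $U$, thereby keeping the intersection non-empty while the negative scaling simultaneously forces boundedness.
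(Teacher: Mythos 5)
Your proposal is correct and follows essentially the same route as the paper's proof: part (1) reflects $U$ about a large constant $b>2N$ via $x\mapsto b-x$ (available because $\{1\}$ permits inhomogeneous linear equations), and part (2) intersects $U$ with a negatively scaled copy $r\cdot U$ where the ratio $r=q/p$ of a negative to a positive element of $U$ keeps the intersection non-empty while forcing boundedness. The only cosmetic differences are your explicit two-constant bookkeeping ($a$ and $N$) and verification details that the paper leaves as ``not hard to see.''
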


\begin{proof}
(1)
By Lemma~\ref{fakeomin}, there
exists an $a>0$ such that either

\begin{enumerate}[(i)]
\item\label{positivelyunbounded}
  $(-\infty,-a] \cap U = \emptyset$ and $[a,\infty) \subseteq U$; or
\item\label{negativelyunbounded}
  $(-\infty,-a] \subseteq U$ and $[a,\infty) \cap U = \emptyset$.
\end{enumerate}

Assume that (\ref{positivelyunbounded}) holds.
(The remaining case follows by considering $-U$.)
By choosing a rational $b > 2a$, it is not
hard to see
that the relation
\[
U'(x)\equiv\exists y \: . \: y=b-x \wedge U(x) \wedge U(y)
\]
is bounded and contains an interval.
The result then follows from Lemma~\ref{generateequations}.

\smallskip

(2) Assume that (\ref{positivelyunbounded}) holds and let $c \in U$ be a negative element.
(The remaining case follows by considering $-U$.)
Then,
\[
U''(x) \equiv \exists y \: . \: ay = cx \wedge U(x) \wedge U(y)
\]
is bounded and contains the element $a$.
The result again follows from Lemma~\ref{generateequations}.
\end{proof}

For a unary semilinear relation $T \subseteq {\mathbb Q}$, and a rational $\delta > 0$, let
$T + \mathcal{I}(\delta)$ denote the set of unary semilinear relations $U$ such that $T \subseteq U$
and
for all $x \in U$, there exists a $y \in T$ with $|x-y| < \delta$.

\begin{example}
The set $\{-1,1\} + \mathcal{I}(\frac{1}{2})$ contains all unary relations $U$ such that $\{-1,1\} \subseteq U \subseteq  (-\frac{3}{2}, -\frac{1}{2}) \cup (\frac{1}{2}, \frac{3}{2})$.
\end{example}

\begin{lemma}\label{lem:smalldelta1}
Let $U \neq \emptyset$ be a bounded unary semilinear relation such that $U \cap (-\infty, 0) = \emptyset$.
Then,
$\langle \{R_+, U\} \rangle$ contains a relation $U_\delta \in \{1\} + \mathcal{I}(\delta)$,
for every rational $\delta > 0$.
\end{lemma}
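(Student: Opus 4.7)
The plan is to realise $U_\delta$ as a finite intersection of pp-definable rescalings of $U$. Since multiplication by any positive rational is expressible using $R_+$ by Lemma~\ref{generateequations}, for any positive rational $\lambda$ the relation $\{x : \lambda x \in U\}$ lies in $\langle \{R_+, U\} \rangle$. The goal is to combine several such rescalings so that the resulting intersection contains $1$ but is squeezed inside $(1 - \delta, 1 + \delta)$.

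The first step is to normalise so that $1 \in U$. By Lemma~\ref{fakeomin}, $U$ is a finite union of intervals (and isolated points) with rational endpoints contained in some $[0, M]$. Pick any positive rational $c \in U$, which exists as soon as $U$ is not the degenerate singleton $\{0\}$ (a case which must be ruled out by context or handled separately). Replacing $U$ by the pp-definable rescaling $\{x : cx \in U\}$, which contains $1$, reduces to the case $1 \in U$.

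The second step is to shrink $U$ around $1$. Let $I$ be the constituent interval (or isolated point) of the normalised $U$ that contains $1$. For any positive rational $\lambda \in I$ the rescaling $\{x : \lambda x \in U\}$ contains $1$, since $\lambda \in U$, and the part of this rescaling lying near $1$ is controlled by the relative position of $\lambda$ inside $I$: a $\lambda$ close to the right endpoint of $I$ pulls the left edge of that part close to $1$, while a $\lambda$ close to the left endpoint pulls its right edge close to $1$. Intersecting $U$ with two such rescalings, using $\lambda_1, \lambda_2 \in I$ near opposite endpoints of $I$, yields a relation whose component near $1$ is arbitrarily narrow. Finitely many further rescalings $\{x : \lambda_i x \in U\}$, for additional rationals $\lambda_i \in U$, are then intersected in to eliminate any spurious pieces of $U$ away from $1$ that would otherwise survive; since $U$ has only finitely many constituent intervals, boundedly many rescalings suffice.

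The main obstacle I expect is the combinatorial analysis in the second step. The rationals $\lambda_i$ must be selected so that $1$ remains in every rescaling (that is, each $\lambda_i$ must lie in $U$) while at the same time every constituent of $U$ not entirely inside $(1 - \delta, 1 + \delta)$ is excluded by at least one of them. The flexibility needed comes from the rational density inside each constituent interval of $U$ together with the finiteness of the constituent count; additional care is required for open-interval endpoints that do not themselves belong to $U$, where the chosen $\lambda_i$ must be pushed strictly inside and the resulting loss in tightness controlled by $\delta$.
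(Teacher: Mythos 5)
Your overall mechanism --- intersect $U$ with finitely many pp-definable rescalings $\{x : \lambda x \in U\} = \lambda^{-1}\cdot U$ for well-chosen $\lambda \in U$, so that $1$ survives while everything outside $(1-\delta,1+\delta)$ is killed --- is exactly the mechanism of the paper's proof. The difference is where the scaling factors come from, and this is where your argument has a genuine gap. The paper picks just \emph{two} factors, $p^+ \in U$ close to $\sup U$ and $p^- \in U$ close to $\inf U$ (such elements exist by Lemma~\ref{fakeomin}), and sets $U_\delta = (p^+)^{-1}\cdot U \cap (p^-)^{-1}\cdot U$; then $1 \in U_\delta$ because $p^\pm \in U$, and \emph{globally} $\sup U_\delta \leq \sup U / p^+ < 1+\delta$ and $\inf U_\delta \geq \inf U / p^- > 1-\delta$, with no component-by-component analysis. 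You instead take $\lambda$ near the endpoints of the constituent $I$ of $U$ containing $1$, which only controls that one constituent, and defer the rest to a patching step. That step is not justified as written: you must show that each surviving piece outside $(1-\delta,1+\delta)$ is annihilated by some further $\lambda_i \in U$, and that the new rescalings $\lambda_i^{-1}\cdot U$ do not themselves deposit fresh components inside $I$ (e.g.\ for $U=[1,2]\cup[2.5,3]$ and $\lambda$ near $2$, the set $\lambda^{-1}\cdot U$ drops a new interval around $1.25$--$1.5$ into $I$, which then needs its own killer). Both problems evaporate once the factors are chosen relative to the global extremes: a single $\lambda \in U$ with $\lambda > \sup U/(1+\delta)$ kills every $x>1+\delta$, and a single $\lambda \in U$ with $\lambda < \inf U/(1-\delta)$ kills every $x<1-\delta$. (Minor slip: with your rescaling, $\lambda$ near the right endpoint of $I$ sends the \emph{right} edge $\sup I/\lambda$ towards $1$, not the left one.)

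Your hesitation over the degenerate case is well founded and in fact points at a real restriction: the multiplicative squeeze (yours or the paper's) needs $\inf U > 0$, not merely $U \cap (-\infty,0) = \emptyset$. If $0 \in U$ then $0$ survives every rescaling; worse, for $U=\{0,1\}$ every relation in $\langle\{R_+,U\}\rangle$ is $0$-valid, so no relation in $\{1\}+\mathcal{I}(\delta)$ with $\delta \leq 1$ is pp-definable at all, and for $U=(0,1]$ the endomorphisms $x \mapsto ax$ with $0<a\leq 1$ force every pp-definable unary relation containing a positive point to have infimum $0$. So the hypothesis really should be that $U$ is bounded away from $0$ --- which is how the lemma is applied throughout the paper, and which the paper's own proof tacitly assumes when it divides by $p^-$ and requires $p^- - \inf U < \delta \inf U$. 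Make that assumption explicit rather than leaving it to ``context'', and then run the two-factor argument; the component analysis and the patching step can be discarded entirely.
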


\begin{proof}
Let $U^+ = \sup U$ and $U^- = \inf U$.
By Lemma~\ref{fakeomin}, there exist elements $p^+, p^- \in U$ with 
$U^+ - p^+ < \delta U^+ $ and $p^- - U^- < \delta U^-$.
The relation $U_\delta := (p^-)^{-1} \cdot U \cap (p^+)^{-1} \cdot U$ is 
pp-definable in $\{R_+, U\}$ and satisfies:
$1 \in U_\delta$, $\sup U_\delta <  1 + \delta U^+(p^+)^{-1} \leq 1 + \delta$, and 
$\inf U_\delta > 1 - \delta U^-(p^-)^{-1} \geq 1-\delta$.
\end{proof}

\begin{lemma}\label{lem:smalldelta2}
Let $U$ be a bounded unary semilinear relation such that $U \cap (-\epsilon, \epsilon) = \emptyset$
for some $\epsilon > 0$ and $U \cap -U \neq \emptyset$.
Then,
$\langle \{R_+, U\} \rangle$ contains a relation $U_\delta \in \{-1, 1\} + \mathcal{I}(\delta)$,
for every rational $\delta > 0$.
\end{lemma}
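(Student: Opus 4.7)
My plan is to reduce to the one-sided case handled by Lemma~\ref{lem:smalldelta1} by first passing to the symmetric hull of $U$. Observe that $R_+$ alone pp-defines the singleton $\{0\}$ via $\exists y\: . \: R_+(x,y,y)$, and hence also the relation $y = -x$. Consequently, $-U$ and the intersection $V := U \cap (-U)$ are pp-definable in $\{R_+, U\}$. By the hypothesis $U \cap (-U) \neq \emptyset$, $V$ is non-empty; it is evidently bounded, symmetric ($V = -V$), and disjoint from $(-\epsilon, \epsilon)$. In particular, $V^+ := V \cap (0, \infty)$ and $V^- := V \cap (-\infty, 0) = -V^+$ are both non-empty.

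Next, I would mimic the construction from the proof of Lemma~\ref{lem:smalldelta1}, but applied to the symmetric $V$ rather than to a one-sided relation. Let $v^+ := \sup V$ and $v^- := \inf V^+$; both are positive rationals by Lemma~\ref{fakeomin}. For a parameter $\delta' > 0$ to be tuned later, choose rationals $p^+, p^- \in V^+$ with $v^+ - p^+ < \delta' v^+$ and $p^- - v^- < \delta' v^-$, which exist by Lemma~\ref{fakeomin}. I then set
\[
U_\delta := (p^+)^{-1} V \cap (p^-)^{-1} V,
\]
which is pp-definable in $\{R_+, U\}$, since multiplication by a fixed positive rational is a homogeneous linear equation, hence pp-definable in $\{R_+\}$ by Lemma~\ref{generateequations}.

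It remains to verify that $U_\delta \in \{-1, 1\} + \mathcal{I}(\delta)$. Since $p^\pm \in V$ and $V$ is symmetric, $\pm 1 \in (p^\pm)^{-1} V$, and hence $\pm 1 \in U_\delta$. Because $p^+, p^- > 0$, the positive part $U_\delta \cap (0, \infty)$ equals $(p^+)^{-1} V^+ \cap (p^-)^{-1} V^+$, which is precisely the construction from Lemma~\ref{lem:smalldelta1} applied to the relation $V^+$; for $\delta'$ chosen small enough as a function of $\delta$, this set is contained in $(1 - \delta, 1 + \delta)$. By the symmetry $V = -V$, $U_\delta$ itself is symmetric, so $U_\delta \cap (-\infty, 0) \subseteq (-1 - \delta, -1 + \delta)$. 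Therefore $U_\delta$ lies inside the $\delta$-neighbourhood of $\{-1, 1\}$ and contains $\{-1, 1\}$, which is exactly what it means to belong to $\{-1, 1\} + \mathcal{I}(\delta)$.

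The main obstacle I anticipate is arranging the construction to yield a relation near \emph{both} $-1$ and $+1$ simultaneously, while using only pp-definitions (so, in particular, without the ability to carve out $U \cap (0, \infty)$ via a strict inequality and then apply Lemma~\ref{lem:smalldelta1} directly). The symmetrisation $V := U \cap (-U)$ resolves this: multiplication by a positive rational commutes with negation, so the positive and negative parts of $V$ remain coupled throughout the construction, and the closeness estimate obtained on the positive side automatically transfers to the negative side.
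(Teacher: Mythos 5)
Your proof is correct and follows essentially the same route as the paper, which simply sets $T = U \cap -U$ and invokes "a similar construction as in the proof of Lemma~\ref{lem:smalldelta1}". You have filled in exactly the details the paper omits: the pp-definability of $-U$ from $R_+$, the adaptation of the endpoints to $\sup V$ and $\inf V^+$, and the observation that symmetry of $V$ transfers the estimate on the positive side to the negative side.
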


\begin{proof}
Let $T = U \cap -U$. The proof then follows using a similar construction 
as in the proof of Lemma~\ref{lem:smalldelta1}.
\end{proof}

\subsection{Essential convexity}

Let $R$ be a $k$-ary relation over ${\mathbb Q}$.
The relation $R$ is {\em convex} if
for all $p,q \in R$, $R$ contains all points on the line
segment between $p$ and $q$.
We say that $R$ is \emph{essentially convex} if for all $p,q \in R$ there are only finitely many points on
the line segment between $p$ and $q$ that are not in $R$.

We say that $R$ {\em excludes an interval} if there are
$p,q \in R$ and real numbers $0 < \delta_1 < \delta_2 < 1$ such that
$p+(q-p)y \not\in R$ whenever $\delta_1 \leq y \leq \delta_2$.
Note that we can assume that
$\delta_1,\delta_2$ are rational numbers, since we can choose
any two distinct rational numbers $\gamma_1<\gamma_2$ between $\delta_1$ and $\delta_2$
instead of $\delta_1$ and $\delta_2$.

If $R$ is \emph{not} essentially convex, and if
 $p$ and $q$ are such that there are infinitely many points on the line
 segment between $p$ and $q$ that are not in $R$, 
 then we say that $p$ and $q$ {\em witness} that $R$ 
 is not essentially convex.
 Due to Lemma~\ref{fakeomin}, we conclude that a 
semilinear relation is essentially convex if and only if it does not exclude an interval.
We say that a constraint language is essentially convex if all its relations are essentially convex.

\begin{theorem}[Theorem~5.1 and~5.4 in Bodirsky et al.~\cite{Bodirsky:etal:lmcs2012}]\label{thm:essconvtractable}
If $\Gamma$
is a finite set of
essentially convex
semilinear relations,
then CSP$(\Gamma)$ is in P.
\end{theorem}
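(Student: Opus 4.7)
My plan is to reduce CSP$(\Gamma)$ to linear-programming feasibility by replacing each relation by its topological closure, and then to use essential convexity to justify both directions of the reduction.

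The starting observation is that for an essentially convex semilinear relation $R \subseteq \mathbb{Q}^k$, the topological closure $\overline{R}$ is a rational convex polyhedron. Convexity is a limit argument: if $p, q \in \overline{R}$ are limits of sequences $p_n, q_n \in R$, essential convexity makes $[p_n, q_n] \setminus R$ finite for each $n$, so $[p_n, q_n] \subseteq \overline{R}$, and passing to the limit yields $[p,q] \subseteq \overline{R}$. Since $R$ is semilinear and $\overline{R}$ is closed and convex, $\overline{R}$ is a closed rational polyhedron, and because $\Gamma$ is finite these closures can be precomputed. Given an instance $\Phi$ with constraints $R_i(\bar x_i)$, I form the LP relaxation $\overline{\Phi}$ by replacing each $R_i$ with $\overline{R_i}$; this is a polynomial-size conjunction of non-strict linear inequalities whose feasibility is decidable in polynomial time. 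If $\overline{\Phi}$ is infeasible, then so is $\Phi$ because $R_i \subseteq \overline{R_i}$ for every $i$.

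Suppose instead that $\overline{\Phi}$ is feasible with feasible region $F \subseteq \mathbb{Q}^n$; I must either exhibit a rational solution of $\Phi$ in $F$ or certify that $\Phi$ is infeasible. Let $\pi_i$ be the coordinate projection onto the variables of the $i$-th constraint and put $H_i := \overline{R_i} \setminus R_i$. The reduction rests on the following dichotomy: for each $i$, either $\pi_i(F) \subseteq H_i$, in which case every point of $F$ violates $R_i$ and ``no'' is correct, or $\dim(\pi_i(F) \cap H_i) < \dim \pi_i(F)$. The second alternative follows from essential convexity applied to any segment joining a point $p \in \pi_i(F) \setminus H_i$ to a point $q \in \pi_i(F) \cap H_i$: since the segment lies in $\overline{R_i}$, essential convexity of $R_i$ makes $[p,q] \cap H_i$ finite, so $\pi_i(F) \cap H_i$ cannot contain a relatively open subset of $\pi_i(F)$ and must have strictly smaller dimension. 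Consequently, unless some $\pi_i(F) \subseteq H_i$, the set $\bigcup_i \pi_i^{-1}(H_i) \cap F$ is a finite union of semilinear pieces each of dimension strictly less than $\dim F$, and $F$ contains a rational point avoiding all of them, which is a solution of $\Phi$.

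The main obstacle I expect is the algorithmic realisation of this non-trivial case: producing such a rational point, or deciding which projection $\pi_i(F)$ is trapped in $H_i$, in polynomial time. I would implement this by an iterative line-probing procedure. Starting from any rational $x_0 \in F$ returned by an LP solver, I test $\pi_i(x_0) \in R_i$ for every $i$; if some constraint fails, I pick a direction $d$ in the feasible cone of $F$ at $x_0$, parameterise the ray $x_0 + td$ inside $F$, and for each $i$ intersect its $\pi_i$-projection with each polyhedral piece of $H_i$ via auxiliary LPs to obtain finitely many forbidden rational values of $t$. Choosing a rational $t$ outside this finite set produces a point satisfying strictly more constraints; if no such direction exists because the whole ray stays in some $\tilde H_i = \pi_i^{-1}(H_i)$, then that constraint furnishes an equality reducing $\dim F$, after which the procedure restarts on the lower-dimensional slice. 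Termination in polynomial time follows because $\dim F$ can strictly decrease at most $n$ times, and each iteration either improves the number of satisfied constraints or exhibits an equality that certifies a permanent reduction of $F$; a final failure witnesses some $\pi_i(F) \subseteq H_i$ and thus infeasibility of $\Phi$.
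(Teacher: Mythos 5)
Your first paragraph is fine: the closure of an essentially convex semilinear relation is indeed a rational convex polyhedron, and the limit argument you give for convexity is correct. The proof breaks in the second paragraph, at the dichotomy ``either $\pi_i(F) \subseteq H_i$ or $\dim(\pi_i(F) \cap H_i) < \dim \pi_i(F)$.'' You justify it by applying essential convexity to a segment from $p \in R_i$ to $q \in H_i$, but essential convexity only constrains segments \emph{both} of whose endpoints lie in $R_i$; since $q \notin R_i$, nothing forces $[p,q] \cap H_i$ to be finite. Concretely, let
$R = [0,1]^2 \setminus \{(x,\tfrac12) : 0 \le x \le \tfrac12\}$.
This is semilinear and essentially convex (any segment not contained in the line $y=\tfrac12$ meets the deleted slit in at most one point, and the points of $R$ on that line all have $x > \tfrac12$, so segments between them avoid the slit). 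Here $\overline{R} = [0,1]^2$ and $H = [0,\tfrac12] \times \{\tfrac12\}$. Taking $p = (1,\tfrac12) \in R$ and $q = (0,\tfrac12) \in H$ gives $[p,q] \cap H$ infinite. If other constraints force $F$ to project onto the whole segment $[0,1]\times\{\tfrac12\}$, then $\pi(F) \not\subseteq H$ yet $\pi(F)\cap H$ has the same dimension as $\pi(F)$: both horns of your dichotomy fail. Worse, with the mirror-image relation $R' = [0,1]^2 \setminus \{(x,\tfrac12): \tfrac12 \le x \le 1\}$ (also essentially convex) and the convex constraint $y = \tfrac12$, the instance $R(x,y)\wedge R'(x,y) \wedge (y=\tfrac12)$ has a feasible LP relaxation, no single $\pi_i(F)$ contained in its $H_i$, and yet no solution --- directly refuting your central claim that in this case ``$F$ contains a rational point avoiding all of them.''

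The underlying issue, which your line-probing procedure in the third paragraph inherits, is that when the feasible region is confined to a proper affine subspace $A$, the correct relaxation of $R_i$ on $A$ is $\overline{R_i \cap A}$, not $\overline{R_i} \cap A$; these differ (in the example, $[\tfrac12,1]\times\{\tfrac12\}$ versus $[0,1]\times\{\tfrac12\}$), and relative to $A$ the holes can be full-dimensional, so the forbidden values of your parameter $t$ form intervals rather than finite sets and ``generic'' points need not exist even when claimed. This is exactly the difficulty that the actual proof of Bodirsky, Jonsson and von Oertzen (which this paper only cites, and does not reprove) is built to handle: their algorithm works with representations of the form $\phi_0 \wedge \neg\phi_1 \wedge \cdots \wedge \neg\phi_k$ and, whenever the current convex feasible set is trapped inside an excluded convex set, derives new linear equalities and recurses on the resulting affine restriction with a nontrivial argument that this recursion is sound and polynomially bounded. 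Without an analogue of that recursive restriction step --- in particular, without recomputing the closures of the restricted relations on each slice and proving termination --- your reduction is not a proof.
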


\section{Tractability}\label{sec:tractability}

In this section, we present our two main sources of tractability.
Section~\ref{sec:affine} contains a new algorithm for semilinear
constraint languages $\Gamma$ containing $\{R_+, \{1\}\}$ and such
that $\langle \Gamma \rangle$ does not contain a bnu.
In Section~\ref{sec:moreess}, we extend the applicability of
Theorem~\ref{thm:essconvtractable} from essentially convex semilinear constraint
languages to a certain class of semilinear CSPs that are not essentially convex.

\subsection{Affine consistency}\label{sec:affine}

Instead of computing the exact solution set to a CSP instance, our approach
will be to reduce an upper bound on this set as far as possible.
In particular, we will maintain a representation of an \emph{affine subspace} that
is guaranteed to contain the solution set, and repeatedly intersect this
subspace with every constraint in order to attempt to reduce it further.
This can be seen as a form of \emph{local consistency}.
If we manage to reduce the upper bound to an empty set, then we are
certain that the instance is unsatisfiable.
We will show that under certain conditions, the converse holds;
if the upper bound is non-empty, then there are necessarily solutions.
To formalise this idea, we will need some definitions.

For a subset $S \subseteq {\mathbb Q}^n$,
let $\aff(S)$ denote the \emph{affine hull of $S$ in ${\mathbb Q}^n$}:
\[\aff(S) = \{ \sum_{i=1}^k x_i p_i \mid k \geq 1, x_i \in {\mathbb Q},
p_i \in S, \sum_{i=1}^k x_i = 1 \}.\]
An \emph{affine subspace} is a subset $S \subseteq {\mathbb Q}^n$ for
which $\aff(S) = S$.
The points $p_1, \dots, p_k \in {\mathbb Q}^n$ are said to be
\emph{affinely independent} if $x_1 p_1 + \dots + x_k p_k = 0$ with $x_1
+ \dots + x_k = 0$ implies $x_1 = \dots = x_k = 0$.
The dimension, $\dim(S)$, of a set $S \subseteq {\mathbb Q}^n$ is defined to be
one less than the maximum number of affinely independent points in $S$.

We define a notion of consistency for sets of semilinear constraints which we call
{\em affine consistency}.
Let $V$ be a finite set of variables and let $n = |V|$.
A set of constraints $R_i(x_{i_1}, \dots, x_{i_k})$ with $\{x_{i_1},\dots,x_{i_k}\} \subseteq V$ is \emph{affinely consistent with respect
to a non-empty affine subspace} $\emptyset \neq A \subseteq {\mathbb Q}^V$ if
$\aff(\hat{R}_i \cap A) = A$ for all $i$,
where 
$\hat{R}_i := \{ (x_1, \dots, x_n) \in {\mathbb Q}^V \mid (x_{i_1}, \dots, x_{i_k}) \in R_i \}$.

\begin{algorithm}[ht]
\SetAlgoLined
\DontPrintSemicolon
\KwIn{A set of constraints $\{ R_i(x_{i_1}, \dots, x_{i_k}) \}$ over variables $V$}
\KwOut{``yes'' if the resulting affine subspace is non-empty, ``no'' otherwise}

$A := {\mathbb Q}^V$\\
\Repeat{$A$ {\rm does not change}}{
\ForEach{{\rm constraint} $R_i(x_{i_1}, \dots, x_{i_k})$}{
$A := \aff(\hat{R}_i \cap A)$\\
}
}
\leIf{$A \neq \emptyset$}{
   \Return ``yes''
}{
   \Return ``no''
}

\caption{Affine consistency}
\label{alg:affc}
\end{algorithm}

Algorithm~\ref{alg:affc} establishes affine consistency for a set of constraints and
answers ``yes'' if the resulting affine subspace is non-empty and ``no'' otherwise.
In the rest of this section, we show that this algorithm correctly solves CSP$(\Gamma)$
when $\{R_+, \{1\}\} \subseteq \Gamma$ is a semilinear constraint language such
that $\langle \Gamma \rangle$ does not contain a bnu.
Furthermore, we show that the algorithm can be implemented
to run in polynomial time when applied to constraint languages of this kind.

We begin by proving a technical lemma which is the basis for these results.

\begin{lemma}\label{lem:intersection}
Let $P = P_1 \cup \dots \cup P_k, Q = Q_1 \cup \dots \cup Q_l \in
SL_{\mathbb Q}[{\mathbb Q}]$
be two $n$-ary relations and $P_1,\dots,P_k,Q_1,\dots,Q_l$ linear sets. 
Assume that neither
$\langle LE_{\mathbb Q}[{\mathbb Q}] \cup \{ P \} \rangle$ nor $\langle
LE_{\mathbb Q}[{\mathbb Q}] \cup \{ Q \} \rangle$ contains a bnu.
If $\aff(P) = \aff(Q) =: A$, then $\aff(P_i \cap Q_j) = A$ for some $i$
and $j$.
\end{lemma}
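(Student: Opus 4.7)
My strategy is to argue by contradiction: assume $\aff(P_i \cap Q_j) \subsetneq A$ for all $i$ and $j$. Then each $P_i \cap Q_j$ is contained in some hyperplane $H_{ij}$ of $A$, so $P \cap Q \subseteq \bigcup_{i,j} H_{ij}$ lies in a finite union of hyperplanes of $A$. I will construct a line $\ell \subseteq A$ along which $\ell \cap P$ and $\ell \cap Q$ are individually large, while $\ell \cap (P \cap Q)$ is finite. Parameterizing $\ell$ yields unary relations in $\langle LE_{\mathbb Q}[{\mathbb Q}] \cup \{P\}\rangle$ and $\langle LE_{\mathbb Q}[{\mathbb Q}] \cup \{Q\}\rangle$, and a short one-dimensional case analysis will force a bnu in one of these two sets, contradicting the hypothesis.

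As a preliminary step, I would show that some $P_i$ satisfies $\aff(P_i) = A$, and likewise some $Q_j$. If every $\aff(P_i)$ were a proper affine subspace of $A$, then the corresponding parallel direction subspaces would form a finite union of proper subspaces of the ${\mathbb Q}$-vector space $A - A$ and could not cover it. Fixing $p_0 \in P$, the set $P - p_0$ spans $A - A$ (since $\aff(P) = A$), so I can pick $p_1 \in P$ with $p_1 - p_0$ avoiding every such subspace. The line through $p_0$ and $p_1$ then meets each $P_i$ in at most one point, and its parameterization produces a unary relation in $\langle LE_{\mathbb Q}[{\mathbb Q}] \cup \{P\}\rangle$ that is finite and has at least two elements---a bnu, contradicting the hypothesis on $P$.

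Let $V_P \subseteq A$ be the union of the interiors in $A$ of those $P_i$ with $\aff(P_i) = A$, and define $V_Q$ analogously; both are non-empty open subsets of $A$. If they met, any common point would lie in the interior in $A$ of some $P_i \cap Q_j$, forcing $\aff(P_i \cap Q_j) = A$ against the contradiction hypothesis, so $V_P$ and $V_Q$ are disjoint. The key construction is to choose a rational direction $\vec v \in A - A$ which (i) is not parallel to any $H_{ij}$ and (ii) equals $a' - a$ for some $a \in V_P$ and $a' \in V_Q$. Condition (ii) places $\vec v$ in the non-empty open set $V_Q - V_P$, which cannot be contained in the finitely many proper subspaces excluded by (i); so an admissible rational $\vec v$ exists. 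Let $\ell = a + {\mathbb Q} \vec v$ be the line through $a$ and $a'$. Then $\ell \cap V_P$ and $\ell \cap V_Q$ are non-empty open subsets of $\ell$ and hence infinite, while $\ell \cap (P \cap Q) \subseteq \ell \cap \bigcup_{i,j} H_{ij}$ is finite because $\vec v$ is transverse to each $H_{ij}$.

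Parameterizing $\ell$ linearly by $y$ produces unary semilinear relations $U_P^\ell \in \langle LE_{\mathbb Q}[{\mathbb Q}] \cup \{P\}\rangle$ and $U_Q^\ell \in \langle LE_{\mathbb Q}[{\mathbb Q}] \cup \{Q\}\rangle$ whose intersection is finite but each of which contains an open interval. If $U_P^\ell$ is bounded it is already a bnu; if $U_P^\ell$ is unbounded in one direction, then Lemma~\ref{halfboundgenerate}(1) extracts a bnu in $\langle LE_{\mathbb Q}[{\mathbb Q}] \cup \{P\}\rangle$; the symmetric statements apply to $U_Q^\ell$ with respect to $Q$. The only remaining case is that both $U_P^\ell$ and $U_Q^\ell$ are unbounded in both directions; by Lemma~\ref{fakeomin} each then contains an interval of the form $(c, \infty)$, so $U_P^\ell \cap U_Q^\ell$ contains a half-line, contradicting finiteness. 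In every case a bnu appears in $\langle LE_{\mathbb Q}[{\mathbb Q}] \cup \{P\}\rangle$ or $\langle LE_{\mathbb Q}[{\mathbb Q}] \cup \{Q\}\rangle$, contrary to the hypothesis. The main obstacle I anticipate is producing the rational direction $\vec v$ meeting the genericity and overlap conditions simultaneously; once the line is in hand, the one-dimensional analysis is essentially forced by Lemma~\ref{halfboundgenerate} and Lemma~\ref{fakeomin}.
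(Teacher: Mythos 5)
Your proof takes a genuinely different route from the paper's. The paper argues by induction on $d=\dim(A)$: it chooses $d+1$ affinely independent points in each of $P$ and $Q$, uses the no-bnu hypothesis together with Lemma~\ref{halfboundgenerate}(1) and Lemma~\ref{fakeomin} to show that the lines spanned by these points lie inside $P$ (resp.\ $Q$) outside a bounded slab, slices everything with parallel hyperplanes $H(x)$ for large $x$, applies the inductive hypothesis to each slice, and pigeonholes over two distinct slices to find a single pair $(i',j')$ for which $\aff(P_{i'}\cap Q_{j'})$ contains two disjoint parallel $(d-1)$-dimensional affine subspaces and hence equals $A$. Your contradiction argument replaces the induction and the pigeonhole by a single transversal line: cover $P\cap Q$ by finitely many hyperplanes of $A$, produce a line meeting $V_P$ and $V_Q$ in nonempty relatively open (hence infinite) sets while crossing each covering hyperplane in at most one point, and finish with a one-dimensional analysis. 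The closing case analysis is exhaustive and correct (bounded gives a bnu outright; unbounded in one direction gives a bnu via Lemma~\ref{halfboundgenerate}(1), since $R_+$ and $\{1\}$ lie in $LE_{\mathbb Q}[{\mathbb Q}]$; both unbounded in both directions forces, by Lemma~\ref{fakeomin}, a common half-line in $U_P^\ell\cap U_Q^\ell$, contradicting finiteness), and the existence of the direction $\vec v$ via the nonempty open set $V_Q-V_P$ is sound, since a nonempty relatively open subset of ${\mathbb Q}^d$ cannot be covered by finitely many proper subspaces.

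There is, however, a genuine gap in your preliminary step. From ``$P-p_0$ spans $A-A$'' it does not follow that some $p_1-p_0$ escapes the finite union of the direction spaces $W_i$ of the $\aff(P_i)$: a spanning set can perfectly well be contained in a finite union of proper subspaces. Concretely, let $A={\mathbb Q}^2$ and let $P=P_1\cup P_2$ be the union of two distinct lines through $p_0$; then $P-p_0=W_1\cup W_2$ spans ${\mathbb Q}^2$, yet every $p_1\in P$ has $p_1-p_0\in W_1\cup W_2$, so for this choice of $p_0$ no admissible $p_1$ exists. The claim you need (if every $\aff(P_i)$ is proper in $A$, then $\langle LE_{\mathbb Q}[{\mathbb Q}]\cup\{P\}\rangle$ contains a bnu) is true, but both endpoints must be chosen. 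One repair: among the $\aff(P_i)$ take two distinct maximal ones $B_1\neq B_2$ (if there were only one, it would contain $P$ and hence equal $A$); a linear set with affine hull $B_1$ has nonempty relative interior in $B_1$ and so cannot be covered by finitely many proper affine subspaces of $B_1$, which lets you pick $p_0\in P\cap B_1$ lying in no $\aff(P_i)$ other than those equal to $B_1$, and symmetrically $p_1\in P\cap B_2$. The line through $p_0$ and $p_1$ is then contained in no $\aff(P_i)$, so it meets $P$ in a finite set with at least two points, yielding the desired bnu. Finally, your covering-by-hyperplanes step tacitly assumes $\dim(A)\geq 1$; dispose of $\dim(A)=0$ separately, as in the paper's base case. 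With these repairs the proof is complete.
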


\begin{proof}
The proof is by induction on the dimension $d = \dim(A)$.
For $d = 0$, both $P$ and $Q$ consist of a single point $p$.
Clearly, $P_i = \{p\}$ for some $i$ and $Q_j = \{p\}$ for some $j$.
Now assume that $d > 0$ and that the lemma holds for all $P'$, $Q'$ with
$\aff(P') = \aff(Q') = A'$ and $\dim(A') < d$.
Let $p_0, p_1, \dots, p_d$ be $d+1$ affinely independent points in $P$
and let $q_0, q_1, \dots, q_d$ be $d+1$ affinely independent points in $Q$.
For $1 \leq i \leq d$,
consider the lines $L^p_i$ through $p_0$ and $p_i$, and the lines
$L^q_i$ through $q_0$ and $q_i$.
Let $H = \{ y  \in {\mathbb Q}^n \mid \alpha \cdot y = 0 \}$ ($\alpha
\in {\mathbb Q}^n$) be a hyperplane  in ${\mathbb Q}^n$ through the
origin that is not parallel to any of the lines $L^p_i$ or $L^q_i$.
Then, $H$ intersects each of the $2d$ lines.
Let $H(c) = \{ y \in {\mathbb Q}^n \mid \alpha \cdot y = c \}$ and
let $B(c) = \{ y \in {\mathbb Q}^n \mid \alpha \cdot y \not\in [-c,c] \}$.

Express the line $L^p_i$ as $\{ y \in {\mathbb Q}^n \mid y = (1-x) \cdot a +
x \cdot b, x \in {\mathbb Q} \}$, for some $a, b \in {\mathbb Q}^n$.
Let $T = \mathcal{L}_{P,a,b} \in \langle LE_{\mathbb Q}[{\mathbb Q}] \cup \{ P \} \rangle$.
Since $T$ contains $p_0$ and $p_i$, it follows that $T$ is not a
constant and hence unbounded.
By Lemma~\ref{halfboundgenerate}(1), $T$ is unbounded in both directions.
By Lemma~\ref{fakeomin},
$B(c^p_i) \cap L^p_i \subseteq T \subseteq P$, for some positive
constant $c^p_i$.
An analogous argument shows that that $B(c^q_j) \cap L^q_j \subseteq Q$,
for some positive constant $c^q_j$.
Let $c'$ be a positive constant such that $p_0, q_0 \not\in B(c')$ and
let $c = \max \{c'\} \cup \{ c^p_i, c^q_j \mid 1 \leq i,j \leq d \}$.
This ensures that for any $x > c$, $H(x) \cap P$ intersects the lines
$L^p_i$ in $d$ affinely independent points, and $H(x) \cap Q$
intersects the lines $L^q_j$ in $d$ affinely independent points.

Let $P' = H(x) \cap P$, $P'_i = H(x) \cap P_i$, $Q' = H(x) \cap Q$, and
$Q'_j = H(x) \cap Q_j$.
We now have $\aff(P') = \aff(Q') = A'(x)$ with $\dim(A'(x)) = \dim(A)-1$, for every $x > c$.
By induction on $P' = P'_1 \cup \dots \cup P'_k$ and $Q' = Q'_1 \cup \dots \cup Q'_l$,
it follows that 
$\aff(H(x) \cap (P_{i(x)} \cap Q_{j(x)})) = \aff(P'_{i(x)} \cap Q'_{j(x)}) = A'(x)$
for some $i(x)$ and $j(x)$. This holds for all $x > c$, hence there
exist distinct $x_1, x_2 > c$ with $i(x_1) = i(x_2) = i'$ and $j(x_1) =
j(x_2) = j'$.
Since $A'(x_1), A'(x_2) \subseteq \aff(P_{i'} \cap Q_{j'})$, $A'(x_1)
\cap A'(x_2) = \emptyset$, and $\dim(A'(x_2)) = d-1 \geq 0$, it follows
that $\aff(P_{i'} \cap Q_{j'})$ strictly contains $A'(x_1)$, so we have
$A' \subset \aff(P_{i'} \cap Q_{j'}) \subseteq A$, and $\dim(A'(x_1)) =
\dim(A)-1$.
Therefore we have the equality $\aff(P_{i'} \cap Q_{j'}) = A$.
The lemma follows.
\end{proof}

\begin{algorithm}[ht]
\SetAlgoLined
\DontPrintSemicolon
\KwIn{A semilinear relation $R = R_1 \cup \dots \cup R_k$ and an affine subspace $A$.}
\KwOut{A set of inequalities defining $\aff(R \cap A)$, or $\bot$ if $\aff(R \cap A) = \emptyset$.}

Find $i$ that maximises $d_i := \dim(\aff(R_i \cap A)).$\\
\lIf{$\aff(R_i \cap A) = \emptyset$}{\Return $\bot$}
Let $I$ be the set of inequalities for $R_i$ and $J$ be the set of inequalities for $A$.\\
$S := I \cup J$\\
\ForEach{{\rm inequality} $\iota \in I \cup J$}{
\If{$\dim(\aff(S \setminus \{ \iota \})) = d_i$}{
      $S := S \setminus \{ \iota \}$
}
}
\Return $S$

\caption{Calculate $\aff(R \cap A)$}
\label{alg:aff}
\end{algorithm}

For a semilinear relation $R$, we let $\size(R)$ denote the \emph{representation size}
of $R$, i.e., the number of bits needed to describe the arities and coefficients of
each inequality in some fixed definition of $R$.

\begin{lemma}
Let $R \in SL_{\mathbb Q}[{\mathbb Q}]$ be a relation such that $\langle
LE_{\mathbb Q}[{\mathbb Q}] \cup \{ R \} \rangle$ does not contain a bnu and
let $A \subseteq {\mathbb Q}^n$ be an affine subspace.
Algorithm \ref{alg:aff} computes a set of linear inequalities $S$
defining $\aff(R \cap A)$ in time polynomial in $\size(R)+\size(A)$ and with
$\size(S) \leq \size(R)+\size(A)$.
\end{lemma}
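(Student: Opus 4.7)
The plan is to establish two things: correctness, that the returned $S$ defines $\aff(R \cap A)$ (and that $\bot$ is returned precisely when $R \cap A = \emptyset$); and the claimed polynomial running time and size bound.

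For the choice of $R_i$ made in the opening lines of the algorithm, I would write $R \cap A = \bigcup_j (R_j \cap A)$ and set $B := \aff(R \cap A)$, then invoke Lemma~\ref{lem:intersection} with $P := R \cap A$ (as the union of the linear sets $R_j \cap A$) and $Q := B$ (as a single linear set). Both have affine hull $B$. The no-bnu hypothesis for $P$ follows from the inclusion $\langle LE_{\mathbb Q}[{\mathbb Q}] \cup \{R \cap A\} \rangle \subseteq \langle LE_{\mathbb Q}[{\mathbb Q}] \cup \{R\} \rangle$, since $A$ is pp-definable from $LE_{\mathbb Q}[{\mathbb Q}]$; and since $B$ is an affine subspace, $\langle LE_{\mathbb Q}[{\mathbb Q}] \cup \{B\} \rangle$ contains no bounded non-constant unary relation either. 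The lemma then yields an index $i$ with $\aff(R_i \cap A) = B$. Since $R_j \cap A \subseteq B$ for every $j$, the maximum of the $d_j$ equals $\dim(B)$, and every maximizer satisfies $\aff(R_i \cap A) = B$. The return of $\bot$ when this set is empty is then clearly correct.

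For the subsequent inequality-pruning loop, I would rely on the polyhedral fact that the affine hull of a polyhedron $X$ given by inequalities equals the intersection of the hyperplanes $\{a \cdot x = b\}$ ranging over the \emph{implicit equalities} of $X$, namely those inequalities $a \cdot x \leq b$ that are satisfied with equality everywhere on $X$. The key claim is that removing a non-implicit-equality $\iota$ from $S$ does not change the affine hull: if an implicit equality $\iota'$ of the current polyhedron $X$ failed to remain an implicit equality of the relaxed polyhedron $X'$, then a suitable convex combination of a witness point $p \in X'$ at which $\iota'$ is strict with a point $q \in X$ at which $\iota$ is strict would yield a point of $X$ where $\iota'$ is strict, a contradiction. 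Consequently, the algorithm's removal condition $\dim(\aff(S \setminus \{\iota\})) = d_i$ holds precisely when $\iota$ is not an implicit equality, so after termination every remaining inequality is tight throughout the solution set of $S$. This forces the solution set to be the intersection of the corresponding hyperplanes --- an affine subspace --- which therefore coincides with $\aff(R_i \cap A) = \aff(R \cap A)$.

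For the complexity, detecting implicit equalities and computing $\dim(\aff(P))$ for a polyhedron $P$ described by rational inequalities reduces to solving polynomially many linear programs followed by a rank computation via Gaussian elimination, all in time polynomial in the bit-size of the description. The outer loop performs one such computation per inequality in $I \cup J$, yielding overall runtime polynomial in $\size(R) + \size(A)$. The returned $S$ is a subset of $I \cup J$, which is read directly from the descriptions of $R_i$ and $A$, so trivially $\size(S) \leq \size(R) + \size(A)$. The main obstacle I anticipate is the convex-combination argument that a non-implicit-equality can be discarded without altering the affine hull; although this is standard in polyhedral geometry, care is needed to verify that the argument goes through uniformly for strict and non-strict inequalities coming from the semilinear representation, and that $\dim(\aff(\cdot))$ really is polynomial-time computable on intermediate sets specified only as lists of inequalities.
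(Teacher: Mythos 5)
Your proof is correct and follows essentially the same route as the paper's: line~1 is justified via Lemma~\ref{lem:intersection} (the paper applies it with $P=Q=R\cap A$ rather than your $Q=\aff(R\cap A)$, but both instantiations work), the pruning loop is justified by Schrijver's implicit-equality characterisation of the affine hull of a polyhedron, and polynomial time follows from ellipsoid/LP-based dimension computations. Your spelled-out convex-combination argument showing that inequalities which are not implicit equalities (in particular all strict ones) can be discarded without changing the affine hull is precisely the ``it is not hard to see'' step that the paper leaves to the reader.
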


\begin{proof}
Let $R = R_1 \cup \dots \cup R_k$ be the representation of $R$ as the union of
linear sets $R_i$.
By Lemma~\ref{lem:intersection}, there exists an $i$ such that $\aff(R
\cap A) = \aff(R_i \cap A)$ and since $\aff(R_j \cap A) \subseteq \aff(R
\cap A)$ for all $j$, the algorithm will find such an $i$ on line 1
by simply comparing the
dimensions of these sets.
If $\aff(R \cap A) = \emptyset$, then the algorithm returns $\bot$, signalling that
the affine hull is empty.

Otherwise, the affine hull of a non-empty polyhedron can always be obtained
as a subset of its defining inequalities (cf.~Schrijver~\cite[Section~8.2]{Schrijver:TLIP}).
Here, some of the inequalities may be strict, but it is not hard to see that removing them
does not change the affine hull.
If $\iota \in I \cup J$ is an inequality that cannot be removed without
increasing the dimension of the affine hull, then it is clear that $\iota$
still cannot be removed after the loop.
Hence, after the loop, no inequality in $S$ can be removed without
increasing the dimension of the affine hull.
It follows that $S$ itself defines an affine subspace, $A_S$,
and $A_S = \aff(A_S) = \aff(R_i \cap A) = \aff(R \cap A)$.

Using the ellipsoid method, we can determine the dimension of the affine
hull of a polyhedron defined by a system of linear inequalities in time
polynomial in the representation size of the
inequalities~\cite[Corollary 14.1f]{Schrijver:TLIP}.
To handle strict inequalities on line 1,
we can perturb these by a small amount, while keeping the
representation sizes polynomial, to obtain a system of non-strict
inequalities with the same affine hull.
The algorithm does at most $|I \cup J|+k$ affine hull calculations.
The total time is thus polynomial in $\size(R)+\size(A)$.
Finally, the set $S$ is a subset of $I \cup J$,
so $\size(A_S) \leq \size(R)+\size(A)$.
\end{proof}

\begin{theorem}\label{thm:affinesolves}
Let $\{R_+, \{1\}\} \subseteq \Gamma \subseteq SL_{\mathbb
Q}[{\mathbb Q}]$ be a finite constraint language.
If there is no bnu in $\langle \Gamma \rangle$, then Algorithm \ref{alg:affc}
correctly
solves CSP$(\Gamma)$ and can be implemented to run in polynomial time.
\end{theorem}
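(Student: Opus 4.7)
The plan is to split the argument into correctness and runtime, with correctness further split into soundness and completeness. Soundness is straightforward: any solution $s$ of the CSP lies in every $\hat{R}_i$, so by induction on the updates $s$ stays in $A$ throughout (since $s \in A$ and $s \in \hat{R}_i$ imply $s \in \hat{R}_i \cap A \subseteq \aff(\hat{R}_i \cap A)$). Consequently, $A \neq \emptyset$ at termination whenever a solution exists, and the algorithm returns ``yes''.

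For completeness, I would rely on the fact that at termination Algorithm~\ref{alg:affc} has established affine consistency, namely $\aff(\hat{R}_i \cap A) = A$ for every constraint $i$. The key preliminary observation is that, since $\{R_+,\{1\}\} \subseteq \Gamma$, Lemma~\ref{generateequations} gives $LE_{{\mathbb Q}}[{\mathbb Q}] \subseteq \langle \Gamma \rangle$, so $A$ itself is pp-definable from $\Gamma$; consequently, any partial intersection $A \cap \hat{R}_{i_1} \cap \dots \cap \hat{R}_{i_\ell}$ lies in $\langle \Gamma \rangle$, and its extension by $LE_{{\mathbb Q}}[{\mathbb Q}]$ therefore contains no bnu. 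I would then show by induction on $k$ that
\[
\aff\Bigl(A \cap \bigcap_{i=1}^{k} \hat{R}_i\Bigr) = A.
\]
The base case is trivial; for the inductive step, set $P = A \cap \bigcap_{i=1}^{k-1} \hat{R}_i$ and $Q = \hat{R}_k \cap A$, combine the induction hypothesis with affine consistency to get $\aff(P) = \aff(Q) = A$, and invoke Lemma~\ref{lem:intersection} to obtain linear pieces $P' \subseteq P$ and $Q' \subseteq Q$ with $\aff(P' \cap Q') = A$; this forces $\aff(P \cap Q) = A$ since $P' \cap Q' \subseteq P \cap Q \subseteq A$. Taking $k$ to be the total number of constraints yields $A \cap \bigcap_i \hat{R}_i \neq \emptyset$ whenever $A \neq \emptyset$, producing a solution.

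For the runtime, termination follows because the update $A := \aff(\hat{R}_i \cap A)$ always replaces $A$ by an affine subspace contained in $A$, so $\dim A$ is non-increasing; whenever a full outer pass actually changes $A$, the dimension must strictly drop, and hence the outer loop runs at most $n+2$ times. Each pass does $m$ affine-hull computations, which the subroutine from the preceding lemma (Algorithm~\ref{alg:aff}) handles in polynomial time. To keep the representation of $A$ bounded, I would observe that the subroutine always selects its output inequalities as a subset of the inequalities currently defining $A$ together with those of the processed constraint, so the representation of $A$ remains bounded by the size of the original input throughout the execution.

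The main obstacle is the inductive step of completeness: to apply Lemma~\ref{lem:intersection} at each stage, the partial intersection $A \cap \bigcap_{i=1}^{k-1} \hat{R}_i$ must sit inside $\langle \Gamma \rangle$ so that the bnu-free hypothesis transfers to it. This is exactly where the inclusion $\{R_+,\{1\}\} \subseteq \Gamma$ pays off, by making the running affine subspace $A$ pp-definable from $\Gamma$ and thereby allowing the induction to carry through.
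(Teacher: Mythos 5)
Your proposal is correct and follows essentially the same route as the paper: soundness via invariance of the solution set under the updates, completeness via an induction over the constraints that applies Lemma~\ref{lem:intersection} to $P = A \cap \bigcap_{i<j}\hat{R}_i$ and $Q = \hat{R}_j \cap A$ using the affine consistency established at termination, and a polynomial bound on the number of calls to Algorithm~\ref{alg:aff} together with control of the representation size of $A$. Your explicit check that the bnu-free hypothesis of Lemma~\ref{lem:intersection} transfers to the partial intersections (because $A$ and the $\hat{R}_i$ are pp-definable from $\Gamma$ once $LE_{\mathbb Q}[{\mathbb Q}] \subseteq \langle\Gamma\rangle$) is a point the paper leaves implicit, and is a welcome addition.
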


\begin{proof}
Assume that each relation $R \in \Gamma$ is given as $R = R_1 \cup \dots \cup
R_k$, where $R_i$ is a linear set for each $i$.
First, we show that the algorithm terminates with $A$ equal to the
affine hull of the solution space of the constraints.

Assume that the input consists of the constraints $R_i(x_{i_1},\dots,x_{i_k})$
over variables $V$, $i = 1, \dots, m$. Let $Z = \bigcap_{i=1}^m \hat{R}_i$
denote the solution space of the instance.
It is clear that $Z$ is contained in $A$ throughout the execution of the
algorithm.
Therefore, $\aff(Z) = \aff(Z \cap A)$ so it suffices to show that $\aff(Z \cap A) = A$.
We will show that $\aff(\bigcap_{i=1}^j \hat{R}_i \cap A)  = A$ for all $j = 1, \dots, m$.
When the algorithm terminates, we have $\aff(\hat{R}_i \cap A) = A$
for every $i = 1, \dots, m$.
In particular, the claim holds for $j = 1$.
Now assume that the claim holds for $j-1$.
Then, $P = \bigcap_{i=1}^{j-1} \hat{R}_i \cap A$ and $Q = \hat{R}_j \cap A$ satisfy
the requirements of Lemma~\ref{lem:intersection} with $\aff(P) = \aff(Q) = A$.
Therefore, we can use
this lemma to conclude that $\aff(\bigcap_{i=1}^{j} \hat{R}_i \cap A) = \aff(P \cap Q) = A$.

Finally, we show that the algorithm can be implemented to run in
polynomial time.
The call to Algorithm~\ref{alg:aff} in the inner loop is carried out at
most $mn$ times,
where $n = |V|$.
The size of $\hat{R}$ is at most $\size(R) + \log n$, so the size of $A$
never exceeds
$\mathcal{O}(mn (\size(R) + \log n) )$, where $R$ is a
relation with maximal representation size.
Therefore, each call to Algorithm~\ref{alg:aff} takes polynomial time
and consequently, the entire algorithm runs in polynomial time.
\end{proof}

\subsection{Essential convexity}\label{sec:moreess}

We will now identify another family of polynomial-time solvable semilinear CSPs.
This time, we base our result on essentially convex semilinear
constraint languages (Theorem~\ref{thm:essconvtractable}).
We extend this result to the situation where we are only
guaranteed that all unary relations that are pp-definable in the language are
essentially convex.
The idea is that even if we do not have the constant relation
$\{1\}$ to help us identify excluded intervals, we are still able to see excluded
full-dimensional holes.
We follow up this by showing that we can remove certain lower-dimensional holes
and thus recover an equivalent essentially convex constraint language.
We remind the reader that the dimension of a set is defined with respect to its affine hull,
as in Section~\ref{sec:affine}.

For $x, y \in {\mathbb Q}^k$, we let $\|x\|$ denote the euclidean norm of $x$,
and $\dist(x,y) = \|x-y\|$ the euclidean distance between $x$ and $y$.

\begin{lemma}\label{lemur}
Let $U \in \{1\} + \mathcal{I}(c)$ for some $0 < c < 1$ and
assume that $R \in SL_{\mathbb Q}[{\mathbb Q}]$ is a semilinear relation such that every unary relation in
$\langle \{R_+, U, R \} \rangle$ is essentially convex.
Then, $R$ can be defined by a formula
$\phi_0 \wedge \neg \phi_1 \wedge \dots \wedge \neg \phi_k$, where $\phi_0$ defines a
convex semilinear set, and $\phi_1, \dots, \phi_k$
are conjunctions over $LI_{\mathbb Q}[{\mathbb Q}]$
that define convex sets
of dimensions strictly lower than the dimension of the set defined by $\phi_0$.
\end{lemma}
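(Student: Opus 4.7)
The plan is to take $\phi_0$ to define the topological closure $\overline{R}$ of $R$, prove $\overline{R}$ is convex using the essential convexity hypothesis, and then decompose $\overline{R} \setminus R$ into finitely many convex polyhedral pieces to serve as $\phi_1, \dots, \phi_k$.

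The central claim is that $\overline{R}$ is convex. Suppose otherwise: there exist $p, q \in \overline{R}$ and an open sub-interval of the segment $[p, q]$ disjoint from $\overline{R}$. Since $\overline{R}$ is closed and $R$ is (by definition) dense in $\overline{R}$, perturbing $p$ and $q$ slightly yields $p', q' \in R$ such that an open sub-segment of $[p', q']$ remains disjoint from $R$; in other words, $R$ excludes an interval, with rational witnesses $\delta_1 < \delta_2$ obtained from Lemma~\ref{fakeomin} applied to the (semilinear) unary intersection of $R$ with the line through $p'$ and $q'$.

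The hard part is to translate ``$R$ excludes an interval'' into a contradiction, namely a unary non-essentially-convex relation in $\langle \{R_+, U, R\} \rangle$. The natural candidate is $\mathcal{L}_{R, p', q'}(y)$, which would exhibit a gap of positive length around the excluded interval; however, its pp-definition requires the rational constants $p'_i$ and $q'_i - p'_i$ (via Lemma~\ref{generateequations}), and these cannot be pp-defined directly from $\{R_+, U, R\}$ since $U$ only approximates $\{1\}$. The workaround is to existentially quantify over $a, b \in R$ and constrain each coordinate of $a$ (resp.\ $b$) to lie in a small neighborhood of the corresponding coordinate of $p'$ (resp.\ $q'$) using relations $U_\delta \in \{1\}+\mathcal{I}(\delta)$ pp-defined from $U$ via Lemma~\ref{lem:smalldelta1}, which is applicable because $U \in \{1\}+\mathcal{I}(c)$ with $c < 1$ forces $U \cap (-\infty, 0) = \emptyset$. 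For $\delta$ sufficiently small, the resulting unary relation still excludes an open sub-interval, contradicting the hypothesis.

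Once convexity of $\overline{R}$ is established, take $\phi_0$ to be a conjunction of non-strict linear inequalities defining $\overline{R}$, which exists because every convex semilinear set is a finite intersection of closed half-spaces. Write $R$ as a union of relatively open convex polyhedral cells of some hyperplane arrangement. Any such cell of maximal dimension $d := \dim R$ that lies inside $\overline{R}$ must already lie in $R$: otherwise, a relatively open neighborhood of some of its points would be disjoint from $R$ while still being contained in $\overline{R}$, a contradiction. Hence $\overline{R} \setminus R$ has dimension strictly less than $d$, and decomposing it into its relatively open polyhedral cells yields conjunctions $\phi_1, \dots, \phi_k$ over $LI_{\mathbb Q}[{\mathbb Q}]$ with $\dim \phi_i < d = \dim \phi_0$. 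The identity $R = \phi_0 \setminus (\phi_1 \cup \dots \cup \phi_k)$ then gives the required formula $\phi_0 \wedge \neg \phi_1 \wedge \dots \wedge \neg \phi_k$.
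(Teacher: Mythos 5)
Your overall strategy is genuinely different from the paper's, and in principle more economical. The paper takes $\phi_0$ to define $\conv(R)$ and proves directly that every point of $\conv(R)$ is arbitrarily close to $R$, which forces it through Carath\'eodory's theorem and an induction over simplices, because an arbitrary point of the convex hull is a combination of up to $d+1$ points of $R$ and the line-parameterisation trick only handles two points at a time. You instead argue by contradiction that the closure $\overline{R}$ is convex: non-convexity of the \emph{closed} set $\overline{R}$ produces a full-dimensional ball disjoint from $R$ straddling a segment between two points of $\overline{R}$, which survives perturbing the endpoints into $R$, so a single two-point witness suffices and Carath\'eodory is never needed. Since ``$\overline{R}$ is convex'' is equivalent to ``$\conv(R)\subseteq\overline{R}$'', the two routes establish the same geometric fact, and your second half (density of $R$ in $\overline{R}$ forces $\overline{R}\setminus R$ to miss every full-dimensional cell, hence to have dimension $<\dim R$) is sound; the only cosmetic slip there is that ``every convex semilinear set is a finite intersection of closed half-spaces'' should read ``every \emph{closed} convex semilinear set'', which is what you apply it to.

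The genuine gap is in the pp-definition of the witnessing unary relation. Your workaround existentially quantifies over endpoints $a\approx p'$ and $b\approx q'$ and then forms $\mathcal{L}_{R,a,b}(y)$, whose defining constraints are $x_i=(1-y)a_i+y\,b_i$. With $y$, $a_i$ and $b_i$ all variables, these contain the products $y\cdot a_i$ and $y\cdot b_i$, and products of variables are not expressible from $R_+$: Lemma~\ref{generateequations} only yields linear equations with \emph{rational constant} coefficients, which is precisely why $\mathcal{L}_{R,a,b}$ is defined in the paper only for fixed points $a,b$. So the relation you describe is not a member of $\langle\{R_+,U,R\}\rangle$ and the contradiction does not yet follow. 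The repair is to keep the endpoints (or at least the direction) as rational constants and to relax only the coefficient that should equal $1$, e.g.\ $T(y)\equiv\exists t,z\,.\,U_\delta(t)\wedge R(z)\wedge\bigwedge_i z_i=p'_i\,t+(q'_i-p'_i)\,y$, which is linear in $(t,y,z)$, contains $0$ and $1$ (take $t=1$), and for small $\delta$ still avoids an interval of $y$-values because the corresponding $z$ stays inside the ball disjoint from $R$. This is exactly the relation $T$ used in the paper's proof; with it in place of your construction, your shorter argument goes through.
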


\begin{proof}
Let $\conv(R)$ denote the convex hull of $R$ and let $d$ denote its dimension.
The set $\conv(R)$ is semilinear (see, for instance, Stengle et al.~\cite{Stengle2010370}).
Let $\phi_0$ be a formula for $\conv(R)$ and
let $\phi_1 \vee \dots \vee \phi_k$ be a formula for $\conv(R) \setminus R$ on quantifier-free DNF over $LI_{\mathbb Q}[{\mathbb Q}]$.
It remains to show that for each $i$, the dimension of the convex set $S_i$ defined by $\phi_i$ is smaller than $d$.
To prove this, we show that for every point $p$ in $S_i$,
and every $\epsilon > 0$,
there exists a point
$x$ in $R$ such that $\dist(p, x) < \epsilon$.
Since every $d$-dimensional convex set contains a small $d$-dimensional open ball around
every point in its interior, it follows from this that none of the sets $S_i$ can be $d$-dimensional.

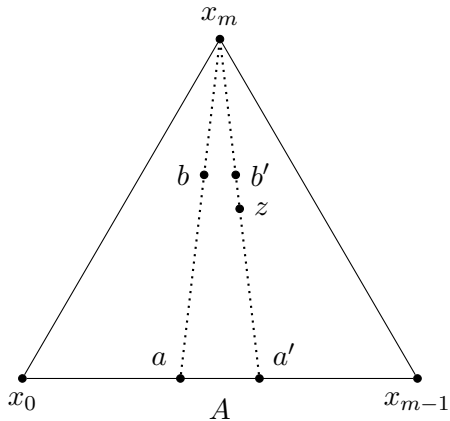
\begin{figure}[ht]\label{fig:simplex}
\begin{center}
\begin{tikzpicture}
[
scale = 3.0,
point/.style = {draw, circle, fill=black, inner sep=1pt}
]

\node (n0) at +(90:1) [point, label = above:$x_m$] {};
\node (n1) at +(-30:1) [point, label = {below:$x_{m-1}$}] {};
\node (n2) at +(-150:1) [point, label = {below:$x_0$}] {};

\draw (n0) -- (n1) -- (n2) -- (n0);

\node (a) at ($(n1)!.6!(n2)$) [point, label = {above left:$a$}] {};
\node (a2) at ($(n1)!.4!(n2)$) [point, label = {above right:$a'$}] {};

\node (b) at ($(a)!.6!(n0)$) [point, label = left:$b$] {};
\node (b2) at ($(a2)!.6!(n0)$) [point, label = right:$b'$] {};
\node (z) at ($(a2)!.5!(n0)$) [point, label = right:$z$] {};

\node (A) at ($(n1)!.5!(n2)$) [label = {below:$A$}] {};

\draw[dotted, thick] (a) -- (n0);
\draw[dotted, thick] (a2) -- (n0);

\end{tikzpicture}
\end{center}
\caption{An illustration of the entities involved in the induction step.}
\end{figure}

Carath\'{e}odory's theorem (cf.~Schrijver~\cite[Section~7.7]{Schrijver:TLIP})
states that for every $p \in \conv(R)$, we can find $m+1 \leq d+1$ affinely independent points, 
$x_0, \dots, x_{m} \in R$, such that $p$ lies in $B = \conv(\{x_0, \dots, x_{m}\})$.
By induction over $m$, we show that for every point $b \in B$, and every $\epsilon > 0$,
there is a point $z \in R$ such that $\dist(b,z) < \epsilon$.
For $m = 0$, this statement follows trivially as each $x_j$ was chosen from $R$.
Now assume that $0 < m \leq d$ and that the statement holds for all $0 \leq m' < m$.
By the induction hypothesis, the statement holds for the set $A = \conv(\{x_0, \dots, x_{m-1}\})$.
Every $b \in B$ can be written as $b = y_b \cdot a + (1-y_b) \cdot x_{m}$ for some $a \in A$ and $0 \leq y_b \leq 1$.
Let $a' \in R$ be a point in $A$ that is at distance at most $\epsilon/2$ from $a$
and
let $b' = y_b \cdot a' + (1-y_b) \cdot x_{m}$.
Then, 
\begin{align*}
\dist(b, b') &= \| (y_b \cdot a + (1-y_b) \cdot x_{m}) - (y_b \cdot a' + (1-y_b) \cdot x_{m}) \|\\
& \leq y_b \| a-a' \|\\
& \leq \epsilon/2.
\end{align*}

Let $\delta > 0$ be a small constant to be fixed later.
By Lemma~\ref{lem:smalldelta1}, there exists a unary relation $U_{\delta} \in \{1\} + \mathcal{I}(\delta) \cap \langle R_+, U \rangle$.
Consider the following relation:
\[T(y) \equiv \exists t, z . U_{\delta}(t) \wedge R(z) \wedge z = y \cdot a' + (t-y) \cdot x_{m}.\]
Since $U_{\delta} \in \langle \{ R_+, U \} \rangle$, we also have $T \in \langle \{ R_+, U\} \rangle$.
By assumption, $T$ does not exclude an interval,
so there exists a $y'_b$ such that $T(y'_b)$ and $\left|y'_b - y_b\right| < \delta$.
Then, by the definition of $T$, there exists a $t \in (1-\delta,1+\delta)$ and a point $z \in R$ such that:
\begin{align*}
\dist(b', z)
&= \| \left(y_b \cdot a' + (1-y_b) \cdot x_{m}\right) - \left(y'_b \cdot a' + (t-y'_b) \cdot x_{m}\right) \|\\
&= \| (y_b-y'_b) \cdot a' + (1-t) \cdot x_{m} + (y'_b-y_b) \cdot x_{m} \|\\
&\leq \| (y_b-y'_b) \cdot a' \| + \| (1-t) \cdot x_{m} \| + \| (y'_b-y_b) \cdot x_{m} \|\\
&\leq \left( | y_b-y'_b | + | 1 - t | + | y'_b - y_b| \right) \max \{ \| a' \|, \| x_{m} \| \}\\
&< 3\delta C,
\end{align*}
where $C := \max \{\|a'\|, \|x_m\|, 1\}$ is a constant for a fixed $B$,
and the first inequality follows from the triangle inequality.
The claim now follows for the point $b$ by taking $\delta = (\epsilon/2) \cdot (3C)^{-1}$
since
$\dist(b,z) \leq \dist(b,b')+\dist(b',z) < \epsilon$.
\end{proof}

\begin{theorem}\label{thm:redtoessconv}
Let $\{R_+\} \subseteq \Gamma \subseteq SL_{\mathbb Q}[{\mathbb Q}]$ be a constraint
language.
Assume that there exists a unary relation $U \in \{1\} + \mathcal{I}(c) \cap \langle \Gamma \rangle$,
for some $0 < c < 1$, and
that every unary relation in $\langle \Gamma \rangle$ is essentially convex.
Then, CSP$(\Gamma)$ is equivalent to CSP$(\Gamma')$ for an essentially convex
constraint language $\Gamma' \subseteq SL_{\mathbb Q}[{\mathbb Q}]$.
\end{theorem}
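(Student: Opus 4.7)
The plan is to invoke Lemma~\ref{lemur} for every relation in $\Gamma$ and to define $\Gamma'$ by replacing each relation with the convex semilinear set given by its decomposition. Specifically, for each $R \in \Gamma$, the containment $\langle \{R_+, U, R\} \rangle \subseteq \langle \Gamma \rangle$ shows that every unary relation in $\langle \{R_+, U, R\} \rangle$ is essentially convex, so Lemma~\ref{lemur} applies and yields a formula $R \equiv \phi^R_0 \wedge \neg \phi^R_1 \wedge \dots \wedge \neg \phi^R_{k_R}$ in which $\phi^R_0$ defines a convex semilinear set and each $\phi^R_j$ with $j \geq 1$ defines a convex set of strictly lower dimension than $\phi^R_0$. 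Setting $R' := \phi^R_0$ and $\Gamma' := \{R' : R \in \Gamma\}$ produces a collection of convex relations, which is in particular essentially convex.

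The next step is to verify that CSP$(\Gamma)$ and CSP$(\Gamma')$ are equivalent. The direction from satisfiability of $\Phi$ to satisfiability of $\Phi'$ follows immediately from $R \subseteq R'$. For the converse, let $\Phi'$ be a satisfiable instance with solution set $Z' \subseteq {\mathbb Q}^V$, a non-empty convex polyhedron. The solution set of $\Phi$ equals $Z = Z' \setminus \bigcup_{i, j \geq 1} \hat{\phi}^{R_i}_j$, where $\hat{\phi}^{R_i}_j$ is the cylinder in ${\mathbb Q}^V$ over the $j$-th hole of the $i$-th constraint. Assuming for contradiction that $Z = \emptyset$, the set $Z'$ is the union of the finitely many convex subsets $Z' \cap \hat{\phi}^{R_i}_j$, and a standard dimension-counting argument picks out a pair $(i_0, j_0)$ with $\aff(Z' \cap \hat{\phi}^{R_{i_0}}_{j_0}) = \aff(Z')$. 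Projecting to the coordinates of $R_{i_0}$, this forces $\pi_{i_0}(Z')$ into a proper affine subspace of $\aff(\phi^{R_{i_0}}_0)$, i.e., $\Phi'$ implicitly enforces a non-trivial affine relation on the variables of the $i_0$-th constraint restricting them to an affine subspace in which $\phi^{R_{i_0}}_0$ is itself lower-dimensional.

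The main obstacle is converting this confinement into a contradiction with the hypothesis that every unary relation in $\langle \Gamma \rangle$ is essentially convex. The plan here is to pp-define, inside $\langle \Gamma \rangle$, the restriction of $R_{i_0}$ to the confining affine subspace $\aff(\phi^{R_{i_0}}_{j_0})$, and then project the resulting relation along a direction inside $\aff(\phi^{R_{i_0}}_{j_0})$ that traverses $\phi^{R_{i_0}}_{j_0}$ to obtain a pp-definable unary relation; the excluded-interval configuration created by the hole inside $\phi^{R_{i_0}}_0 \cap \aff(\phi^{R_{i_0}}_{j_0})$ should cause this unary to fail essential convexity, contradicting the hypothesis. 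The delicate point is that $\Gamma$ provides only the approximate constants afforded by $U$, so the equations defining the affine subspace must be encoded through $U_\delta$-constraints produced by Lemma~\ref{lem:smalldelta1}. Choosing $\delta$ small enough --- relative to the geometric gap that $\phi^{R_{i_0}}_{j_0}$ cuts inside $\phi^{R_{i_0}}_0 \cap \aff(\phi^{R_{i_0}}_{j_0})$ --- so that the excluded interval survives the approximation and yields a genuine non-essentially-convex unary in $\langle \Gamma \rangle$ is where I expect the main effort of the proof to be concentrated.
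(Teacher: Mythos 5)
Your construction of $\Gamma'$ is where the proposal breaks: taking $R' := \phi^R_0 = \conv(R)$ discards every hole, and the resulting CSP is in general \emph{not} equivalent to CSP$(\Gamma)$, so the direction ``$\Phi'$ satisfiable $\Rightarrow$ $\Phi$ satisfiable'' that you are trying to prove is false rather than merely difficult. The obstruction is that a hole whose affine hull contains the origin can be hit \emph{exactly} by constraints pp-definable from $R_+$ alone (homogeneous linear equations, Lemma~\ref{generateequations}), with no need for the approximate constants $U_\delta$; the hypothesis that all pp-definable unary relations are essentially convex does not exclude such holes. Concretely, let $R = \{(x,y) \in {\mathbb Q}^2 : x \neq y\} \cup \{(0,0)\}$, a union of three linear sets with $\conv(R) = {\mathbb Q}^2$ and holes $\{(x,x) : x > 0\}$ and $\{(x,x) : x < 0\}$, whose common affine hull is the line $y = x$ through the origin. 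The instance $\exists x,y \: . \: R(x,y) \wedge x = y \wedge U(x)$, where $x = y$ is pp-definable from $R_+$ and $U \in \{1\} + \mathcal{I}(c)$ with $0 < c < 1$, is unsatisfiable (the only point of $R$ on the line $y=x$ is the origin, which $U$ excludes), while the corresponding instance over $\conv(R)$ is satisfiable. This particular $R$ happens to be essentially convex, but the same phenomenon persists when $\Gamma$ also contains genuinely non-essentially-convex relations, so it cannot be dodged by special-casing essentially convex languages. Your intended contradiction in the third paragraph fails for exactly this reason: when the confining subspace $\aff(\phi^{R_{i_0}}_{j_0})$ passes through the origin, it is enforceable with no $U_\delta$-approximation at all, and no non-essentially-convex unary relation need arise.

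The paper's proof resolves this by going in the opposite direction. It keeps the holes whose affine hulls contain the origin, and enlarges each remaining hole $\phi_i$ to its affine hull $\phi'_i$. Since an affine subspace meets a line not contained in it in at most one point, the enlarged holes can never exclude an interval, and the retained ones cannot either (otherwise a non-essentially-convex unary relation would be pp-definable). This gives $R' \subseteq R$, so the easy implication is the one you found hard, and the remaining implication (``$\Phi$ satisfiable $\Rightarrow$ $\Phi'$ satisfiable'') is proved by considering the ray from the origin through a solution $s$: the unary relation $\mathcal{L}_{\psi,(0,\dots,0),s}$ is pp-definable using $R_+$ alone, each enlarged hole removes at most one of its points, and if $\Phi'$ were unsatisfiable this relation would be finite --- contradicting essential convexity if it has more than one point, and yielding $\{1\} \in \langle \Gamma \rangle$ (and again a contradiction) if it is a single point.
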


\begin{proof}
If $\Gamma$ is essentially convex, then there is nothing to prove.
Assume therefore that $\Gamma$ is not essentially convex.
By Lemma~\ref{lemur}, each $R \in \Gamma$ can be defined by a formula
$\phi_0 \wedge \neg \phi_1 \wedge \dots \wedge \neg \phi_k$, where 
$\phi_0$, $\phi_1$, \dots, $\phi_k$
are conjunction over $LI_{\mathbb Q}[{\mathbb Q}]$, 
and $\phi_1$, \dots, $\phi_k$ define sets
whose affine hulls are of dimensions strictly lower than that of the set defined by $\phi_0$.
Assume additionally that the formulas are numbered so that the
affine hulls of the sets defined by $\phi_1, \dots \phi_m$ do not contain $(0,\dots,0)$ and
that the affine hulls of the sets defined by $\phi_{m+1}, \dots, \phi_k$  do contain $(0,\dots,0)$.
Define $R'$ by 
\[\phi \wedge \neg \phi'_1 \wedge \dots \wedge \neg \phi'_m \wedge \neg \phi_{m+1} \wedge \dots \wedge \neg \phi_k,\] 
where $\phi'_i$ defines the affine hull of the set defined by $\phi_i$.
Then, the constraint language $\Gamma' = \{ R' \mid R \in \Gamma \}$ is essentially convex
since witnesses of an excluded interval only occur inside an affine subspace not containing
$(0,\dots,0)$;
otherwise we could use such a witness to pp-define a unary relation excluding an interval.

Let $\Phi$ be an arbitrary instance of CSP$(\Gamma)$ over the variables $V=\{x_1,\ldots,x_n\}$ and 
assume $\Phi \equiv \exists x_1,\ldots,x_n.\psi$ where $\psi$ is quantifier-free.
Construct 
an instance $\Phi'$ of
CSP$(\Gamma')$ by replacing each occurrence of a relation $R$ in $\Phi$ by $R'$.
Clearly, if $\Phi'$ is satisfiable, then so is $\Phi$.
Conversely, let $s \in {\mathbb Q}^V$ be a solution to $\Phi$ and
assume that $\Phi'$ is not satisfiable.
Let $L$ be the line in ${\mathbb Q}^V$ through $(0,\dots,0)$ and $s$ and let $U$ be the unary relation 
$\mathcal{L}_{\psi,(0,\dots,0),s} \in \langle \Gamma \rangle$.
All tuples in $U$ correspond to solutions of $\Phi$ that are not solutions to $\Phi'$.

Fix a constraint $R(x_1, \dots, x_l)$ in $\Phi$ and consider the points in $U$ that satisfy this
constraint but not $R'(x_1, \dots, x_l)$.
These are the points $p \in {\mathbb Q}^V$ on $L$ for which $(p(x_1), \dots, p(x_l))$ satisfies $(\phi'_1 \vee \dots \vee \phi'_m) \wedge \neg (\phi_1 \vee \dots \vee \phi_m )$.
For each $1 \leq i \leq m$, $\phi'_i$ satisfies at most one point on $L$ since otherwise
the affine hull of the relation defined by $\phi_i$ would contain $(0,\dots,0)$.
Hence, each constraint in $\Phi$ can account for at most a finite number of points in $U$,
so $U$ is finite.

Assume first that $\left| U \right| > 1$.
Then, $U$ is not essentially convex which contradicts the assumption that
every unary relation in $\langle \Gamma \rangle$ is essentially convex.
Assume instead that $U = \{1\}$, where the single point in $U$ corresponds to the solution $s$.
Recall that $\Gamma$ is not essentially convex. Let $R \in \Gamma$ be a $k$-ary relation
that is not essentially convex and let $p,q \in {\mathbb Q}^k$ witness this. The relation
$\mathcal{L}_{R,p,q} \in \langle \Gamma \rangle$ since $\{1\} \in \langle \Gamma \rangle$.
Then, $\mathcal{L}_{R,p,q} \in \langle \Gamma \rangle$ is unary and not essentially convex
which leads to a contradiction.
It follows that if $\Phi$ is satisfiable, then so is $\Phi'$.
\end{proof}

\section{NP-hardness} \label{sec:hard}

We now derive a unified condition for all hard CSPs classified in this article.
It is based on a polynomial-time reduction from the NP-hard problem
Not-All-Equal 3SAT~\cite{Schaefer:stoc78}, i.e. the problem 
CSP$(\{R_{\rm NAE}\})$
where $R_{\rm NAE}=\{-1,1\}^3 \setminus \{(-1,-1,-1),(1,1,1)\}$.
The proof is divided into three different lemmas.
First, we present a reduction from 
Not-All-Equal 3SAT to a simple semilinear CSP.
We then show that having a bnu $T$ that is bounded away from $0$ allows us to
pp-define unary relations that are, in a certain sense, close to being either
the relation $\{1\}$ or $\{-1,1\}$. In the final step, we combine these two results
and show that having a bnu $T$ that excludes an interval
and that is bounded away from $0$ is a sufficient condition for CSP($\{R_+, T\}$)
to be NP-hard. 
\begin{lemma} \label{plusminushardness}
Let $T \in \{-1,1\} + \mathcal{I}(\frac{1}{2})$.
Then,
CSP$(\{R_+,T\})$ is NP-hard.
\end{lemma}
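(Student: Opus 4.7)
The plan is to reduce from CSP$(\{R_{\rm NAE}\})$, which is NP-hard by Schaefer. The key observation is that since $T \in \{-1,1\} + \mathcal{I}(\tfrac12)$, we have $\{-1,1\} \subseteq T \subseteq (-\tfrac32,-\tfrac12) \cup (\tfrac12,\tfrac32)$, so membership in $T$ already commits a variable to a definite ``sign.'' This sign will encode the Boolean value.

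Given an instance of CSP$(\{R_{\rm NAE}\})$ on variables $v_1,\dots,v_n$, I would produce an instance of CSP$(\{R_+,T\})$ as follows. For each $v_i$ introduce a variable $x_i$ and a constraint $T(x_i)$. For each NAE-constraint on $(v_i,v_j,v_k)$, introduce a fresh variable $s$, post the constraint $T(s)$, and using $R_+$ express $s = x_i + x_j + x_k$ (note $R_+$ pp-defines ternary sums via one auxiliary variable). Since $\{R_+,\{1\}\}$ in general pp-defines constants but here we need only $R_+$ and $T$, this is straightforward; no constants beyond what $R_+$ itself provides ($\{0\}$ via $R_+(w,w,w)$) are needed.

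For correctness, I would carry out the following arithmetic check. Suppose the constructed instance has a solution, and set $\sigma(v_i) := +1$ if $x_i > 0$ and $\sigma(v_i) := -1$ if $x_i < 0$; this is well-defined because $T$ avoids a neighbourhood of $0$. If $\sigma(v_i) = \sigma(v_j) = \sigma(v_k) = +1$, then each of $x_i,x_j,x_k$ lies in $(\tfrac12,\tfrac32)$, so their sum lies in $(\tfrac32,\tfrac92)$, which is disjoint from $T$; this contradicts $T(s)$. The all-negative case is symmetric. Hence $\sigma$ satisfies every NAE-constraint. Conversely, given a satisfying NAE-assignment $\sigma$, set $x_i = \sigma(v_i) \in \{-1,1\} \subseteq T$; for each clause, not all three signs agree, so $x_i+x_j+x_k \in \{-1,+1\} \subseteq T$ and the constraint is satisfied.

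I do not expect a substantive obstacle here: the whole content is the interval arithmetic above, which works precisely because the tolerance $\tfrac12$ around $\pm 1$ is tight enough that three ``same-sign'' values sum to something outside $T$, while three ``NAE'' values taken at the canonical points $\pm 1$ sum to $\pm 1 \in T$. The only point that deserves care is verifying that $R_+$ suffices to pp-define the required $x_i+x_j+x_k = s$ relation and the constant $0$ it relies on; both are immediate.
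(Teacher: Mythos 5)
Your proposal is correct and is essentially the paper's proof: the same reduction from Not-All-Equal 3SAT, with $T$-constraints encoding signs and a fresh $T$-constrained variable tied to the sum $x_i+x_j+x_k$ to forbid the all-equal assignments via exactly the interval arithmetic you describe (the paper writes the sum constraint as $x+y+z+w=0$ rather than $s=x+y+z$, an immaterial sign flip). Your verification that $R_+$ alone pp-defines the ternary sum is also the right observation and matches the paper's use of Lemma~\ref{generateequations}.
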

\begin{proof}
The proof is by a polynomial time reduction from
CSP$(\{R_{\rm NAE}\})$.
Let
$\Phi$ denote an arbitrary instance of CSP$(\{R_{\rm NAE}\})$. Construct an instance $\Phi'$
of CSP$(\{R_+,T\})$ as follows.
Impose the constraint $T(v)$ on each variable. For each
constraint $R_{\rm NAE}(x,y,z)$ in $\Phi$, introduce the constraints
$x+y+z+w=0$ and $T(w)$,
where $w$ is a fresh variable.

Assume that $\Phi$ has a solution. Consider a constraint $R_{\rm NAE}(x,y,z)$ in $\Phi$.
If two of the variables are assigned the value $1$, then
the equation $x+y+z+w=0$ is satisfied by choosing $w=-1$.
If two of the variables are assigned the value $-1$, then
the equation $x+y+z+w=0$ is satisfied by choosing $w=1$.
Hence, $\Phi'$ is satisfiable.

Assume that $\Phi'$ has a solution $s'$. Then, $\Phi$ has a solution $s$ defined
by $s(x)=1$ if $s'(x)>0$ and $s(x)=-1$ if $s'(x)<0$.
Assume to the contrary that $s(x) = s(y) = s(z) = 1$ for some variables
with a constraint $R_{\rm NAE}(x,y,z)$.
Consider the equation $x+y+z+w=0$ in $\Phi'$.
By the assumption on $T$, we have
$s'(x)+s'(y)+s'(z) > \frac{3}{2}$, and hence $s'(w) <-\frac{3}{2}$.
But this is a contradiction as the constraint $T(w)$ is also in $\Phi'$.
We can similarly rule out the case
$s(x)=s(y)=s(z)=-1$. This proves that $s$ is a solution to $\Phi$.
\end{proof}

\begin{lemma} \label{lem:thin}
Let $T \neq \emptyset$ be a bounded unary relation such that $T \cap (-\epsilon,\epsilon) = \emptyset$,
for some $\epsilon > 0$.
Then,
either
$\langle R_+, T \rangle$ contains a unary relation $U_{\delta} \in \{1\} + \mathcal{I}(\delta)$
for every $\delta > 0$; or
$\langle R_+, T \rangle$ contains a unary relation $U_{\delta} \in \{-1,1\} + \mathcal{I}(\delta)$,
for every $\delta > 0$.
\end{lemma}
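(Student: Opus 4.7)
The plan is to split by the sign structure of $T$ and, in each subcase, reduce to one of the approximation lemmas already at our disposal. If $T$ is contained entirely in $(0,\infty)$ or $(-\infty,0)$, then $T$ itself or $-T\in\langle R_+,T\rangle$ satisfies the hypothesis of Lemma~\ref{lem:smalldelta1}, and we obtain the first alternative $U_\delta\in\{1\}+\mathcal{I}(\delta)$. If $T$ has elements of both signs and $T\cap-T\neq\varnothing$, then $W:=T\cap-T$ is pp-definable over $\{R_+,T\}$, bounded, disjoint from $(-\epsilon,\epsilon)$, and satisfies $W\cap-W=W\neq\varnothing$, so Lemma~\ref{lem:smalldelta2} gives the second alternative $U_\delta\in\{-1,1\}+\mathcal{I}(\delta)$.

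The principal remaining case is $T$ with both signs and $T\cap-T=\varnothing$. My plan here is to exhibit a rational $\alpha\neq 0$ such that $T\cap\alpha T$ is non-empty and contained in one of $(0,\infty)$ or $(-\infty,0)$; since $T\cap\alpha T\in\langle R_+,T\rangle$, possibly after replacing it by its negation we fall back into the first subcase and conclude by Lemma~\ref{lem:smalldelta1}. Splitting $T=T^+\cup T^-$ into its positive and negative parts, a direct computation shows that $T\cap\alpha T$ contains a positive element precisely when $\alpha\in T^+/T^+$ (for $\alpha>0$) or $\alpha\in T^+/T^-$ (for $\alpha<0$), with an analogous criterion for negative elements involving $T^-/T^-$ and $T^-/T^+$. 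Thus a single-signed non-empty $T\cap\alpha T$ exists unless $T^+/T^+=T^-/T^-$ and, simultaneously, $T^+/T^-$ is closed under reciprocation.

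The main obstacle is ruling out this degenerate coincidence under $T\cap-T=\varnothing$. I would pass to additive coordinates via $X:=\log T^+$ and $Y:=\log(-T^-)$, translating the two ratio equalities into $X-X=Y-Y$ and $X-Y=Y-X$, i.e., $X-Y$ symmetric around $0$. The symmetry yields $\sup X+\inf X=\sup Y+\inf Y$, and the difference-set equality yields $\sup X-\inf X=\sup Y-\inf Y$, so together $\sup X=\sup Y$ and $\inf X=\inf Y$. A structural argument for semilinear subsets of ${\mathbb R}$—finite unions of intervals and points—taking into account the boundary types (open, closed, isolated) at these shared extremes, will then force $X=Y$, hence $T^+=-T^-$, which contradicts $T\cap-T=\varnothing$. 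Executing this final combinatorial step is where the bulk of the technical work resides.
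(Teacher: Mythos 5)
Your case analysis and the reduction to Lemmas~\ref{lem:smalldelta1} and~\ref{lem:smalldelta2} are sound, and the pivot in the main case --- find a rational $\alpha$ with $T\cap\alpha T$ non-empty and single-signed, then invoke Lemma~\ref{lem:smalldelta1} --- is at heart the same mechanism as the paper's, which forms $a^{-1}T\cap b^{-1}T$ (a rescaling of $T\cap(b/a)T$). The difference is that the paper \emph{chooses} $\alpha$ extremally from the outset: by Lemma~\ref{fakeomin} there are thresholds $c^-<c^+$ such that the elements of $T$ of near-maximal modulus all share a sign, and likewise for near-minimal modulus; taking $a$ and $b$ from these two fringes forces $|a|/|b|\geq c^+/c^-$ to strictly exceed every modulus ratio formed within the wrong-signed part of $T$ (whose elements are trapped strictly between the thresholds), so the unwanted sign dies automatically. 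You instead characterise \emph{all} workable $\alpha$ and must then rule out the degenerate coincidence, and that is exactly where your text stops being a proof.

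The deferred step is a genuine gap as written, but it does close --- and more easily than you anticipate, provided you do not aim at $X=Y$ (which is stronger than needed) but derive a contradiction directly from $X\cap Y=\emptyset$ together with $\sup X=\sup Y=:s$ and $\inf X=\inf Y=:i$, which you correctly established. By Lemma~\ref{fakeomin}, $X$ and $Y$ are finite unions of points and intervals; if neither attains $s$ then both contain an interval $(s-\eta,s)$ and hence intersect, and if both attain $s$ then $s\in X\cap Y$, so exactly one of them attains $s$, and likewise exactly one attains $i$. If the same set, say $X$, attains both, then $s-i$ is attained in $X-X$ but not in $Y-Y$ (every $y-y'$ with $y,y'\in Y$ is strictly less than $s-i$), contradicting $X-X=Y-Y$. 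If $X$ attains $s$ and $Y$ attains $i$ (or symmetrically), then $s-i\in X-Y$ while $i-s\notin X-Y$, contradicting the symmetry of $X-Y$ about $0$. So the degenerate case never occurs and your plan completes; but this endpoint-attainment analysis is precisely the content you labelled ``the bulk of the technical work'', it is indispensable, and once written out it essentially rediscovers the paper's shortcut, since the witnessing $\alpha$ it yields is an extreme ratio of the kind the paper selects directly.
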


\begin{proof}
If $T \cap -T \neq \emptyset$, then the result follows from Lemma~\ref{lem:smalldelta2}.
Otherwise, by Lemma~\ref{fakeomin}, there exists a constant $c^+ > 0$ such that
the set $T^+ = \{ x \in T \mid |x| \geq c^+ \}$ is 
non-empty and contains points that are either all positive or all negative.
Similarly, there exists a constant $c^- > 0$ such that
$T^- = \{ x \in T \mid |x| \leq c^- \}$ is non-empty and contains points that are 
either all positive or all negative.
Let $a \in T^+$ and $b \in T^-$.
Assume that both sets contain positive points only or that both sets contain negative points only.
Then, the result follows using Lemma~\ref{lem:smalldelta1} with the relation
$U = a^{-1} \cdot T \cap b^{-1} \cdot T$ (or $-U$ if the points of $U$ are negative).
The case when the one set contains positive points and the other contains negative points
is handled similarly using the relation $U' = a^{-1} \cdot T \cap b^{-1} \cdot (-T)$.
\end{proof}

\begin{lemma} \label{universalhardness}
Let $T$ be a bnu such that $T \cap (-\epsilon,\epsilon) = \emptyset$,
for some $\epsilon > 0$, 
and $U$ be a unary relation that excludes an interval.
Then,
CSP$(\{R_+,T, U\})$ is NP-hard.
\end{lemma}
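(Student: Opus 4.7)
The plan is to reduce to the hypotheses of Lemma~\ref{plusminushardness} by a case analysis based on Lemma~\ref{lem:thin} applied to $T$. Since $T$ is a bnu bounded away from $0$, that lemma provides either (a) a unary relation in $\{-1,1\} + \mathcal{I}(\delta)$ inside $\langle \{R_+, T\} \rangle$ for every $\delta > 0$, or (b) a unary relation $U_\delta \in \{1\} + \mathcal{I}(\delta)$ there for every $\delta > 0$. In case (a), taking $\delta = 1/2$ and invoking Lemma~\ref{plusminushardness} gives NP-hardness of CSP$(\{R_+, T\})$, and hence of CSP$(\{R_+, T, U\})$, immediately.

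All the work lies in case (b). The plan is to pp-define from $\{R_+, T, U\}$ a bounded unary relation $V$ satisfying the hypotheses of Lemma~\ref{lem:smalldelta2} --- bounded, bounded away from $0$, and with $V \cap -V \neq \emptyset$ --- so that this lemma supplies a relation in $\{-1,1\} + \mathcal{I}(1/2)$ and Lemma~\ref{plusminushardness} finishes the job. Since $U$ excludes an interval, I would use Lemma~\ref{fakeomin} to pick $p, q \in U$ and rationals $0 < \delta_1 < \delta_2 < 1$ witnessing the exclusion in such a way that the midpoint $(p+q)/2$ lies strictly inside the excluded open subsegment. This is always possible by taking $p$ and $q$ in $U$ sufficiently close to the two sides of the gap. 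The candidate is then
\[V(x) \equiv \exists y_1, y_2, s \: . \: U(y_1) \wedge U(y_2) \wedge U_\delta(s) \wedge y_1 + y_2 = (p+q)s \wedge y_1 - y_2 = (q-p)x.\]
Setting $s = 1$ with $(y_1, y_2) = (q, p)$ and then $(p, q)$ yields $x = 1$ and $x = -1$ respectively, so $\{-1, 1\} \subseteq V$, in particular $V \cap -V \neq \emptyset$. For $x$ in a small neighbourhood of $0$ and $s$ sufficiently close to $1$, both $y_1$ and $y_2$ land near $(p+q)/2$, inside the gap of $U$, so $y_1, y_2 \notin U$; hence $V$ is bounded away from $0$ provided $\delta$ is small enough.

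The main obstacle will be ensuring that $V$ is bounded, which holds automatically when $U$ is bounded but not otherwise. For an unbounded $U$, my plan is to first pass to a bounded sub-relation $U^\ast \in \langle \{R_+, T, U\} \rangle$ that retains points of $U$ on both sides of the gap, and to use $U^\ast$ in place of $U$ in the definition of $V$. When $U$ is unbounded in only one direction, this can be done by imitating the construction in the proof of Lemma~\ref{halfboundgenerate}(1), substituting $U_\delta$ for the exact constant $\{1\}$ for $\delta$ small enough --- the constant is only used as a shifting parameter there, and the construction tolerates the $O(\delta)$ perturbation. When $U$ is unbounded in both directions, it must contain both positive and negative elements, and Lemma~\ref{halfboundgenerate}(2) supplies a bounded relation in $\langle \{R_+, U\} \rangle$ from which a suitable $U^\ast$ retaining the gap of $U$ can be extracted. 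Once $V$ is bounded, Lemmas~\ref{lem:smalldelta2} and~\ref{plusminushardness} combine to complete the reduction.
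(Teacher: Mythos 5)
Your overall strategy matches the paper's: split on the two outcomes of Lemma~\ref{lem:thin}, and in the $\{1\}+\mathcal{I}(\delta)$ case manufacture a bounded unary relation containing $\{-1,1\}$ and avoiding a neighbourhood of $0$, so that Lemma~\ref{lem:smalldelta2} followed by Lemma~\ref{plusminushardness} finishes. Your relation $V$ is a sound, symmetrised version of the paper's approximate translate of $U$: the two witnesses are $y_{1,2}=\tfrac{(p+q)s}{2}\pm\tfrac{(q-p)x}{2}$, so $s=1$ gives $\{-1,1\}\subseteq V$, and for $\delta$ small and $|x|$ small both witnesses are forced into the excluded interval around $(p+q)/2$, so $V$ avoids a neighbourhood of $0$. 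Note also that $V$ is automatically bounded whenever $U$ is bounded in at least \emph{one} direction, since $y_1$ and $y_2$ move in opposite directions as $|x|\to\infty$ and one of them must then leave $U$; so the boundedness problem you flag is narrower than you suggest.

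The genuine gap is the remaining case, where $U$ is unbounded in both directions (e.g.\ $U={\mathbb Q}\setminus(0,1)$), which the hypotheses do not exclude. There $V$ really is unbounded, and your proposed repair does not go through: Lemma~\ref{halfboundgenerate}(2) is stated only for relations unbounded in \emph{one} direction, so it does not apply, and even if some bounded non-empty subrelation of $U$ were produced by such means, nothing guarantees that it still contains two points straddling an excluded interval with the midpoint inside the gap, which is exactly what your construction of $V$ requires. The paper avoids ever having to bound $U$: it intersects its translate of $U$ with an analogous approximate translate $T_\infty$ of $T$, normalised (by choosing $p',q'\in T$ with $|q'-p'|=2$ and translating their midpoint approximately to $0$) so that $\{-1,1\}\subseteq T_\infty$; boundedness of the intersection then comes for free from the hypothesis that $T$ is a bnu. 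Your argument is repaired the same way: replace $V$ by $V\cap T_\infty$, which is bounded, still contains $\{-1,1\}$ (hence meets its own negation), and still avoids a neighbourhood of $0$, so Lemma~\ref{lem:smalldelta2} applies.
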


\begin{proof}
We show that
$\langle R_+, T, U \rangle$ contains a unary relation $\{-1,1\}+ \mathcal{I}(\frac{1}{2})$.
The result then follows from Lemma~\ref{plusminushardness}.
If already $\langle R_+, T \rangle$ contains such a relation, then we are done.
Otherwise, by Lemma~\ref{lem:thin}, $\langle R_+, T \rangle$ contains a unary relation
$U_{\delta} \in \{1\} + \mathcal{I}(\delta)$, for every $\delta > 0$.
Since $U$ excludes an interval, there are points
$p, q \in U$ and $0 < \delta_1 < \delta_2 < 1$ such that $p+(q-p)y \not\in U$ whenever
$\delta_1 \leq y \leq \delta_2$.
Furthermore, $p$ and $q$ can be chosen so that $\delta_1 < 1/2 < \delta_2$,
and by scaling $U$, we may assume that $|q-p| = 2$.
Let $m = (p+q)/2$.
Note that $T \cap (m-\epsilon',m+\epsilon') = \emptyset$, for some $\epsilon' > 0$.
Similarly, possibly by first scaling $T$, let $p', q' \in T$ be distinct points with $|q'-p'| = 2$
and let $m' = (p'+q')/2$.

Now, define the following unary relations:
\begin{align*}
T_{0}(x) &\equiv
\exists y \exists z \: . \: 
U_\delta(y) \wedge z = x - y \cdot m \wedge U(z) \\
T_{\infty}(x) &\equiv
\exists y' \exists z' \: . \: 
U_\delta(y') \wedge z' = x - y' \cdot m' \wedge T(z').
\end{align*}
The relations $T_0$ and $T_{\infty}$ are roughly translations of $U$ and $T$,
where the constant relation $\{1\}$ has been approximated by the relation $U_\delta$.
Since $1 \in U_\delta$, we have $\{-1, 1\} \subseteq T_0, T_\infty$.
Hence, if $\delta$ is chosen small enough, then the relation $T_0 \cap T_\infty \in \langle R_+, T, U \rangle$ will satisfy the conditions of Lemma~\ref{lem:smalldelta2}.
This finishes the proof.
\end{proof}

\section{Semilinear expansions of $\{R_+\}$} \label{sec:mainresult}

In this section, we prove our main result: Theorem~\ref{superresult}.
We divide the proof into two parts.
Consider the following two properties:

\begin{itemize}
\item[(P$_{0}$)] There is a unary relation $U$ in $\langle \Gamma \rangle$ that contains a positive point and satisfies $U \cap (0,\epsilon) = \emptyset$ for some $\epsilon > 0$.
\item[(P$_{\infty}$)] There is a unary relation $U$ in $\langle \Gamma \rangle$ that contains a positive point and satisfies $U \cap (M,\infty) = \emptyset$ for some $M < \infty$.
\end{itemize}

In the first part of the proof (Section~\ref{property-classification}), we consider
constraint languages that simultaneously satisfy the properties (P$_0$) and (P$_{\infty}$).
In the second part (Section~\ref{noproperty-classification}), we consider
constraint languages that violates at least one of them.
In both parts, we give a detailed description of the boundary between easy and hard problems.
By combining Theorem~\ref{thm:main} and Theorem~\ref{nopropertymainresult}, we establish
Theorem~\ref{superresult}.

In addition to the two algorithmic results in Section~\ref{sec:affine}, 
there is also a trivial source of tractability.
A relation is \emph{$0$-valid} if it contains the tuple $(0,\dots,0)$ and
a constraint language is \emph{$0$-valid} if every relation in it is $0$-valid.
Every instance of a CSP over a $0$-valid constraint language admits the solution that
assigns $0$ to every variable.

When we consider constraint languages that are not $0$-valid,
the following lemma shows that there is always a pp-definable 
unary relation that is not $0$-valid.

\begin{lemma} \label{zerovalidlemma}
Let $\{R_+\} \subseteq \Gamma \subseteq SL_{\mathbb Q}[{\mathbb Q}]$ be a constraint language.
If $\Gamma$ is not $0$-valid,
then $\langle \Gamma \rangle$ contains a non-empty unary relation that is not $0$-valid.
\end{lemma}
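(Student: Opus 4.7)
The plan is to exhibit the required unary relation as a pp-definable slice of a non-$0$-valid relation $R \in \Gamma$, taken along a line through the origin. Since $\Gamma$ is not $0$-valid, there is some $k$-ary $R \in \Gamma$ with $(0,\dots,0) \notin R$. Because we insist that all relations are non-empty, we may fix some point $p = (p_1,\dots,p_k) \in R$.

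Next, I define the unary relation parametrising the line from the origin through $p$:
\[
U(y) \equiv \exists x_1,\dots,x_k \: . \: R(x_1,\dots,x_k) \wedge \bigwedge_{i=1}^k x_i = p_i \cdot y.
\]
Each conjunct $x_i = p_i y$ is a \emph{homogeneous} linear equation with rational coefficients (this is the reason for choosing the line through the origin rather than between two arbitrary points). By Lemma~\ref{generateequations}, every such homogeneous equation is pp-definable in $\{R_+\}$. Hence $U$ is pp-definable in $\{R_+, R\} \subseteq \Gamma$, so $U \in \langle \Gamma \rangle$.

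It remains to observe two things. First, $U$ is non-empty because $1 \in U$: taking $y = 1$ and $x_i = p_i$ satisfies $R(x_1,\dots,x_k)$ since $p \in R$. Second, $U$ is not $0$-valid: if $0 \in U$, then the conjuncts $x_i = p_i \cdot 0 = 0$ would force $x_i = 0$ for all $i$, and then $R(0,\dots,0)$ would hold, contradicting the choice of $R$. The only mild subtlety in the argument is precisely the restriction to homogeneous equations forced by having only $R_+$ (and not $\{1\}$) available, which is resolved automatically by parametrising through the origin; no deeper obstacle arises.
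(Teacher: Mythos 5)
Your proof is correct, and it takes a genuinely different (and arguably slicker) route than the paper's. The paper picks a tuple $t \in R$ with the \emph{maximum} number $m$ of zero entries, pins those $m$ coordinates to the pp-definable constant $0$, existentially quantifies away all but one of the remaining coordinates, and gets non-$0$-validity from the maximality of $m$ (a $0$ in the free coordinate would yield a tuple of $R$ with $m+1$ zeroes). Your construction instead takes an arbitrary $p \in R$ and forms the relation $\mathcal{L}_{R,\mathbf{0},p}$, the parametrisation of $R$ along the line through the origin and $p$; the key observation — that choosing the origin as one endpoint makes all the defining equations homogeneous, so Lemma~\ref{generateequations} gives pp-definability from $\{R_+\}$ alone without needing $\{1\}$ — is exactly right and is the point where a careless version of this argument would fail. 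Your version avoids the extremal argument entirely and makes non-emptiness ($1 \in U$) and non-$0$-validity ($0 \in U$ would force $(0,\dots,0) \in R$) immediate; the paper's version is marginally more self-contained in that it only uses the constant $\{0\}$ and projection rather than general homogeneous equations, but both yield the lemma with essentially the same amount of work.
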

\begin{proof}
By assumption, $\Gamma$ contains some $k$-ary relation $R$ that is not $0$-valid,
and by our definition of a constraint language, $R$ is non-empty.
Let $t \in R$ be a tuple that contains the largest possible
number $m$ of zeroes. Assume for simplicity that the first $m$ entries
of $t$ equals $0$. Consider the following unary relation in $\langle \Gamma \rangle$.
\[U = \{x \in {\mathbb Q} \mid \exists y_{m+1} \dots y_{k-1} \: . \: R(0,0,\dots,0,y_{m+1},\dots,y_{k-1},x)\}\]
The relation $U$ is non-empty and not $0$-valid.
\end{proof}

\subsection{The case (P$_0$) and (P$_{\infty}$)}
\label{property-classification}

The following theorem covers the case when the constraint language satisfies both
of the properties (P$_0$) and (P$_{\infty}$).
As a corollary, we obtain a complete classification for semilinear constraint languages
containing $\{R_+,\{1\}\}$.
The latter result
is interesting in itself and it will also be used
in Section~\ref{noproperty-classification} and Section~\ref{sec:optresult}.

\begin{theorem}\label{thm:main}
Let $\{R_+\} \subseteq \Gamma \subseteq SL_{\mathbb Q}[{\mathbb Q}]$ be a finite constraint language that satisfies (P$_0$) and (P$_\infty$).
The problem CSP$(\Gamma)$ is in P if
\begin{itemize}
\item
$\Gamma$ is 0-valid (trivially);
\item
$\langle \Gamma \rangle$ does not contain a bnu (by establishing affine consistency); or
\item
all unary relations in $\langle \Gamma \rangle$ are essentially convex (by a reduction to an essentially convex constraint language). 
\end{itemize}
Otherwise, CSP$(\Gamma)$ is NP-hard.
\end{theorem}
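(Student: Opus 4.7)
The plan is to distinguish four cases according to the trichotomy in the statement: (A) $\Gamma$ is $0$-valid, (B) $\Gamma$ is not $0$-valid and $\langle\Gamma\rangle$ has no bnu, (C) $\Gamma$ is not $0$-valid, $\langle\Gamma\rangle$ has a bnu, and every unary in $\langle\Gamma\rangle$ is essentially convex, and (D) $\Gamma$ is not $0$-valid, $\langle\Gamma\rangle$ has a bnu, and some unary in $\langle\Gamma\rangle$ is not essentially convex. Case (A) is immediate via the all-zero assignment. In the remaining cases, Lemma~\ref{zerovalidlemma} supplies a non-empty non-$0$-valid unary $N \in \langle\Gamma\rangle$, which I combine with the witnesses $U_0$ of (P$_0$) and $U_\infty$ of (P$_\infty$) (both rescalable using $R_+$) and, where applicable, with the given bnu or non-essentially-convex unary, to assemble the auxiliary relation demanded by each of the previously established theorems.

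For case (B) the target is Theorem~\ref{thm:affinesolves}, and it suffices to exhibit a non-zero singleton in $\langle\Gamma\rangle$ (from which $\{1\}$ follows by scaling and Lemma~\ref{generateequations}). Look at $U_\infty$: if $U_\infty$ is bounded below as well, then it is a bounded unary containing a positive point, so the no-bnu hypothesis forces $U_\infty = \{p\}$ with $p > 0$; otherwise $U_\infty$ is unbounded in one direction and contains both a positive and (since it is unbounded below) a negative element, so Lemma~\ref{halfboundgenerate}(2) yields a non-empty bounded unary in $\langle\{R_+, U_\infty\}\rangle$ that again collapses to a non-zero singleton by no-bnu. For case (C) the target is Theorem~\ref{thm:redtoessconv}, which calls for a member of $\{1\} + \mathcal{I}(c) \cap \langle\Gamma\rangle$ for some $0 < c < 1$. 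I plan to pick a positive point $p$ of the bnu $B$ (passing to $-B$ if $B$ is entirely non-positive), rescale $U_0$ so that its distinguished positive point lands on $p$, and form $W := B \cap U_0 \in \langle\Gamma\rangle$. Then $W$ is a bounded unary containing $p$ that inherits a gap of the form $(0,\epsilon)$ from $U_0$; since $W$ is essentially convex by the case hypothesis, this gap forces $W \subseteq [0,\infty)$ (otherwise the gap would lie inside $\conv(W)$, contradicting essential convexity). Lemma~\ref{lem:smalldelta1} applied to $W$ delivers the required $U_\delta$, and tractability follows from Theorems~\ref{thm:redtoessconv} and~\ref{thm:essconvtractable}.

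For the hardness case (D) the goal is Lemma~\ref{universalhardness}. One input to the lemma---a unary excluding an interval---is supplied by the non-essentially-convex unary together with Lemma~\ref{fakeomin}. The other input is a bnu bounded away from $0$, which I construct by intersecting $B$ with the symmetrised relation $U_0 \cap (-U_0)$ (pp-definable via $R_+$ and excluding $(-\epsilon_0,0) \cup (0,\epsilon_0)$) and with $N$ (which removes the origin); after rescaling $U_0$ so that the intersection retains at least two points of $B$, the result is the desired bnu. The main obstacle throughout is this non-degeneracy bookkeeping: in cases (B)--(D) I must certify that the constructed intersections are non-empty and, in (C) and (D), that they remain non-constant. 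This amounts to scaling $U_0$, $U_\infty$, and $N$ carefully so that their distinguished points lie inside the relation they are being intersected with, and invoking Lemma~\ref{fakeomin} to guarantee that enough interval structure of $B$ survives the successive intersections.
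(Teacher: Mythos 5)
Your cases (A)--(C) are essentially correct, and in (B) and (C) you take a somewhat different route from the paper: the paper builds one auxiliary relation $T = U' \cap U_0 \cap U_\infty$ (all scaled to contain $1$), places it in the family $\mathcal{U}$ of bounded, non-empty unary relations avoiding a neighbourhood of $0$, and then branches on whether $\mathcal{U}$ contains a bnu, whereas you extract $\{1\}$ directly from $U_\infty$ in case (B) and build the $\{1\}+\mathcal{I}(\delta)$ relation directly from $B \cap U_0$ in case (C), bypassing Lemma~\ref{lem:thin}. Your observations that Lemma~\ref{lem:smalldelta1} tolerates a singleton $W$ and that essential convexity forces $W \subseteq [\epsilon',\infty)$ are correct and make case (C) go through.

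Case (D), however, has a genuine gap. Your bnu bounded away from $0$ is to be $B \cap (U_0 \cap -U_0) \cap N$ after rescaling, but rescaling alone cannot guarantee that this set is a bnu, and in some instances it provably is not. Take $U_0 = \{1\}$, a perfectly legitimate witness of (P$_0$): then $c\,U_0 \cap (-c'\,U_0)$ is empty or a single point for every choice of scalars, so the intersection with $B$ retains at most one point of $B$. The same degeneration occurs whenever $B$ is a finite set whose point ratios are not realised inside $U_0 \cap -U_0$, or whenever $N$ is a singleton; Lemma~\ref{fakeomin} does not rescue you, because neither $B$ nor $U_0$ nor $N$ need contain any non-degenerate interval. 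The missing idea is exactly the one the paper uses (implicitly) when it asserts that $\mathcal{U}$ contains a bnu: if your intersection degenerates to a non-zero constant $\{c\}$, then $\{1\} \in \langle\Gamma\rangle$, and with $\{1\}$ in hand you can \emph{translate} rather than intersect --- the relation $B + d \in \langle\Gamma\rangle$ (pp-definable from $R_+$ and $\{1\}$ via Lemma~\ref{generateequations}) is, for $d$ large enough, a bnu avoiding a neighbourhood of the origin, which is what Lemma~\ref{universalhardness} needs. Translation is not expressible by scaling alone and is unavailable until $\{1\}$ has been derived, so without this extra step (for example on the instance $\Gamma = \{R_+, \{1\}, [-1,1]\cup[2,3]\}$, which falls squarely into your case (D)) your argument does not produce the required relation.
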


\begin{proof}
   Let $\mathcal{U}$ be the set of all bounded, non-empty unary relations $U$ in $\langle \Gamma \rangle$ such that $U \cap (-\epsilon, \epsilon) = \emptyset$ for some $\epsilon > 0$.
   Assume that $\Gamma$ is not $0$-valid.
   First, we show that $\mathcal{U}$ is non-empty.
   By Lemma~\ref{zerovalidlemma}, $\langle \Gamma \rangle$ contains a non-empty unary
   relation that is not $0$-valid. Scale this relation so that it contains $1$ and call the
   resulting relation $U'$.
   Let $U_0 \in \langle \Gamma \rangle$ be a unary relation witnessing (P$_0$) and
   let $U_\infty \in \langle \Gamma \rangle$ be a unary relation witnessing (P$_\infty$).
   Scale $U_0$ and $U_\infty$ so that some positive point from each coincides with 1 and let
   $T = U' \cap U_0 \cap U_\infty$.
   If $T$ does not contain a negative point, then $T \in \mathcal{U}$.
   Otherwise, $T$ contains a negative point $b$.
   It follows that $T \cap b\cdot T \in \mathcal{U}$.
   Hence, the set $\mathcal{U}$ is non-empty.
   
   Assume that $\langle \Gamma \rangle$ does not contain a bnu.
   Then, neither does $\mathcal{U}$ and hence $\mathcal{U}$ contains only constants.
   It follows by Theorem~\ref{thm:affinesolves} that establishing affine consistency solves
   CSP($\Gamma$).
      
   Otherwise, $\mathcal{U}$ contains a bnu.
   If all unary relations of $\langle \Gamma \rangle$ are essentially convex, then
   by Lemma~\ref{lem:thin} and
   Theorem~\ref{thm:redtoessconv}, CSP$(\Gamma)$ is equivalent to CSP$(\Gamma')$
   for an essentially convex constraint language $\Gamma'$.
   Tractability follows from Theorem~\ref{thm:essconvtractable}.
   
   Finally, if $\mathcal{U}$ contains a bnu and $\langle \Gamma \rangle$ contains a
   unary relation that excludes an interval, then NP-hardness follows from
   Lemma~\ref{universalhardness}.
\end{proof}

\begin{corollary} \label{linearmainresult}
Let $\{R_+,\{1\}\} \subseteq \Gamma 
\subseteq SL_{\mathbb Q}[{\mathbb Q}]$ 
be a finite constraint language.
The problem CSP$(\Gamma)$ is in P if
$\langle \Gamma \rangle$ does not contain a bnu or if $\Gamma$ is essentially convex.
Otherwise, CSP$(\Gamma)$ is NP-hard.
\end{corollary}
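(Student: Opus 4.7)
The plan is to derive this corollary from Theorem~\ref{thm:main} by exploiting the unary relation $\{1\} \in \Gamma$. The key observation is that $\{1\}$ is bounded, contains a positive point, and is bounded away from both $0$ and $\infty$, so it witnesses (P$_0$) (with, say, $\epsilon = 1/2$) and (P$_\infty$) (with $M = 1$). Hence Theorem~\ref{thm:main} applies to $\Gamma$. Moreover, $\{1\}$ is not $0$-valid, so the trivially tractable case of Theorem~\ref{thm:main} never occurs.

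For the tractable direction, I would handle the two sufficient conditions separately. If $\langle \Gamma \rangle$ contains no bnu, then Theorem~\ref{thm:affinesolves} directly yields a polynomial-time algorithm. If instead $\Gamma$ itself is essentially convex, I would invoke Theorem~\ref{thm:essconvtractable} directly, rather than trying to route through the third tractable case of Theorem~\ref{thm:main}, which uses the formally stronger assumption that \emph{every} unary relation in $\langle \Gamma \rangle$ be essentially convex. This mismatch between the corollary's hypothesis and the corresponding case of Theorem~\ref{thm:main} is the one point that requires some care.

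For the hardness direction, I would assume that $\langle \Gamma \rangle$ contains a bnu $T$ and that $\Gamma$ is not essentially convex, and then apply Lemma~\ref{universalhardness}. There are two ingredients to produce. First, a bnu bounded away from $0$: starting from $T \subseteq [-a,a]$, I would translate by a rational $c > a$ via the pp-definable equation $x = y + c$ (Lemma~\ref{generateequations}), obtaining a bnu $T_c(x) \equiv \exists y \: . \: T(y) \wedge x = y + c$ contained in $(0,\infty)$. Second, a unary relation excluding an interval: picking $R \in \Gamma$ that is not essentially convex, together with witnesses $p,q \in R$, the line parameterisation $\mathcal{L}_{R,p,q}$ lies in $\langle \{R_+, \{1\}, R\} \rangle \subseteq \langle \Gamma \rangle$, inherits non-essential convexity from $p$ and $q$, and hence by Lemma~\ref{fakeomin} excludes an interval. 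Applying Lemma~\ref{universalhardness} to $T_c$ and $\mathcal{L}_{R,p,q}$, together with Lemma~\ref{pp-def}, then yields NP-hardness for CSP$(\Gamma)$. No substantial obstacle is anticipated beyond this bookkeeping.
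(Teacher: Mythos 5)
Your proposal is correct and follows essentially the same route as the paper: tractability via Theorem~\ref{thm:affinesolves} and Theorem~\ref{thm:essconvtractable} (the paper likewise invokes Theorem~\ref{thm:essconvtractable} directly rather than the third case of Theorem~\ref{thm:main}), and hardness ultimately via Lemma~\ref{universalhardness} using the line parameterisation $\mathcal{L}_{R,p,q}$, which is available precisely because $\{1\} \in \Gamma$. The only cosmetic difference is that the paper packages the hardness step by citing Theorem~\ref{thm:main} (whose proof internally produces a bnu bounded away from $0$ by scaling and intersecting), whereas you apply Lemma~\ref{universalhardness} directly after translating the bnu by a constant $c>a$; both constructions are valid.
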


\begin{proof}
   If $\langle \Gamma \rangle$ does not contain a bnu, then tractability follows
   from Theorem~\ref{thm:affinesolves}.
   If all relations in $\Gamma$ are essentially convex, then tractability follows
   from Theorem~\ref{thm:essconvtractable}.
   
   Otherwise, $\langle \Gamma \rangle$ contains a bnu, and
   since $\{R_+, \{1\}\} \subseteq \Gamma$, 
   $\langle \Gamma \rangle$ also contains a unary relation that is not essentially convex.
   Since $\{1\} \in \Gamma$ is not $0$-valid,
   NP-hardness then follows from Theorem~\ref{thm:main}.
\end{proof}

\subsection{The case $\neg ($P$_0)$ or $\neg ($P$_\infty)$}
\label{noproperty-classification}

Let $\{R_+\} \subseteq \Gamma \subseteq SL_{\mathbb Q}[{\mathbb Q}]$ be a constraint language
such that either (P$_0$) or (P$_{\infty}$) is violated.
In this section, we show that $\Gamma$ can be replaced by an equivalent constraint
language of a restricted type.
Let $HSL_{\mathbb Q}[{\mathbb Q}]$ denote the set of relations that are
are finite unions of homogeneous linear sets.
We will call such relations \emph{homogeneous} semilinear relations.
We remind the reader that we can always pp-define the relations $\{0\}$ and
$M=\{(x,-x) \mid x \in {\mathbb Q}\}$ in $\Gamma$: $x=0 \Leftrightarrow R_+(x,x,x)$
and $(x,y) \in M \Leftrightarrow R_+(x,y,0) \Leftrightarrow \exists z.R_+(x,y,z) \wedge R_+(z,z,z)$. 
Hence, we can 
freely use the constant 0 and negation in forthcoming pp-definitions.

From now on, let
${\mathbb Q}_+=\{a \in {\mathbb Q} \; | \; a > 0\}$,
${\mathbb Q}_-=\{a \in {\mathbb Q} \; | \; a < 0\}$, and
${\mathbb Q}_{\neq 0} = {\mathbb Q}_- \cup {\mathbb Q}_+ = {\mathbb Q} \setminus \{0\}$.
For a relation $R \in SL_{\mathbb Q}[{\mathbb Q}]$,
     define 
     $\cone(R) = \{ \lambda \cdot x \mid \lambda \in {\mathbb Q_+}, x \in R \}$
     to be the \emph{cone} over $R$.
For a constraint language $\Gamma \subseteq SL_{\mathbb Q}[{\mathbb Q}]$,
let $\cone(\Gamma) = \{ \cone(R) \mid R \in \Gamma \}$.
Note that,
for $\{R_+\} \subseteq \Gamma \subseteq SL_{\mathbb Q}[{\mathbb Q}]$,
we have $\cone(\Gamma) \subseteq HSL_{\mathbb Q}[{\mathbb Q}]$, and
since $\cone(R_+) = R_+$, we also have $R_+ \in \cone(\Gamma)$.

For an assignment $s : V \to {\mathbb Q}$ and a rational $c \in {\mathbb Q}$, let
$c \cdot s$ denote the assignment $x \mapsto c \cdot s(x)$.

\begin{theorem} \label{thm:mainreduction}
Let 
$\{R_+\} \subseteq \Gamma \subseteq SL_{\mathbb Q}[{\mathbb Q}]$
be a constraint language such that either (P$_0$) or (P$_\infty$) is violated.
Then, CSP$(\Gamma)$ is equivalent to CSP$(\cone(\Gamma))$.
\end{theorem}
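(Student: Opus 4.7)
The easy direction is immediate from the inclusion $R \subseteq \cone(R)$ valid for every non-empty $R$: given a solution to an instance $\Phi$ of CSP$(\Gamma)$, the same assignment satisfies the translated instance $\Phi'$ of CSP$(\cone(\Gamma))$ obtained by replacing every relation $R$ by $\cone(R)$. The task is therefore to handle the converse.

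So assume $s' : V \to {\mathbb Q}$ is a solution to $\Phi'$. The plan is to look for a single positive rational $\mu$ such that $\mu \cdot s'$ is a solution to $\Phi$. For each constraint $R_i(x_{i_1},\ldots,x_{i_k})$ of $\Phi$, I would introduce the unary relation
\[U_i(\mu) \equiv \exists y_1,\ldots,y_k.\, R_i(y_1,\ldots,y_k) \wedge \bigwedge_{j=1}^{k} (y_j = s'(x_{i_j}) \cdot \mu),\]
so that $\mu \in U_i$ is exactly the statement ``$\mu \cdot s'$ satisfies $R_i$''. The problem thereby reduces to producing a $\mu > 0$ in $\bigcap_i U_i$.

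Two facts about $U_i$ do the work. First, each equation $y_j = s'(x_{i_j}) \cdot \mu$ is a homogeneous linear equation with a rational coefficient, so by Lemma~\ref{generateequations} it is pp-definable in $\{R_+\}$; hence $U_i \in \langle \Gamma \rangle$. Second, since $(s'(x_{i_1}),\ldots,s'(x_{i_k})) \in \cone(R_i)$, the definition of $\cone$ gives some $\mu_i > 0$ with $\mu_i \cdot (s'(x_{i_1}),\ldots,s'(x_{i_k})) \in R_i$, so $\mu_i \in U_i$ is a positive point. (The degenerate case $s'(x_{i_1}) = \cdots = s'(x_{i_k}) = 0$ forces $(0,\ldots,0) \in R_i$ and hence $U_i = {\mathbb Q}$, which is harmless.)

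Now the hypothesis that (P$_{0}$) or (P$_{\infty}$) is violated applies to each $U_i$, since $U_i$ is unary, pp-definable in $\Gamma$, and has a positive element. If (P$_{\infty}$) is violated, every $U_i$ is unbounded above, so Lemma~\ref{fakeomin} forces $U_i$ to contain a half-line $(M_i, \infty)$; taking $M = \max_i M_i$, any $\mu > M$ lies in $\bigcap_i U_i$. If (P$_{0}$) is violated, every $U_i$ meets $(0,\epsilon)$ for all $\epsilon > 0$, so by Lemma~\ref{fakeomin} (only finitely many components, of which one must have left endpoint $0$) $U_i$ contains an interval $(0, \epsilon_i)$; any $\mu$ in $(0, \min_i \epsilon_i)$ then lies in $\bigcap_i U_i$. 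In either case $\mu \cdot s'$ is the desired solution to $\Phi$. The main conceptual step is choosing the right unary pp-definable sets $U_i$ so that the (P$_{0}$)/(P$_{\infty}$) hypothesis is directly applicable; once that is set up, the $o$-minimality content of Lemma~\ref{fakeomin} automatically upgrades ``unbounded above'' or ``accumulating at $0$'' to an actual interval, which makes the finite intersection non-empty.
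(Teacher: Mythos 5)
Your proof is correct and follows essentially the same route as the paper: the same unary relations $U_i$ (the paper's $\psi(y)$), the same appeal to Lemma~\ref{generateequations} to place them in $\langle \Gamma \rangle$, and the same use of the negated property together with Lemma~\ref{fakeomin} to extract a uniform scaling factor $\mu$ (the paper writes out only the $\neg(\mathrm{P}_0)$ case, and your per-constraint treatment of the all-zero case is a harmless variant of its global one). The only substantive thing the paper's proof contains that yours does not is a verification, by pruning a DNF representation, that each $\cone(R)$ is itself a \emph{homogeneous} semilinear relation; this is not literally part of the equivalence statement, but it is needed for the theorem's later use, so it would have to be supplied separately.
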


\begin{proof} 
     Assume that $\Gamma$ does not satisfy (P$_0$).
     The proof for the case when $\Gamma$ does not satisfy (P$_\infty$)
follows
     similarly.
     
     Let $R$ be a relation in $\Gamma$ and
     let $\phi = \phi_1 \vee \dots \vee \phi_k$ 
     be a quantifier-free
     DNF formula for $R$, where each formula $\phi_j$ is conjunction of
     strict and non-strict inequalities.
     Remove every disjunct $\phi_j$ that contains a non-homogeneous
inequality which is not
     satisfied by the $(0,\dots,0)$-tuple.
     Let $S$ be the relation defined by the resulting formula $\phi' =
\phi'_1 \vee \dots \vee \phi'_{k'}$.
     Since $\Gamma$ does not satisfy (P$_0$), it follows that for every
point $x$ in $R \setminus S$,
     there is a point $x'$ in $S$ that lies on the open line segment between
$(0,\dots,0)$ and $x$.
     Therefore, $\cone(S) = \cone(R)$.
     Next, for each $j$, let $S_j$ be the relation defined by $\phi'_j$.
     Remove every non-homogeneous inequality from $\phi'_j$, let
$\phi''_j$ be the resulting formula
     and let $T_j$ be the relation defined by $\phi''_j$.
     Clearly, $\cone(S_j) \subseteq \cone(T_j)$.
     Let $\lambda \cdot x$ be a point in $\cone(T_j)$ with $\lambda \in
{\mathbb Q}_+$ and $x \in T_j$.
     Since every non-homogeneous inequality in $\phi'_j$ is satisfied by
the $(0,\dots,0)$-tuple, it follows
     that they are satisfied by every point in a small ball $B$ centred
at $(0,\dots,0)$.
     Let $x'$ be a point in $B$ on the line segment between $(0,\dots,0)$ and $x$ and
     note that every homogeneous inequality in $\phi'_j$ satisfies $x$
and therefore also $x'$.
     It follows that $x'$ is in $S_j$ so $x$ and $\lambda \cdot x$ are
in $\cone(S_j)$,
     which shows that $\cone(T_j) \subseteq \cone(S_j)$.
     Let $\phi'' = \phi''_1 \vee \dots \vee \phi''_{k'}$ and let
     $T$ be the relation defined by $\phi''$.
     Then, $\cone(R) = \cone(T)$ and $\cone(T) = T$ since $\phi''$ only
contains homogeneous inequalities.
     Therefore, $\phi''$ defines $\cone(R)$, so $\cone(R) \in HSL_{\mathbb
Q}[{\mathbb Q}]$
     and $\Gamma' \subseteq HSL_{\mathbb Q}[{\mathbb Q}]$.

     For the equivalence of CSP$(\Gamma)$ and CSP$(\Gamma')$,
     arbitrarily choose an instance $\Phi$ of CSP$(\Gamma)$.
     Construct an instance $\Phi'$ of CSP$(\Gamma')$ by
     replacing each occurrence of a relation $R$ in $\Phi$ by $\cone(R)$.
     Every solution to $\Phi$ is also a solution to $\Phi'$.
     It remains to show that if $\Phi'$ has a solution, then so does $\Phi$.

     Let $s : \vars(\Phi') \to {\mathbb Q}$ be a solution to $\Phi'$ and assume without
loss of generality that $s$ is integral.
     If $s \equiv 0$, then it follows immediately that $s$ is a solution to
$\Phi$ since,
for every $R \in \Gamma$,
 $(0,\dots,0) \in \cone(R)$ if and only if $(0,\dots,0) \in R$.
 Assume therefore that $s \not\equiv 0$.
     Let $R_1(x_1), \dots, R_m(x_m)$ be the constraints of $\Phi$.
     For every $j$, $s(x_j) \in \cone(R_j)$ holds.
     By the construction of $\cone(R_j)$, this implies that $r \cdot
s(x_j) \in
R_j$, for some $r > 0$.
     Define the unary relation $U \in \langle \Gamma \rangle$ by the
pp-formula
     $\psi(y) \equiv \exists x \: . \:  x = y \cdot s(x_j) \wedge R_j(x)$.
     Now $r \in U$, so by the assumption on $\Gamma$, it follows that
$(0, \epsilon_j) \subseteq U$,
     for some $\epsilon_j > 0$, and hence that $y \cdot s(x_j) \in R_j$,
for all $y \in (0, \epsilon_j)$.
     Let $\epsilon = \min_j {\epsilon_j}$.
     Then $(\epsilon/2) \cdot s$ is a solution to $\Phi$.
\end{proof}

By Theorem~\ref{thm:mainreduction},
it is thus sufficient to determine 
the computational complexity of CSP$(\Gamma)$ for
$\{R_+\} \subseteq \Gamma \subseteq HSL_{\mathbb Q}[{\mathbb Q}]$.

Given a relation $R \subseteq {\mathbb Q}^k$, we say that a function $e:{\mathbb Q} \rightarrow {\mathbb Q}$
is an {\em endomorphism} of $R$ if for every tuple $(a_1,\dots,a_k) \in R$, the
tuple $(e(a_1),\dots,e(a_k)) \in R$. One may equivalently view an endomorphism
as a homomorphism from $R$ to $R$. We extend this notion to constraint languages $\Gamma=\{R_1,\dots,R_n\}$: 
a function $e:{\mathbb Q} \rightarrow {\mathbb Q}$ is an endomorphism of $\Gamma$
if $e$ is an endomorphism of $R_i$, $1 \leq i \leq n$.

\begin{lemma} \label{endolemma}
Let $a > 0$ be a rational number.
Every $R \in HSL_{\mathbb Q}[{\mathbb Q}]$ has the endomorphism $e(x)=a \cdot x$.
\end{lemma}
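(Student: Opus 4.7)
The plan is to unpack the definition of $HSL_{\mathbb{Q}}[{\mathbb{Q}}]$ and observe that multiplication by a positive rational preserves every homogeneous linear inequality. The statement is essentially a direct verification, so the proof will be short.

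First, I would fix $R \in HSL_{\mathbb Q}[{\mathbb Q}]$ and write it, by definition, as a finite union $R = R_1 \cup \dots \cup R_m$ of homogeneous linear sets. Each $R_j$ is cut out by a finite conjunction of (strict or non-strict) inequalities of the form $\sum_{i=1}^{k} c_i x_i \,\square\, 0$ with $\square \in \{<, \leq\}$ and $c_i \in {\mathbb Q}$; the key point is that every such inequality has right-hand side $0$, i.e.\ no constant term.

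Next, take an arbitrary tuple $t = (t_1, \dots, t_k) \in R$ and pick an index $j$ with $t \in R_j$. I then apply $e$ coordinatewise, obtaining $e(t) = (a \cdot t_1, \dots, a \cdot t_k)$, and check each defining inequality of $R_j$ on $e(t)$. For any such inequality $\sum_i c_i x_i \,\square\, 0$, one has
\[
\sum_{i=1}^{k} c_i \,(a \cdot t_i) \;=\; a \cdot \sum_{i=1}^{k} c_i t_i,
\]
and since $a > 0$, multiplying by $a$ preserves both strict and non-strict inequalities with right-hand side $0$. Hence $e(t)$ satisfies every defining inequality of $R_j$, so $e(t) \in R_j \subseteq R$, which is exactly what is needed for $e$ to be an endomorphism of $R$.

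The argument has no real obstacle; it depends only on two observations, namely that homogeneity forces all constant terms to vanish and that $a > 0$ preserves the direction of each inequality. The positivity hypothesis on $a$ is essential here, since multiplying by a negative number would reverse inequalities of the form $\sum_i c_i x_i \leq 0$ or $\sum_i c_i x_i < 0$ and thus fail for relations like $\{x \leq 0\}$.
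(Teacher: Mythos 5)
Your proof is correct and follows essentially the same route as the paper's: decompose $R$ into homogeneous linear sets, observe that multiplying by a positive rational preserves each homogeneous (strict or non-strict) inequality since the constant term is zero, and conclude tuple-by-tuple. No gaps.
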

\begin{proof}
We know that $R$ can be written as
$R = \bigcup_{i=1}^m H_i$ where $H_i$, $1 \leq i \leq m$, is defined by a (finite) system
of homogeneous linear (strict or non-strict) inequalities. Consider an inequality
$\sum_{i=1}^n c_i \cdot x_i \geq 0$ in such a system.
We immediately see that 
\[\sum_{i=1}^n c_i \cdot x_i \geq 0 \Leftrightarrow
a \cdot \sum_{i=1}^n c_i \cdot x_i \geq 0 \Leftrightarrow
\sum_{i=1}^n a \cdot c_i \cdot x_i \geq 0 \Leftrightarrow
\sum_{i=1}^n c_i \cdot e(x_i) \geq 0.\]
This equivalence also holds if we consider strict inequalities.
Therefore, each $H_i$, $1 \leq i \leq m$, has the endomorphism $e$.

Now, arbitrarily choose
a tuple
$t = (t_1,\ldots,t_k) \in R$ and assume that $t \in H_i$.
It follows that $(e(t_1),\ldots,e(t_k)) \in H_i \subseteq R$,
so the function $e$ is an endomorphism of $R$.
\end{proof}

A direct consequence of Lemma~\ref{endolemma} is the following: if an instance $\Phi$
of CSP$(HSL_{\mathbb Q}[{\mathbb Q}])$ has a solution $s$, then $a \cdot s$ is a solution
for every rational number $a > 0$.

The complexity classification of constraint languages that violate either
$(P_0)$ or $(P_\infty)$, in Theorem~\ref{nopropertymainresult}, 
follows from two intermediate results
which we now present in Lemma~\ref{pointlemma} and Lemma~\ref{homogeneouscaselemma}.

\begin{lemma} \label{pointlemma}
Let $\Gamma$ be a subset of $HSL_{\mathbb Q}[{\mathbb Q}]$ and let 
$U$ be a unary relation in $\langle \Gamma \rangle$. If $U$ contains
an element $p > 0$, then ${\mathbb Q}_+ \subseteq U$.
If $U$ contains
an element $p < 0$, then ${\mathbb Q}_- \subseteq U$.
\end{lemma}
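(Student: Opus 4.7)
The plan is to use Lemma~\ref{endolemma} together with the standard fact that pp-definitions preserve endomorphisms. Recall that an endomorphism of a constraint language $\Gamma$ is automatically an endomorphism of every relation in $\langle \Gamma \rangle$: if a relation $V(x_1,\dots,x_k)$ is pp-defined by $\exists y_1,\dots,y_n \, \psi$ with $\psi$ a conjunction of atoms over $\Gamma$ (and equalities), then applying a map $e$ coordinatewise to any tuple in $V$, together with $e$ applied to the witnesses for $y_1,\dots,y_n$, still satisfies every conjunct of $\psi$ because each relation of $\Gamma$ is closed under $e$. In particular, this applies to the unary relation $U$ in the statement.

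By Lemma~\ref{endolemma}, for every rational $a > 0$ the map $e_a(x) = a \cdot x$ is an endomorphism of every relation in $HSL_{\mathbb Q}[{\mathbb Q}]$, and hence of every relation in $\Gamma$. By the observation above, $e_a$ is therefore an endomorphism of $U$.

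Now suppose $p \in U$ with $p > 0$, and let $q \in {\mathbb Q}_+$ be arbitrary. Setting $a = q/p$, we have $a > 0$, so $e_a$ is an endomorphism of $U$; applied to $p$ it yields $e_a(p) = (q/p) \cdot p = q \in U$. Hence ${\mathbb Q}_+ \subseteq U$. The negative case is entirely analogous: if $p \in U$ with $p < 0$ and $q \in {\mathbb Q}_-$, then again $a = q/p > 0$, so $e_a$ is an endomorphism of $U$ and $q = e_a(p) \in U$, giving ${\mathbb Q}_- \subseteq U$.

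There is no real obstacle here; the only thing to be mindful of is to justify that endomorphisms of $\Gamma$ lift to endomorphisms of pp-definable relations, and that the scaling factor $q/p$ is strictly positive in both cases (so that Lemma~\ref{endolemma} applies), which is why the statement splits the positive and negative elements of $U$ into two separate conclusions rather than mixing them.
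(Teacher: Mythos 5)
Your proof is correct and follows essentially the same route as the paper: apply Lemma~\ref{endolemma} to get the endomorphism $e(x) = (q/p)\cdot x$ with $q/p > 0$ and conclude $q = e(p) \in U$. Your explicit justification that endomorphisms of $\Gamma$ lift to pp-definable relations such as $U$ is a step the paper leaves implicit, and it is argued correctly.
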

\begin{proof}
Let $q \in {\mathbb Q}$ be any element with the same sign as $p$.
By Lemma~\ref{endolemma}, $e(x) = (q/p) \cdot x$ is an endomorphism of $U$.
Since $p \in U$, it follows that $q = e(p) \in U$.
\end{proof}

\begin{lemma} \label{homogeneouscaselemma}
Let $\{R_+\} \subseteq \Gamma \subseteq HSL_{\mathbb Q}[{\mathbb Q}]$ be a finite constraint language.
Then, either
\begin{itemize}
\item
$\Gamma$ is $0$-valid; or
\item
CSP$(\Gamma)$ is polynomial-time equivalent to CSP$(\Gamma \cup \{\{1\}\})$.
\end{itemize}
\end{lemma}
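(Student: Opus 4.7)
If $\Gamma$ is $0$-valid there is nothing to prove, so assume $\Gamma$ is not $0$-valid. The inclusion $\Gamma \subseteq \Gamma \cup \{\{1\}\}$ gives a trivial reduction from CSP$(\Gamma)$ to CSP$(\Gamma \cup \{\{1\}\})$, so the plan is to build a polynomial-time reduction in the reverse direction. The starting point is Lemma~\ref{zerovalidlemma}, which furnishes a non-empty, not $0$-valid unary relation $U \in \langle \Gamma \rangle$. By Lemma~\ref{pointlemma}, every unary relation in $\langle \Gamma \rangle$ belongs to the list $\emptyset, \{0\}, {\mathbb Q}_+, {\mathbb Q}_+ \cup \{0\}, {\mathbb Q}_-, {\mathbb Q}_- \cup \{0\}, {\mathbb Q}_{\neq 0}, {\mathbb Q}$, so $U$ must be one of ${\mathbb Q}_+$, ${\mathbb Q}_-$, or ${\mathbb Q}_{\neq 0}$. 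Moreover, ${\mathbb Q}_- \in \langle \Gamma \rangle$ implies ${\mathbb Q}_+ = -{\mathbb Q}_- \in \langle \Gamma \rangle$ via $M$, so up to this replacement only ${\mathbb Q}_+$ and ${\mathbb Q}_{\neq 0}$ need to be considered for $U$.

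Given an instance $\Phi$ of CSP$(\Gamma \cup \{\{1\}\})$ with the $\{1\}$-constraints applied to variables $x_1, \dots, x_k$, form $\Phi'$ over $\Gamma$ by deleting those constraints and, when $k \geq 1$, adding $U(x_1)$ together with pp-defined equalities $x_1 = x_i$ for $i = 2, \dots, k$ (pp-definable using only $R_+$ and the constant $0$). Since $1 \in U$, any solution of $\Phi$ is a solution of $\Phi'$. Conversely, a solution $s'$ to $\Phi'$ satisfies $c := s'(x_1) = \dots = s'(x_k) \neq 0$. When $U = {\mathbb Q}_+$, we have $c > 0$, so Lemma~\ref{endolemma} ensures that the scaling $e(x) = (1/c) \cdot x$ is an endomorphism of every $R \in \Gamma$ and $e \circ s'$ is a solution to $\Phi$.

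The main obstacle is the case $U = {\mathbb Q}_{\neq 0}$, where $c$ may be negative and Lemma~\ref{endolemma} does not deliver a negative scaling factor. The key step is to show that in this case every $R \in \Gamma$ is invariant under coordinatewise negation, allowing us to replace $s'$ with $-s'$ when $c < 0$ and then scale as before. Under the case assumption, ${\mathbb Q}_+ \notin \langle \Gamma \rangle$ (else we are back in the previous case), and consequently ${\mathbb Q}_+ \cup \{0\} \notin \langle \Gamma \rangle$ either, since its intersection with ${\mathbb Q}_{\neq 0}$ would otherwise pp-define ${\mathbb Q}_+$; symmetric reasoning excludes ${\mathbb Q}_-$ and ${\mathbb Q}_- \cup \{0\}$. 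Thus any unary relation in $\langle \Gamma \rangle$ containing $1$ must be ${\mathbb Q}_{\neq 0}$ or ${\mathbb Q}$, both of which contain $-1$. For $R \in \Gamma$ of arity $m$ and any $(a_1, \dots, a_m) \in R$, the unary pp-definition
\[
\phi(x) \equiv \exists y_1, \dots, y_m \:.\: R(y_1, \dots, y_m) \wedge \bigwedge_{i=1}^m y_i = a_i \cdot x
\]
lies in $\langle \Gamma \rangle$ (its inner equations are homogeneous, so pp-definable by Lemma~\ref{generateequations}) and satisfies $\phi(1)$; hence $\phi(-1)$ holds and $(-a_1, \dots, -a_m) \in R$, proving $-R = R$. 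Since the added constraints in $\Phi'$ are themselves negation-invariant, $-s'$ remains a solution to $\Phi'$, and we can assume $c > 0$ before scaling.
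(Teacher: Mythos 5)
Your proof is correct and follows the same skeleton as the paper's: obtain a non-empty, not $0$-valid unary relation from Lemma~\ref{zerovalidlemma}, use Lemma~\ref{pointlemma} to pin it down to ${\mathbb Q}_+$, ${\mathbb Q}_-$, or ${\mathbb Q}_{\neq 0}$, substitute it for the $\{1\}$-constraints, and rescale a solution of the modified instance via Lemma~\ref{endolemma}. The one place where you genuinely diverge is the ${\mathbb Q}_{\neq 0}$ case, where a solution may assign a negative value to the distinguished variable. The paper handles this non-constructively: if every solution assigned a negative value to that variable, projecting the solution set onto it would yield a non-empty unary relation contained in ${\mathbb Q}_-$, which by Lemma~\ref{pointlemma} would equal ${\mathbb Q}_-$, contradicting the case hypothesis that ${\mathbb Q}_-\notin\langle\Gamma\rangle$. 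You instead prove the stronger structural fact that in this case every $R\in\Gamma$ satisfies $-R=R$, via the pp-definable unary relation $\{x\mid (a_1x,\dots,a_mx)\in R\}$ for $(a_1,\dots,a_m)\in R$, which contains $1$ and hence, by your classification of the unary relations in $\langle\Gamma\rangle$, also $-1$; this lets you negate any solution outright before rescaling. Both arguments are sound and of comparable length; yours buys an explicit symmetry of $\Gamma$ and a constructive repair of the solution (and correctly notes that the auxiliary constraints added in the reduction are also negation-invariant), while the paper's gets by with a single application of Lemma~\ref{pointlemma} to a projection of the solution set.
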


\begin{proof}
Assume that $\Gamma$ is not $0$-valid.
By Lemma~\ref{zerovalidlemma}, $\langle \Gamma \rangle$ contains a non-empty unary relation that
is not $0$-valid.
We consider 
three different cases.
\medskip

\noindent
{\em Case 1.} $\langle \Gamma \rangle$ contains a non-empty unary relation $U$ such that $0 \not\in U$
and $U \subseteq {\mathbb Q}_+$. By Lemma~\ref{pointlemma}, ${\mathbb Q}_+ \subseteq U$ so
$U={\mathbb Q}_+$. We claim that
CSP$(\Gamma \cup \{\{1\},{\mathbb Q}_+\})$ is polynomial-time equivalent to
CSP$(\Gamma \cup \{{\mathbb Q}_+\})$. The polynomial-time reduction from right to left
is trivial. To show the other direction, let $\Phi$ be an arbitrary instance of
CSP$(\Gamma \cup \{\{1\},{\mathbb Q}_+\})$. Assume without loss of generality that
the relation $\{1\}$ appears in exactly one constraint $\{1\}(x)$. Construct $\Phi'$
by replacing this constraint with ${\mathbb Q}_+(x)$.

If $\Phi'$ has no solution, then $\Phi$ has no solution.
Suppose instead that $\Phi'$ has the solution $s$. Then we know
that $s(x) > 0$. Choose $a \in {\mathbb Q}$ such that $a \cdot s(x) = 1$.
By Lemma~\ref{endolemma},
the function $a \cdot s$ is then a solution to $\Phi$.
\medskip

\noindent
{\em Case 2.} $\langle \Gamma \rangle$ contains a non-empty unary relation $U$ such that $0 \not\in U$
and $U \subseteq {\mathbb Q}_-$. By Lemma~\ref{pointlemma}, ${\mathbb Q}_- \subseteq U$ so
$U={\mathbb Q}_-$. We can now pp-define ${\mathbb Q}_+$ since
$x > 0 \Leftrightarrow -x < 0$ and go back to Case 1.
\medskip

\noindent
{\em Case 3.} $\langle \Gamma \rangle$ contains a non-empty unary relation $U$ such that $0 \not\in U$
and no unary relation $U' \in \langle \Gamma \rangle$ equals ${\mathbb Q}_+$ or ${\mathbb Q}_-$.
Lemma~\ref{pointlemma} implies that $U={\mathbb Q}_- \cup {\mathbb Q}_+$. 

We claim that
CSP$(\Gamma)$ is polynomial-time equivalent to
CSP$(\Gamma \cup \{\{1\}\})$.
The reduction from left to right is trivial.
To show the other direction, let $\Phi\equiv\exists x_1,\dots,x_m \: . \: \phi(x_1,\dots,x_m)$
be an arbitrary instance of CSP$(\Gamma \cup \{\{1\},U\})$,
where $\phi$ is quantifier-free, and
assume without loss of generality that
the relation $\{1\}$ appears in exactly one constraint $\{1\}(x_m)$. Construct $\Phi'$
by replacing this constraint with ${\mathbb Q}_{\neq 0}(x_m)$.

If $\Phi'$ has no solution, then $\Phi$ has no solution.
Suppose instead that $\Phi'$ has a solution. 
Assume first that every solution assigns a negative number to the variable $x_m$.
Then we can pp-define a unary relation $T \subseteq {\mathbb Q}_-$ by
\[T(x_m) \equiv \exists x_1,\dots,x_{m-1} \: . \: \phi(x_1,\dots,x_m)\]
and this contradicts our initial assumptions. Thus, there is a solution $s$
such that $s(x_m) > 0$.
Choose $a \in {\mathbb Q}$ such that $a \cdot s(x) = 1$.
By Lemma~\ref{endolemma}, the function $a \cdot s$ is a solution to $\Phi$.
\end{proof}

\begin{theorem} \label{nopropertymainresult}
Let $\{R_+\} \subseteq \Gamma \subseteq SL_{\mathbb Q}[{\mathbb Q}]$ be a finite constraint language that violates (P$_0$) and/or (P$_{\infty}$).
The problem CSP$(\Gamma)$ is in P if
\begin{itemize}
\item
$\Gamma$ is $0$-valid;
\item
$\langle \cone(\Gamma) \cup \{\{1\}\} \rangle$ does not contain a bnu; or
\item
$\cone(\Gamma)$ is essentially convex.
\end{itemize}
Otherwise, CSP$(\Gamma)$ is NP-hard.
\end{theorem}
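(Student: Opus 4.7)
The plan is to reduce the classification to the case already settled in Corollary~\ref{linearmainresult}, by composing Theorem~\ref{thm:mainreduction} with Lemma~\ref{homogeneouscaselemma}. First, I would invoke Theorem~\ref{thm:mainreduction}: since $\Gamma$ violates (P$_0$) or (P$_\infty$), CSP$(\Gamma)$ is equivalent to CSP$(\cone(\Gamma))$, and the resulting language satisfies $\{R_+\} \subseteq \cone(\Gamma) \subseteq HSL_{\mathbb Q}[{\mathbb Q}]$. The final line of that proof records that $(0,\dots,0) \in \cone(R)$ if and only if $(0,\dots,0) \in R$, so $\Gamma$ is $0$-valid if and only if $\cone(\Gamma)$ is $0$-valid. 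If $\Gamma$ is $0$-valid, the all-zero assignment solves every instance, yielding the first tractable case.

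Otherwise $\cone(\Gamma)$ is not $0$-valid, and Lemma~\ref{homogeneouscaselemma} applies to give a polynomial-time equivalence between CSP$(\cone(\Gamma))$ and CSP$(\cone(\Gamma) \cup \{\{1\}\})$. Now $\{R_+, \{1\}\} \subseteq \cone(\Gamma) \cup \{\{1\}\} \subseteq SL_{\mathbb Q}[{\mathbb Q}]$, which is exactly the hypothesis of Corollary~\ref{linearmainresult}. That corollary places the problem in P precisely when $\langle \cone(\Gamma) \cup \{\{1\}\} \rangle$ contains no bnu or when $\cone(\Gamma) \cup \{\{1\}\}$ is essentially convex, and asserts NP-hardness otherwise. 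Since $\{1\}$ is trivially essentially convex, essential convexity of $\cone(\Gamma) \cup \{\{1\}\}$ is equivalent to essential convexity of $\cone(\Gamma)$, matching the third bullet of the statement.

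Piecing these together, the three P-cases in the theorem correspond exactly to the three sources of tractability identified above, and the NP-hard case falls out of the NP-hard half of Corollary~\ref{linearmainresult}. The only points requiring care are the bookkeeping of $0$-validity through the cone operation and the harmless addition of $\{1\}$ to the language when computing essential convexity and the bnu-criterion; both are essentially automatic from the definitions. I do not anticipate any genuine obstacle, since the hard technical content is already packaged in Theorem~\ref{thm:mainreduction}, Lemma~\ref{homogeneouscaselemma}, and Corollary~\ref{linearmainresult}, and the present argument only has to string them together.
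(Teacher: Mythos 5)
Your proposal is correct and follows exactly the paper's own route: apply Theorem~\ref{thm:mainreduction} to pass to $\cone(\Gamma)$, split on $0$-validity, use Lemma~\ref{homogeneouscaselemma} to adjoin $\{1\}$, and finish with Corollary~\ref{linearmainresult}. The extra bookkeeping you supply (that $0$-validity is preserved by the cone operation and that adding the essentially convex relation $\{1\}$ does not affect the essential-convexity criterion) is correct and only makes explicit what the paper leaves implicit.
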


\begin{proof}
By Theorem~\ref{thm:mainreduction}, CSP$(\Gamma)$ is equivalent to CSP$(\cone(\Gamma))$.
By Lemma~\ref{homogeneouscaselemma}, CSP$(\cone(\Gamma))$ is either trivially in P,
if it is $0$-valid, or CSP$(\cone(\Gamma))$ is polynomial-time equivalent to CSP$(\cone(\Gamma) \cup \{\{1\}\})$.
In the latter case, the result follows from Corollary~\ref{linearmainresult}.
\end{proof}

\section{Optimisation}
\label{sec:optresult}

In this section, we study the optimisation problem where the objective is
to maximise a linear function over the solution set of a semilinear CSP.
For an arbitrary constraint language $\Gamma \subseteq SL_{\mathbb Q}[{\mathbb Q}]$,
we formally define the problem
Opt$(\Gamma)$ as follows.

\bigskip
\begin{center}
\fbox{
  \parbox{0.9\textwidth}{
{\bf Problem:} Opt$(\Gamma)$

\noindent
{\bf Input:} 
A CSP$(\Gamma)$-instance $\Phi$ and a vector $c \in {\mathbb Q}^{\vars(\Phi)}$.

\noindent
{\bf Output:} One of the following four answers.

\begin{itemize}
\item
`unbounded' if for every $K \in {\mathbb Q}$, there exists a solution $x$ such that $c^Tx \geq K$.

\item
`optimum: $K$' if there exists a $K \in {\mathbb Q}$ and a solution $x$ such that $c^Tx = K$, but there
is no solution $x'$ such that $c^Tx' > K$.

\item
`optimum is arbitrarily close to $K$' if there exists a $K \in {\mathbb Q}$ such that there is
no solution $x$ satisfying $c^Tx \geq K$, but for every $K' < K$ there is a solution $x'$ with
$c^Tx' \geq K'$.

\item
`unsatisfiable' if there is no solution.
\end{itemize}
}
}
\end{center}
\bigskip

By Lemma~\ref{generateequations}, the problem Opt$(\{R_+, \leq, \{1\})$ is
polynomial-time equivalent to linear programming.
Bodirsky et al.~\cite{Bodirsky:etal:lmcs2012} have shown
that for semilinear constraint languages containing $\{R_+, \leq, \{1\}\}$,
the problem CSP$(\Gamma)$ is polynomial-time solvable (NP-hard) if and only if
the problem Opt$(\Gamma)$ is polynomial-time solvable (NP-hard)
(cf.~Theorem~\ref{bodirskytheorem}).

In Theorem~\ref{thm:main-opt},
we show that, for semilinear constraint languages containing $\{R_+, \{1\}\}$,
the complexity of the decision problem and of the optimisation problem is similarly related.
We first prove an analogue of Theorem~\ref{thm:affinesolves}
for the optimisation problem.

\begin{theorem}
\label{thm:affine-opt}
Let $\{R_+, \{1\}\} \subseteq \Gamma \subseteq SL_{\mathbb Q}[{\mathbb Q}]$ be a finite constraint language.
If there is no bnu in $\langle \Gamma \rangle$, then Opt$(\Gamma)$
can be solved in polynomial time.
\end{theorem}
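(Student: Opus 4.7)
The plan is to reuse Algorithm~\ref{alg:affc} and add a short linear-algebra post-processing step. Given an input $(\Phi, c)$, first run affine consistency on $\Phi$. If it outputs ``no'', return \textup{`unsatisfiable'}. Otherwise, let $A$ be the affine subspace computed by the algorithm. The proof of Theorem~\ref{thm:affinesolves} establishes that, in the no-bnu setting, one in fact has $A = \aff(Z)$ where $Z$ denotes the solution set of $\Phi$, so $Z$ is non-empty and contained in $A$.

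Now split into two cases according to whether the linear form $c^{T}x$ is constant on $A$; this is a standard rational linear-algebra test against the equational description of $A$. If $c^{T}x$ is constant on $A$, choose any rational point $y \in A$ (again straightforward to produce from the defining equations) and output \textup{`optimum: $c^{T}y$'}: every $x \in Z$ then satisfies $c^{T}x = c^{T}y$, and the value is achieved by any solution. Otherwise, output \textup{`unbounded'}. Each of these operations is polynomial time.

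The only non-routine part is correctness of the \textup{`unbounded'} branch. Suppose for contradiction that $c^{T}x \leq M$ for all $x \in Z$. Since $\aff(Z) = A$ and $c^{T}x$ is non-constant on $A$, it cannot be constant on $Z$, so there exist $x_0, x_1 \in Z$ with $c^{T}x_0 < c^{T}x_1$. Consider the unary relation
\[ U(t) \equiv \exists v_1 \dots v_n \, . \, \textstyle\bigwedge_{i=1}^n \bigl( v_i = x_0^{(i)} + t \cdot (x_1^{(i)} - x_0^{(i)}) \bigr) \wedge \phi(v_1, \dots, v_n), \]
where $\phi$ is the quantifier-free matrix of $\Phi$. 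By Lemma~\ref{generateequations} (using that $\{1\}$ and $R_+$ belong to $\Gamma$), the constants $x_0^{(i)}, x_1^{(i)}$ may be pp-defined, and hence $U \in \langle \Gamma \rangle$. This $U$ contains the values $0$ and $1$, and, by the bound $M$, is bounded above by $(M - c^{T}x_0)/(c^{T}x_1 - c^{T}x_0)$. Thus $U$ is either bounded (and, having more than one element, is itself a bnu) or bounded above only, in which case Lemma~\ref{halfboundgenerate}(1) produces a bnu in $\langle R_+, \{1\}, U \rangle \subseteq \langle \Gamma \rangle$. Either case contradicts the no-bnu hypothesis, so $c^{T}x$ is indeed unbounded above on $Z$.

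The main (minor) obstacle is purely notational, namely packaging the line-restriction relation $U$ as a genuine pp-formula over $\Gamma$ so that the bnu lemmas apply. It is worth emphasising that the output \textup{`optimum is arbitrarily close to $K$'} never arises here: in the absence of a bnu, any finite supremum that is not constant on $A$ would force a bnu through the construction above, so the supremum is either attained or equals $+\infty$.
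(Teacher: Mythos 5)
Your proposal is correct and follows essentially the same route as the paper: run Algorithm~\ref{alg:affc} to obtain $A=\aff(Z)$, test whether $c^Tx$ is constant on $A$, and in the non-constant case use the line through two solutions with distinct objective values (the relation $\mathcal{L}_{R_\phi,x_0,x_1}\in\langle\Gamma\rangle$) together with Lemma~\ref{halfboundgenerate}(1) to rule out a finite supremum. The only difference is presentational — you argue the unbounded branch by contradiction, while the paper argues it directly via unboundedness of $U$ in both directions — and your observation that the answer `optimum is arbitrarily close to $K$' never occurs is implicit in the paper's case analysis as well.
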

\begin{proof}
Let $\Phi$ be an instance of CSP($\Gamma$), let $V=\vars(\Phi)=\{x_1,\ldots,x_m\}$, and let $c \in {\mathbb Q}^V$ be a vector. Assume $\Phi\equiv\exists x_1,\dots,x_m \: . \: \phi$ where $\phi$ is
quantifier-free.
Algorithm~\ref{alg:affc} in Section~\ref{sec:tractability}
finds the affine hull $A$ of the set of satisfying assignments to $\Phi$
in polynomial time.
If $A = \emptyset$, then we answer `unsatisfiable'.

Otherwise, the affine hull $A$ is represented by a set of inequalities,
each with representation size that is polynomial in the input size.
Therefore,
we can solve the system $z_1, z_2 \in A$, $c^T (z_1-z_2) > 0$,
in polynomial time.
Assume that this system has a solution.
Let $k = \dim(A)+1$ and let
$y_1, \dots, y_k$ be affinely independent satisfying assignments to $\Phi$.
Then, we can write
$z_1 = \sum_{i=1}^k a_{1i} y_i$ and
$z_2 = \sum_{i=1}^k a_{2i} y_i$ with $\sum_{i=1}^k a_{1i} = \sum_{i=1}^k a_{2i} = 1$.
Since 
\[c^T (z_1-z_2) = \sum_{i=1}^k a_{1i} c^T y_i - \sum_{i=1}^k a_{2i} c^T y_i > 0,\]
we must have $c^T y_i \neq c^T y_j$ for some $1 \leq i,  j \leq k$.
Let $U = {\mathcal L}_{R_\phi, y_i, y_j} \in \langle \Gamma \rangle$,
where $R_\phi = \{ (x_1, \dots, x_n) \in {\mathbb Q}^V \mid \phi(x_1,\dots,x_n) \text{ is true in } \Gamma \}$
 and
for each $a \in U$, let $y_a \in {\mathbb Q}^V$ 
denote the corresponding point on the line through $y_i$ and $y_j$.
Fix an arbitrary constant $K \in {\mathbb Q}$.
Since there is no bnu in $\langle \Gamma \rangle$, it follows from Lemma~\ref{halfboundgenerate}(1) 
that there is a point $a \in U$ such that $y_a \in {\mathbb Q}^V$ satisfies $c^T y_a > K$.
Since $y_a$ is a satisfying assignment, we can therefore answer `unbounded'.

Otherwise, $c^T(z_1-z_2) = 0$ for all $z_1, z_2 \in A$, so
$c^T z$ is constant for $z \in A$.
In polynomial time, we can find a $z \in A$ with polynomial representation size.
It then suffices to evaluate $c^T z$ and answer `optimum: $c^T z$'.
\end{proof}

\begin{theorem}
\label{thm:main-opt}
Let $\{R_+,\{1\}\} \subseteq \Gamma 
\subseteq SL_{\mathbb Q}[{\mathbb Q}]$ 
be a finite constraint language.
The problem Opt$(\Gamma)$ is polynomial-time solvable if
$\langle \Gamma \rangle$ does not contain a bnu or if
$\Gamma$ is essentially convex.
Otherwise, Opt$(\Gamma)$ is NP-hard.
\end{theorem}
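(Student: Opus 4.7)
\medskip
\noindent\textbf{Proof plan for Theorem~\ref{thm:main-opt}.}

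The plan is to mirror the three-way classification in Corollary~\ref{linearmainresult}, addressing the two tractable cases separately and then deriving hardness from the corresponding CSP result.

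For the first tractable case, where $\langle \Gamma \rangle$ contains no bnu, there is nothing new to do: Theorem~\ref{thm:affine-opt} already settles this case directly. For the second tractable case, where $\Gamma$ is essentially convex, the idea is to reduce to the optimisation result of Bodirsky et al.\ quoted in Theorem~\ref{bodirskytheorem}(2). I would set $\Gamma' = \Gamma \cup \{\leq\}$ and observe that $\Gamma'$ still contains $\{R_+,\leq,\{1\}\}$ and is still essentially convex (since $\leq$ is convex and essential convexity of a family is preserved under the addition of a convex relation). Therefore Opt$(\Gamma')$ is polynomial-time solvable by Theorem~\ref{bodirskytheorem}(2). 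Since $\Gamma \subseteq \Gamma'$, every instance of Opt$(\Gamma)$ is syntactically an instance of Opt$(\Gamma')$ with the same objective vector and the same intended output, so any polynomial-time algorithm for Opt$(\Gamma')$ solves Opt$(\Gamma)$ within the same time bound.

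For the hardness case, suppose $\langle \Gamma \rangle$ does contain a bnu \emph{and} $\Gamma$ is not essentially convex. Then Corollary~\ref{linearmainresult} tells us that CSP$(\Gamma)$ is NP-hard, and it suffices to give a polynomial-time reduction from CSP$(\Gamma)$ to Opt$(\Gamma)$. Given an instance $\Phi$ of CSP$(\Gamma)$, I would construct the Opt$(\Gamma)$ instance $(\Phi, c)$ with $c = 0 \in {\mathbb Q}^{\vars(\Phi)}$. An oracle for Opt$(\Gamma)$ then returns either \emph{unsatisfiable}, in which case $\Phi$ has no solution, or \emph{optimum: $0$}, in which case $\Phi$ is satisfiable; the two remaining output types (\emph{unbounded} and \emph{arbitrarily close to}) cannot occur since $c^T x \equiv 0$ on the solution set.

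I do not anticipate a genuine obstacle here: the work has already been done in Theorems~\ref{thm:affine-opt},~\ref{bodirskytheorem}, and Corollary~\ref{linearmainresult}, and the only step requiring any care is verifying that instances of Opt$(\Gamma)$ transfer \emph{verbatim} to Opt$(\Gamma\cup\{\leq\})$ in the essentially convex case, i.e.\ that enlarging the language neither changes the semantics of an instance nor the meaning of the four possible outputs. Once that is noted, the three cases fit together to give exactly the dichotomy claimed.
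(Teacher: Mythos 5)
Your proposal is correct and takes essentially the same approach as the paper, whose (very terse) proof likewise derives the two tractable cases from Theorem~\ref{thm:affine-opt} and Theorem~\ref{bodirskytheorem} and the hardness from Corollary~\ref{linearmainresult}. The details you supply --- adjoining $\leq$ so that the Bodirsky et al.\ result applies, and the zero-objective reduction from CSP$(\Gamma)$ to Opt$(\Gamma)$ --- are exactly the steps the paper leaves implicit.
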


\begin{proof}
The polynomial-time solvable cases follow from Theorem~\ref{bodirskytheorem} and Theorem~\ref{thm:affine-opt}.
The hardness follows from Corollary~\ref{linearmainresult}.
\end{proof}

A comparison between Theorem~\ref{thm:main-opt} and Corollary~\ref{linearmainresult} shows that,
for a semilinear constraint language $\Gamma$ containing $\{R_+, \{1\}\}$,
CSP$(\Gamma)$ is polynomial-time solvable (NP-hard) if and only if Opt$(\Gamma)$
is polynomial-time solvable (NP-hard).
The following example shows that this tight relationship between the complexity of 
a constraint satisfaction problem and its corresponding optimisation problem
cannot be further extended
to the class of all semilinear constraint languages containing the relation $R_+$.

\begin{example}
Let $R = \{(0,0,0,0)\} \cup \{(x,y,z,1) \mid (x,y,z) \in R_{{\rm NAE}}\}$
(cf.~Section~\ref{sec:hard}).
Note that $\Gamma = \{R, R_+\}$ is semilinear, $0$-valid, and that
$\Gamma$ satisfies both (P$_0$) and (P$_{\infty}$).
Let $\Phi$ be an arbitrary instance of CSP$(\{R_{\rm NAE}\})$.
Construct an instance $\Phi'$ of Opt$(\Gamma)$ by introducing an auxiliary
variable $w$, and for each constraint $R_{\rm NAE}(x,y,z)$ in $\Phi$,
introduce a constraint $R(x,y,z,w)$ in $\Phi'$.
Finally, let the vector $c \in {\mathbb Q}^{\vars(\Phi')}$ be defined by
$c_w = 1$ and $c_x = 0$ for all other variables $x$.
Then, the instance $\Phi$ has a solution if and only if an optimal solution of $\Phi'$ has
value 1.
We conclude that CSP$(\Gamma)$ is polynomial-time solvable (since $\Gamma$ is $0$-valid),
but that Opt$(\Gamma)$ is NP-hard.
\end{example}

\section{Integer solutions}
\label{sec:intsol}

In this section, we study the problem of finding \emph{integer} solutions to CSPs defined
by semilinear relations.
We consider two different approaches: 
(1) allowing an additional unary constraint that forces a chosen variable to take
an integral value, and (2) identifying constraint languages which guarantee the existence
of integer solutions.

The reader should note that in the first approach we {\em do not} consider semilinear
relations defined over the integers. Instead, we consider ways of checking whether a given
problem instance has a solution where some variables are assigned integral values.
Some of the problems in the second approach can be seen as semilinear CSPs over
the integers, but our methods do not lend themselves to a systematic study of such CSPs.
See~\cite{Bodirsky:etal:icalp2015} for a recent approach to such a systematic study.

\subsection{The relation ${\mathbb Z}$}

The unary relation ${\mathbb Z}$ can be used to ensure that a variable is
given an integral value. By Lemma~\ref{fakeomin}, this relation is not semilinear over ${\mathbb Q}$,
so the constraint languages that we classify in the next theorem are formally not semilinear.

\begin{theorem}
\label{thm:main-int}
Let $\{R_+\} \subseteq \Gamma \subseteq SL_{\mathbb Q}[{\mathbb Q}]$ be a finite constraint language that satisfies (P$_0$) and (P$_\infty$).
The problem CSP$(\Gamma \cup \{{\mathbb Z}\})$ is in P if
\begin{itemize}
\item
$\Gamma$ is 0-valid; or
\item
$\langle \Gamma \rangle$ does not contain a bnu.
\end{itemize}
Otherwise, CSP$(\Gamma \cup \{{\mathbb Z}\})$ is NP-hard.
\end{theorem}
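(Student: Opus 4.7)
The plan is to follow the trichotomy of Theorem~\ref{thm:main}: a trivial tractable case (0-valid), an algorithmic tractable case (no bnu), and a complementary NP-hard case. For the first, if $\Gamma$ is 0-valid then the all-zero assignment satisfies every $\Gamma$-constraint and every ${\mathbb Z}$-constraint (since $0 \in {\mathbb Z}$), so CSP$(\Gamma \cup \{{\mathbb Z}\})$ is trivially in P.

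For the NP-hardness case, suppose $\Gamma$ is not 0-valid and $\langle \Gamma \rangle$ contains a bnu. I would reuse the argument from the proof of Theorem~\ref{thm:main} showing that under (P$_0$), (P$_\infty$), and non-0-validity, the collection $\mathcal{U}$ of bounded non-empty pp-definable unaries avoiding a neighbourhood of 0 contains a bnu $T$. Since ${\mathbb Z}$ obviously excludes an interval (say $(0,1)$), Lemma~\ref{universalhardness} applied to $\{R_+, T, {\mathbb Z}\}$ yields NP-hardness, which transfers to CSP$(\Gamma \cup \{{\mathbb Z}\})$ because $T \in \langle \Gamma \rangle$.

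For the main tractable case, when $\langle \Gamma \rangle$ has no bnu, I would further reduce to the situation where $\Gamma$ is not 0-valid (otherwise the first case applies). Then $\mathcal{U}$ is non-empty and, by the no-bnu hypothesis, contains only constants $\{c\}$ with $c \neq 0$; from $R_+$ and any such constant $\{1\}$ is pp-definable via iterated addition combined with Lemma~\ref{generateequations}, so $\{1\} \in \langle \Gamma \rangle$. My proposed algorithm: given an instance $\Phi$ of CSP$(\Gamma \cup \{{\mathbb Z}\})$ over variables $V$ with integer-constrained set $I \subseteq V$, run Algorithm~\ref{alg:affc} on the $\Gamma$-part of $\Phi$ to compute the affine hull $A$ of the $\Gamma$-solution set $Z$; if $A = \emptyset$ return `no'; otherwise append the constraints $x_i \in {\mathbb Z}$ for $i \in I$ to the linear system defining $A$ and decide integer feasibility of the resulting mixed system by Hermite normal form in polynomial time.

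The hard part will be correctness, specifically the implication that if $A$ contains a point integral on $I$ then $Z$ also contains such a point (the reverse direction being trivial from $Z \subseteq A$). The key structural fact I would use is that under no bnu together with $\{1\}$, every pp-definable unary relation is either constant or unbounded in both directions, since otherwise Lemma~\ref{halfboundgenerate}(1) would produce a bnu. Iterating Lemma~\ref{lem:intersection} yields a convex piece $L \subseteq Z$ with $\aff(L) = A$, and I would combine a lattice parametrisation of $A \cap ({\mathbb Z}^I \times {\mathbb Q}^{V \setminus I})$ with the structural restrictions on $Z$ to locate a lattice point inside $L$. Ruling out the pathological scenario where $Z$ decomposes into parallel affine pieces whose offsets are incompatible with integrality is the delicate step; the no-bnu hypothesis should prevent this, since such a decomposition would permit a pp-definable bnu through fixings of the form $R(x, 0, \dots, 0)$.
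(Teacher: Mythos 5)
Your $0$-valid case and your overall algorithm for the no-bnu case (establish affine consistency, then test integer feasibility of the resulting affine subspace on the ${\mathbb Z}$-constrained coordinates) coincide with the paper's. Your hardness argument also works, though by a slightly different route: the paper first forms $U' = c\cdot U\cap{\mathbb Z}$, a \emph{finite} (hence semilinear) bnu that itself excludes an interval and avoids $(-1,1)$, and then invokes Lemma~\ref{universalhardness} with this single relation playing both roles; you instead feed ${\mathbb Z}$ directly into Lemma~\ref{universalhardness} as the interval-excluding relation. That is workable, but the supporting lemmas (Lemma~\ref{fakeomin}, Lemma~\ref{lem:smalldelta2}) are stated for semilinear relations, so one must check that the auxiliary relations built from ${\mathbb Z}$ in that proof still fall under their hypotheses; the paper's detour avoids this issue entirely.

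The genuine gap is in the correctness of the tractable case, exactly at the step you flag as delicate. Your plan is to ``locate a lattice point inside $L$'', where $L$ is the full-dimensional convex piece of the solution set supplied by Lemma~\ref{lem:intersection}. This cannot work as stated: that lemma only guarantees \emph{some} linear piece whose affine hull equals the computed affine hull, and such a piece may be an arbitrarily small ball containing no point of the integer lattice (nothing in the no-bnu hypothesis forces the full-dimensional piece to be large or well-positioned relative to the lattice). The mechanism you suggest for excluding bad configurations --- pp-definable unaries obtained by fixings $R(x,0,\dots,0)$ --- does not address this. What is actually needed (and what the paper does) is the following: let $S'$ be the projection of the solution set to the ${\mathbb Z}$-constrained coordinates $I$ and $A'=\aff(S')$, take an open ball $B(p,\epsilon)\subseteq S'$ inside the guaranteed linear piece, and choose two distinct points $q_1, q_2$ of the lattice $L$ of integer points of $A'$ whose connecting line meets $B(p,\epsilon)$ in an open segment (a small cone argument shows such $q_1,q_2$ exist). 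Then $\mathcal{L}_{S',q_1,q_2}\in\langle\Gamma\rangle$ is a unary semilinear relation containing an interval; since $\langle\Gamma\rangle$ has no bnu, Lemma~\ref{halfboundgenerate}(1) together with Lemma~\ref{fakeomin} forces it to contain $(M,\infty)$ for some $M$, so $q_1+t(q_2-q_1)\in S'$ for every sufficiently large integer $t$. This point is integral, and lifting it back through the projection gives a solution of $\Phi$. Without this ``walk to infinity along a lattice direction'' step, the implication from ``the affine hull contains a point integral on $I$'' to ``the solution set contains one'' remains unproven.
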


\begin{proof}
If $\Gamma$ is 0-valid, then $\Gamma \cup \{{\mathbb Z}\}$ is 0-valid, 
so every instance admits the solution $(0,0,\dots,0)$.

Otherwise, assume that $\langle \Gamma \rangle$ does not contain a bnu.
Let $\Phi$ be an arbitrary instance of CSP$(\Gamma \cup \{{\mathbb Z}\})$,
let $I \subseteq \vars(\Phi)$ be the set of variables that are constrained by ${\mathbb Z}$ in
$\Phi$,
and let $\Phi'$ be the instance of CSP$(\Gamma)$ obtained from $\Phi$
by removing all ${\mathbb Z}$-constraints.
By an argument on the set of all bounded, non-empty unary relations in $\langle \Gamma \rangle$
similar to that used in the proof of Theorem~\ref{thm:main}, it follows that $\langle \Gamma \rangle$
contains the relation $\{1\}$.
Let $S$ be the set of satisfying assignments to $\Phi'$
By running Algorithm~\ref{alg:affc},
we obtain a system of inequalities that defines the affine hull $A$ of the satisfying assignments $S$.

We now substitute each such inequality for an equality.
The resulting system of linear equalities still defines $A$.
Let $A' = \{ \pi_I(x) \mid x \in A \}$.
We can compute a system of linear equations for $A'$ in polynomial time by first
computing a parameter form for $A$, removing the coordinates not corresponding to $I$,
and then computing the equivalent system of linear equations.
This can be in polynomial time by being careful with the representation sizes
of the intermediary results (cf.~Schrijver~\cite[Section~3]{Schrijver:TLIP}).
We then solve the resulting system of linear equations for an integer solution in polynomial time (cf.~Schrijver~\cite[Corollary~5.3]{Schrijver:TLIP}).
If no such solution exists, then $\Phi$ is unsatisfiable. 
Otherwise, the integer points in $A'$ are given by 
$L = \{ c_0 + \sum_{i=1}^k \lambda_i c_i \mid \lambda_1, \dots, \lambda_k \in {\mathbb Z} \}$,
for some linearly independent vectors
$c_0, \dots, c_k \in {\mathbb Z}^{I}$, where $k = \dim(A')$.
The vectors $c_i$ can be found explicitly in polynomial time,
but since we are only interested in showing that there exists a satisfying assignment to $\Phi$,
it suffices that $L$ has the aforementioned form.

For $p \in A'$ and constant $\epsilon > 0$, define $B(p, \epsilon) = \{ x \in A' \mid \| p - x \| < \epsilon \}$.
Let $S' = \{ \pi_I(x) \mid x \in S \}$ be the projection of $S$ to the variables in $I$ and
note that $S' \in \langle \Gamma \rangle$.
By Lemma~\ref{lem:intersection}, 
$S'$ contains a linear set $R \subseteq {\mathbb Q}^{I}$ such that $\aff(R) = A'$.
Let $p \in R$ and $\epsilon > 0$ be such that $B(p,\epsilon) \subseteq R \subseteq S'$.
We claim that there exist distinct $q_1, q_2 \in L$ such that the line
through $q_1$ and $q_2$ intersects $B(p,\epsilon)$ in an open line segment.
Let $U = {\mathcal L}_{S', q_1, q_2}$.
Since $\langle \Gamma \rangle$ does not contain a bnu, it follows that 
$(M, \infty) \subseteq U$ for some $M < \infty$.
Therefore, $q' = q_1 + t (q_2-q_1) \in S'$ for a large enough integer $t$.
Hence, there exists a point $q \in S$ such that $\pi_I(q) = q'$,
so $\Phi$ is satisfiable.

To prove the claim, 
let $B = B(p,\epsilon)$ and let $q_1 \in L \setminus B$.
Consider the cone $C = \{ q_1 + t (x-q_1) \mid x \in B,  t \geq 0 \}$ and
note that $C$ contains $B' := \{ q_1 + \delta \epsilon^{-1} (x-q_1) \mid x \in B \} = B(q_1 + \delta \epsilon^{-1} (p-q_1), \delta)$.
For a large enough positive constant $\delta$, the set $B' \cap L$ is non-empty.
Let $q_2 \in B' \cap L \subseteq C$.
Then, the line through $q_1$ and $q_2$ intersects $B$ in an open line segment.

Finally, assume that $\langle \Gamma \rangle$ contains a bnu $U$.
We may assume that $U$ is not $0$-valid:
By Lemma~\ref{zerovalidlemma}, 
$\langle \Gamma \rangle$ contains a non-empty unary relation $T$
that is not $0$-valid.
Let $c \in {\mathbb Q}$ be a non-zero constant such that $U \cap c \cdot T \neq \emptyset$.
If $U \cap c \cdot T$ contains more than one element, then it is a bnu that is not $0$-valid.
Otherwise, $U \cap c \cdot T$ is a constant unary relation,
so $\langle \Gamma \rangle$ contains $\{1\}$.
In this case, for a large enough constant $c \in {\mathbb Q}$,
the relation $U + c \in \langle \Gamma \rangle$ is a bnu that is not $0$-valid.

Let $r_1, r_2 \in U$ be two distinct points and let $c \in {\mathbb Q}$ be a non-zero constant such that
$c \cdot r_1, c \cdot r_2 \in {\mathbb Z}$.
Then, $U' = c \cdot U \cap {\mathbb Z}$ is a bnu that excludes an interval and
$U' \cap (-1,1) = \emptyset$.
NP-hardness therefore follows from Lemma~\ref{universalhardness}.
\end{proof}

\subsection{The integer property}
In this section, we will determine those semilinear constraint languages containing $R_+$
for which knowing that there is a solution guarantees that there is an integer solution.
We make the following definition.

\begin{definition}
Let $\Gamma$ be a constraint language over ${\mathbb Q}$. We say that
$\Gamma$ has the {\em integer property} if
every instance of CSP$(\Gamma)$ has a solution if and only if it has an integer solution.
\end{definition}

The integer property can be used to infer tractability of certain semilinear constraint languages
over ${\mathbb Z}$.
In particular, if $\Gamma$ is a semilinear constraint language over ${\mathbb Q}$
that satisfies the integer property,
then CSP$(\Gamma)$ and CSP$(\Gamma|_{\mathbb Z})$ are equivalent.
To see that $\Gamma|_{\mathbb Z}$ is a semilinear constraint language over ${\mathbb Z}$,
take an arbitrary $R \in \Gamma$ and let $\phi$ be a quantifier-free 
definition of $R$ over $LI_{\mathbb Q}[{\mathbb Z}]$.
Then, $\phi$ is also a quantifier-free definition of $R|_{\mathbb Z}$ over $LI_{\mathbb Z}[{\mathbb Z}]$.

The following lemma shows that the integer property is preserved under pp-definitions.

\begin{lemma}\label{lem:ipclosed}
Let $\Gamma$ be a constraint language over ${\mathbb Q}$.
If $\Gamma$ has the integer property,
then so does $\langle \Gamma \rangle$.
\end{lemma}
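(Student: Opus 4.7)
The plan is to pass from a CSP instance over $\langle \Gamma \rangle$ to an equivalent CSP instance over $\Gamma$ by unfolding the pp-definitions, and then invoke the integer property of $\Gamma$.

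More concretely, let $\Phi$ be an arbitrary instance of CSP$(\langle \Gamma \rangle)$ over variables $V$. For each constraint $R(x_{i_1},\dots,x_{i_k})$ in $\Phi$ with $R \in \langle \Gamma \rangle$, fix a pp-definition $R(x_{i_1},\dots,x_{i_k}) \equiv \exists y_1,\dots,y_n.\phi(x_{i_1},\dots,x_{i_k},y_1,\dots,y_n)$ with $\phi$ quantifier-free over $\Gamma$ (possibly involving equalities). Replace each such constraint with $\phi$, introducing fresh copies of the $y_i$ for each constraint, and then eliminate every equality atom by identifying the two variables it relates. The result is an instance $\Phi'$ of CSP$(\Gamma)$ over an extended variable set $V' \supseteq V$.

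The first key step is to verify the correspondence of solutions: every solution of $\Phi'$ restricts to a solution of $\Phi$ on $V$, and every solution of $\Phi$ extends to a solution of $\Phi'$ by choosing witnesses for each block of existential variables (such witnesses exist by the definition of pp-definability). In particular, $\Phi$ is satisfiable if and only if $\Phi'$ is satisfiable, and when both hold one can additionally ensure that the extension preserves integrality of the values assigned to $V$.

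For the main direction, suppose $\Phi$ has a solution. Then $\Phi'$ has a solution, so by the integer property of $\Gamma$ the instance $\Phi'$ has an integer solution; its restriction to $V$ is an integer solution of $\Phi$. The converse direction is immediate, since any integer solution of $\Phi$ is in particular a solution of $\Phi$. The only point requiring any care is the handling of equality atoms in the pp-definitions, but this is routine since identifying variables linked by $=$ preserves both solvability and integrality. Hence $\langle \Gamma \rangle$ has the integer property.
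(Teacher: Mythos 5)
Your proposal is correct and follows essentially the same route as the paper: unfold each pp-definition into its quantifier-free part with fresh auxiliary variables, observe that satisfiability transfers in both directions, apply the integer property of $\Gamma$ to the unfolded instance, and restrict the resulting integer solution back to the original variables. The extra care you take with equality atoms is harmless but not needed, since the paper's definition of a pp-sentence already admits equality atoms in CSP instances.
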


\begin{proof}
Let $\Psi$ be an CSP-instance with relations $R_1, \dots, R_k$ from $\langle \Gamma \rangle$,
let $\phi_1, \dots, \phi_k$ be pp-definitions of $R_1, \dots, R_k$ in $\Gamma$, and let
$\Psi'$ be the CSP$(\Gamma)$-instance obtained from $\Psi$ by replacing each relation $R_i$ by
the quantifier-free part of $\phi_i$, and adding existential quantifiers for all auxiliary variables.
If $\Psi$ has a rational solution, then $\Psi'$ has a rational solution, so $\Psi'$ has an integer solution.
Note that the restriction of any solution of $\Psi'$ to $\vars(\Psi)$ is a solution to $\Psi$.
Therefore, the restriction of an integer solution of $\Psi'$ to $\vars(\Psi)$ is an integer solution
to $\Psi$, which proves the lemma.
\end{proof}

Let $\Gamma$ denote a semilinear constraint
language that contains $R_+$. Observe that if $\{1\} \in \langle \Gamma \rangle$, then
CSP$(\Gamma)$ cannot have the integer property since the
following CSP$(R_+ \cup \{1\})$-instance has the unique solution $x = \frac{1}{2}, y = 1$:
\[\exists x, y \: . \: x+x=y \wedge \{1\}y.\]

\begin{definition}
Let $\Gamma$ be a constraint language over ${\mathbb Q}$.
We say that $\Gamma$ is {\em scalable} if
the following holds: for each $R \in \Gamma$ and for each
$x=(x_1,\ldots,x_k) \in R$, 
there exists a positive constant $A$ such that
$(ax_1,\ldots,ax_k) \in R$, for all 
$a \geq A$.
\end{definition}

Clearly,
scalable constraint languages cannot contain any unary constant relation $\{c\}$ except when $c=0$.
Note that
if $\Gamma$ has endomorphisms $e(x)=a \cdot x$ for all rational $a > A > 0$, then
$\Gamma$ is indeed scalable. Inferring the existence of endomorphisms
from the scalability property is, in general, not straightforward or even possible.
The scalability property was originally defined slightly
differently~\cite{Jonsson:Loow:AI13} but it is easy to verify that the two
definitions coincide.

The following result completely characterises the semilinear constraint languages
that contain $R_+$ and have the integer property.

\begin{theorem} \label{unboundedinteger}
Let $\{R_+\} \subseteq \Gamma \subseteq SL_{\mathbb Q}[{\mathbb Q}]$ be a
constraint language that is not $0$-valid.
Then, the following are equivalent:

\begin{enumerate}
\item\label{ip:ip}
$\Gamma$ has the integer property.

\item\label{ip:unary}
every unary relation in $\langle \Gamma \rangle$
is either $\{0\}$ or unbounded.

\item\label{ip:infty}
$\Gamma$ does not satisfy (P$_\infty$).

\item\label{ip:scalable}
$\Gamma$ is scalable.
\end{enumerate}

\end{theorem}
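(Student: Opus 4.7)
The plan is to prove the equivalence via the cycle $(1) \Rightarrow (2) \Rightarrow (3) \Rightarrow (4) \Rightarrow (1)$, using a different tool at each step.

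For $(2) \Rightarrow (3)$: contrapositively, suppose (P$_\infty$) holds via some $U \in \langle \Gamma \rangle$ with a positive point and $U \subseteq (-\infty, M]$. Then $U \neq \{0\}$, and if $U$ is also bounded below it already violates~(2). Otherwise $U$ contains both positive and negative elements, so Lemma~\ref{halfboundgenerate}(2) produces a non-empty bounded relation $U'' \in \langle \{R_+, U\}\rangle \subseteq \langle \Gamma \rangle$ whose explicit construction makes it contain a positive element, hence $U'' \neq \{0\}$, contradicting~(2).

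For $(3) \Rightarrow (4)$: given $R \in \Gamma$ of arity $k$ and $x = (x_1, \ldots, x_k) \in R$, form
\[ R_x(t) \equiv \exists y_1, \ldots, y_k \: . \: R(y_1, \ldots, y_k) \wedge \textstyle\bigwedge_{i=1}^k y_i = x_i \cdot t. \]
Each equation $y_i = x_i t$ is homogeneous with rational coefficients and so pp-definable in $\{R_+\}$ by Lemma~\ref{generateequations}, whence $R_x \in \langle \Gamma \rangle$. Since $1 \in R_x$ and (P$_\infty$) fails, $R_x$ is not bounded above, and Lemma~\ref{fakeomin} then forces $R_x$ to contain an interval $[A, \infty)$, which is exactly scalability of $R$ at $x$. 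For $(4) \Rightarrow (1)$, given a rational solution $s$ to an instance $\Phi$ with constraints $R_1, \ldots, R_m$, scalability supplies a threshold $A_j$ for each $R_j$; taking $D$ to be a common denominator of the entries of $s$ and any positive integer $N$ with $DN \geq \max_j A_j$, the assignment $DN \cdot s$ is integral and satisfies all constraints. The integer-to-rational direction of~(1) is trivial.

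The main obstacle is $(1) \Rightarrow (2)$, which I would tackle by contrapositive. Suppose $W \in \langle \Gamma \rangle$ is bounded with $W \neq \{0\}$; by Lemma~\ref{lem:ipclosed} the integer property transfers to $\langle \Gamma \rangle$. If I can produce a bounded non-empty $B \in \langle \Gamma \rangle$ with $0 \notin B$, then scaling $B$ by a sufficiently small positive rational so that it lies in $(-1,1)$ yields a one-variable instance with a rational but no integer solution, a contradiction. The easy case $0 \notin W$ lets me set $B := W$. The delicate case is $0 \in W$: here the non-$0$-validity of $\Gamma$, together with Lemma~\ref{zerovalidlemma}, supplies a non-empty unary $V \in \langle \Gamma \rangle$ with $0 \notin V$; picking any nonzero $v \in V$ and $w \in W \setminus \{0\}$, I set $B := W \cap (w/v) \cdot V$, which lies in $\langle \Gamma \rangle$ (rational scaling and intersection being pp-definable via $R_+$ and Lemma~\ref{generateequations}), contains $w$, is bounded as a subset of $W$, and avoids $0$ as a subset of the scaled $V$. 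This closes the cycle, and the non-$0$-validity hypothesis is used exactly once, precisely in this final step.
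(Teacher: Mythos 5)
Your proof is correct and follows essentially the same route as the paper: the same implication cycle $(1)\Rightarrow(2)\Rightarrow(3)\Rightarrow(4)\Rightarrow(1)$, resting on the same ingredients (Lemma~\ref{zerovalidlemma}, Lemma~\ref{halfboundgenerate}(2), Lemma~\ref{fakeomin}, Lemma~\ref{lem:ipclosed}, and the intersect-with-a-scaled-non-$0$-valid-relation trick). The only cosmetic differences are that you argue $(3)\Rightarrow(4)$ directly rather than by contrapositive, and in $(1)\Rightarrow(2)$ you scale the bounded relation into $(-1,1)$ to obtain a one-variable instance with no integer solution, where the paper instead uses the two-variable instance $\exists x,y\:.\:U(x)\wedge k\cdot y=x$.
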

\begin{proof}
(\ref{ip:ip}) $\Rightarrow$ (\ref{ip:unary}).
We show $\neg$(\ref{ip:unary}) $\Rightarrow$ $\neg$(\ref{ip:ip}).
Suppose that $T_1 \neq \{0\}$ is a bounded unary relation
in $\langle \Gamma \rangle$.
By Lemma~\ref{zerovalidlemma}, there is a non-empty unary relation $T_2$ in $\langle \Gamma \rangle$ that is not $0$-valid.
Therefore, for some $c \in {\mathbb Q}$, the unary relation 
$U = T_1 \cap c \cdot T_2$ in $\langle \Gamma \rangle$ is non-empty, bounded,
and not $0$-valid.
Let $k=1+\lceil \max(|\sup U|,|\inf U|) \rceil$.
Consider the CSP instance 
\[\exists x, y \: . \: U(x) \wedge k \cdot y=x,\]
and note that it has a solution: arbitrarily choose $x \in U$ and let $y=x/k$.
However, it cannot have any integer solution since $0 \not\in U$ and $k$ was chosen such that $k > |x|$.
Both $U$ and the equation $k \cdot y = x$ are pp-definable in $\Gamma$,
so the claim follows from Lemma~\ref{lem:ipclosed}.
\medskip

(\ref{ip:unary}) $\Rightarrow$ (\ref{ip:infty}).
We show $\neg$(\ref{ip:infty}) $\Rightarrow$ $\neg$(\ref{ip:unary}).
Assume that there exists a unary relation $U$ in $\langle \Gamma \rangle$ containing a positive
point and $(M,\infty) \cap U = \emptyset$, for some $M < \infty$.
If $U$ is bounded, then $\neg$(\ref{ip:unary}) follows immediately.
Otherwise, by Lemma~\ref{fakeomin},
there exists some $M' < \infty$ such that $(M',\infty) \cap U = \emptyset$ and $(-\infty,-M') \subseteq U$.
By Lemma~\ref{halfboundgenerate}(2), there exists a bounded unary relation
in $\langle \{R_+,U\} \rangle$ and, consequently, there exists such
a relation in $\langle \Gamma \rangle$.
\medskip

(\ref{ip:infty}) $\Rightarrow$ (\ref{ip:scalable}).
We show $\neg$(\ref{ip:scalable}) $\Rightarrow$ $\neg$(\ref{ip:infty}).
Arbitrarily choose an $n$-ary relation $R \in \Gamma$ such that
$R$ is not scalable. Arbitrarily choose a tuple $p=(p_1,\ldots,p_n) \in R$
that witnesses that $R$ is not scalable, i.e.,
the set $Y = \{ y \geq 1 \mid y \cdot p \in R \}$ is unbounded.
Consider the set $U=\{a \in {\mathbb Q} \mid a \cdot p \in R\}$
and note that $U$ is pp-definable in $\{R,R_+\}$ by Lemma~\ref{generateequations}:
\[U(x)\equiv\exists y_1,\dots,y_n \: . \: y_1=x \cdot p_1 \wedge \dots \wedge y_n=x \cdot p_n \wedge R(y_1,\dots,y_n).\]
Note that $1 \in U$ so $U$ contains a positive point.
Furthermore, since $Y$ is unbounded, it follows from Lemma~\ref{fakeomin}
that $(M,\infty) \subseteq Y$ for some $M < \infty$.
Hence, by definition of $U$, we have $(M,\infty) \cap U = \emptyset$,
so $\Gamma$ satisfies (P$_{\infty}$).
\medskip

(\ref{ip:scalable}) $\Rightarrow$ (\ref{ip:ip}).
This implication is not difficult to deduce from the proof of Lemma~6 in~\cite{Jonsson:Loow:AI13}.
We include
an argument here for completeness.
Assume that $\Gamma$ is scalable and let $\Phi$ be an arbitrary instance of CSP$(\Gamma)$
with a solution $x$.
Let $R_1, \dots, R_m$ be an enumeration of the atoms of $\Phi$ that contain a relation symbol from $\Gamma$.
Since $\Gamma$ is scalable, it follows that there exists a constant $A_i$
such that $a x$ satisfies $R_i(x_{i_1}, \dots, x_{i_k})$ for all $a \geq A_i$.
Let $A = \max \{ A_1, \dots, A_m \}$.
Then, $a x$ satisfies all atoms (including the equalities) of $\Phi$, for all $a \geq A$.
Therefore, if $a$ is chosen to be a large enough common multiple of the denominators in $x$,
then $a x$ is an integral solution to $\Phi$.
\end{proof}

As an immediate application of Theorem~\ref{unboundedinteger} we give a
complement to Theorem~\ref{thm:main-int} in
the case when $\Gamma$ violates (P$_\infty$).

\begin{corollary}
Let $\{R_+\} \subseteq \Gamma \subseteq SL_{\mathbb Q}[{\mathbb Q}]$ be a constraint language that violates (P$_\infty$).
The problem CSP$(\Gamma \cup \{{\mathbb Z}\})$ is in P if
\begin{itemize}
\item
$\Gamma$ is $0$-valid;
\item
$\langle \cone(\Gamma) \cup \{\{1\}\} \rangle$ does not contain a bnu; or
\item
$\cone(\Gamma)$ is essentially convex.
\end{itemize}
Otherwise, CSP$(\Gamma \cup {\mathbb Z})$ is NP-hard.
\end{corollary}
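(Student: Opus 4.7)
The plan is to show that, under the hypothesis $\neg(P_\infty)$, the unary constraint $\mathbb{Z}$ can be eliminated for free, reducing the problem to an ordinary semilinear CSP over $\mathbb{Q}$, after which Theorem~\ref{nopropertymainresult} applies directly.

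First, dispose of the trivial case: if $\Gamma$ is $0$-valid, then every instance of $\mathrm{CSP}(\Gamma \cup \{\mathbb{Z}\})$ is satisfied by the all-zero integer assignment, so the problem is in P. From here on, assume $\Gamma$ is not $0$-valid. Since $\Gamma$ violates $(P_\infty)$, Theorem~\ref{unboundedinteger} applies and tells us that $\Gamma$ has the integer property (equivalently, $\Gamma$ is scalable).

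Next, I would show that $\mathrm{CSP}(\Gamma \cup \{\mathbb{Z}\})$ and $\mathrm{CSP}(\Gamma)$ are polynomial-time equivalent. In one direction, any instance of $\mathrm{CSP}(\Gamma)$ is already an instance of $\mathrm{CSP}(\Gamma \cup \{\mathbb{Z}\})$. In the other direction, given an instance $\Phi$ of $\mathrm{CSP}(\Gamma \cup \{\mathbb{Z}\})$, let $\Phi'$ be obtained from $\Phi$ by deleting every $\mathbb{Z}$-constraint. Clearly $\Phi$ satisfiable implies $\Phi'$ satisfiable. Conversely, if $\Phi'$ is satisfiable then, by the integer property applied to $\Gamma$, $\Phi'$ admits an integer solution $s \colon \vars(\Phi') \to \mathbb{Z}$; since $s$ is integral on all variables, it satisfies every $\mathbb{Z}$-constraint we removed, hence $s$ is also a solution to $\Phi$.

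Finally, I would invoke Theorem~\ref{nopropertymainresult} applied to $\Gamma$: because $\Gamma$ is not $0$-valid and violates $(P_\infty)$ in this branch, $\mathrm{CSP}(\Gamma)$ is in P precisely when $\langle \cone(\Gamma) \cup \{\{1\}\} \rangle$ contains no bnu, or $\cone(\Gamma)$ is essentially convex, and is NP-hard otherwise. Composing with the equivalence above transfers this dichotomy verbatim to $\mathrm{CSP}(\Gamma \cup \{\mathbb{Z}\})$. There is no real obstacle here; the only subtlety is checking that the reduction is polynomial, which is immediate since deleting $\mathbb{Z}$-constraints is a trivial linear-time transformation and the integer property is used only for the correctness argument, not algorithmically.
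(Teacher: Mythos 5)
Your proof is correct and follows essentially the same route as the paper: dispose of the $0$-valid case, use Theorem~\ref{unboundedinteger} to get the integer property from $\neg(\mathrm{P}_\infty)$, conclude that CSP$(\Gamma\cup\{\mathbb{Z}\})$ and CSP$(\Gamma)$ are polynomial-time equivalent, and finish with Theorem~\ref{nopropertymainresult}. The only difference is that you spell out the equivalence argument (deleting $\mathbb{Z}$-constraints and recovering them via an integer solution) in slightly more detail than the paper does.
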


\begin{proof}
If $\Gamma$ is $0$-valid, then $\Gamma \cup \{{\mathbb Z}\}$ is $0$-valid,
and hence in P.
Otherwise, Theorem~\ref{unboundedinteger} implies that $\Gamma$ has the integer property.
Therefore, every instance of CSP$(\Gamma)$ has a solution if and only if it has an integer solution.
It follows that CSP$(\Gamma \cup \{{\mathbb Z}\})$ is polynomial-time equivalent to CSP$(\Gamma)$.
Since $\Gamma$ violates (P$_\infty$), the result follows from Theorem~\ref{nopropertymainresult}.
\end{proof}

\section{Discussion}
\label{sec:waysforward}

\subsection{Generalisations}\label{sec:generalisations}
A natural goal, following the proof of Theorem~\ref{superresult},
would be to determine the complexity of CSP$(\Gamma)$ 
for an arbitrary semilinear constraint language $\Gamma$, i.e.,
when $\Gamma$ does not necessarily contain $R_+$.
Below we indicate a few such attempts and the difficulties that accompany them. 

Consider Corollary~\ref{linearmainresult}. 
Our main result,
Theorem~\ref{superresult}, generalises this by
removing the assumption that $\{1\}$ is in $\Gamma$.
A natural question is then what happens if we instead remove the assumption
that the addition relation needs to be in $\Gamma$.
To this end, let $SL^1$ denote the set of semilinear constraint languages such
that $\{\{1\}\} \subseteq \Gamma$ and $\{R_+\} \not\subseteq \langle \Gamma \rangle$.
A straightforward modification of the construction in Section~6.3
of Jonsson and Lööw~\cite{Jonsson:Loow:AI13} gives the following:
 for every constraint language $\Gamma'$ over a finite domain, there
exists a $\Gamma \in SL^1$ such that 
CSP$(\Gamma')$ and CSP$(\Gamma)$ are polynomial-time equivalent problems.
Hence, a complete classification 
would give us a complete classification of finite-domain
CSPs, and such a classification is a major open
question within the CSP community~\cite{Feder:Vardi:stoc93,Feder:Vardi:siamjc98,Hell:Nesetril:08}. 
We also observe that for every
{\em temporal constraint language} (i.e., languages that are first-order
definable in $\{<\}$ over the rationals),
there
exists a $\Gamma \in SL^1$ such that 
CSP$(\Gamma')$ and CSP$(\Gamma)$ are polynomial-time equivalent problems.
This follows from the fact that every temporal constraint language $\Gamma'$
admits a polynomial-time reduction from
CSP$(\Gamma' \cup \{\{1\}\})$ to
CSP$(\Gamma')$: simply equate all variables appearing in $\{1\}$-constraints and note
that any solution can be translated into a solution such that this variable is assigned
the value 1. The complexity of temporal constraint languages is fully 
determined~\cite{Bodirsky:Kara:jacm2010} and
the polynomial-time solvable cases fall into nine different categories. 
The proof is complex and it is based on the universal-algebraic approach for studying CSPs.
We conclude that a complete classification of the languages in $SL^1$ will require advanced
techniques and will have to be conditioned on the classification of finite-domains CSPs.

A smaller first step towards removing $R_+$ from Corollary~\ref{linearmainresult}
would be to only slightly relax the addition relation.
Consider the \emph{affine} addition relation
$A_+=\{(a,b,c,d) \in {\mathbb Q}^4 \; | \; a-b+c=d\}$.
This relation can be viewed as a `relaxed' variant of $R_+$ since
$A_+$ can be pp-defined in $\{R_+\}$ but not the other way round.
Let $\Gamma$ be a constraint language such that 
$\{A_+,\{c\}\} \subseteq \Gamma \subseteq SL_{\mathbb Q}[\mathbb Q]$ for some
$c \in {\mathbb Q}$.
It is not hard to reduce the complexity classification for such constraint language
to that of Theorem~\ref{superresult}:

Given a relation $R \subseteq {\mathbb Q}^k$ and a rational number $c \in {\mathbb Q}$, let
$R+c$ denote the relation $\{(x_1+c,\dots,x_k+c) \mid (x_1,\dots,x_k) \in R\}$.
For instance, $A_+ + c = A_+$.
Similarly, we define $\Gamma+c=\{R+c \mid R \in \Gamma\}$ for constraint languages $\Gamma$.
Note that CSP$(\Gamma)$ and CSP$(\Gamma + c)$ are polynomial-time equivalent problems.

Arbitrarily choose
a constraint language $\{A_+,\{c\}\} \subseteq \Gamma \subseteq SL_{\mathbb Q}[\mathbb Q]$
and let $\Gamma'=\Gamma + (-c)$.
The problem CSP$(\Gamma')$ is polynomial-time equivalent with CSP$(\Gamma)$, $A_+ \in \Gamma'$
and $\{0\} \in \Gamma'$. 
The fact that $A_+ \in \Gamma'$
and $\{0\} \in \Gamma'$ implies that $R_+ \in \langle \Gamma' \rangle$ since
$R_+(x,y,z)$ can be pp-defined by
\[\exists w \: . \: \{0\}(w) \wedge A_+(x,w,y,z).\]
Consequently, CSP$(\Gamma')$ and CSP$(\Gamma' \cup \{R_+\})$ are polynomial-time
equivalent problems. We conclude that CSP$(\Gamma)$ is either in P or NP-complete
by Theorem~\ref{superresult}.

An interesting way forward would be to classify the complexity of CSP$(\Gamma)$
for all $\{A_+\} \subseteq \Gamma \subseteq SL_{\mathbb Q}[{\mathbb Q}]$.
Such a result would be a substantial generalisation of the results in Section~4
of Bodirsky et al.~\cite{Bodirsky:etal:jlc2012}.
Here, we see no obvious obstacles as in the case above for $\Gamma \in SL^1$.

\subsection{The metaproblem}

Theorem~\ref{superresult} shows that for every constraint language
$\{R_+\} \subseteq \Gamma \subseteq SL_{\mathbb Q}[{\mathbb Q}]$, the
problem
CSP$(\Gamma)$ is either in P or NP-complete.
This makes the following
computational problem (sometimes referred to as a {\em metaproblem} in the literature) relevant:

\medskip

\begin{center}
Given a constraint language $\{R_+\} \subseteq \Gamma \subseteq SL_{\mathbb Q}[{\mathbb Q}]$, is
CSP$(\Gamma)$ in P or NP-complete?
\end{center}

\medskip

We do not know the complexity of this problem and, in fact, 
it is not clear whether it is decidable or not.
Interesting methods for tackling similar questions have been identified by, for instance,
Bodirsky et al.~\cite{Bodirsky:etal:jsl2013} and
Dumortier et al.~\cite{Dumortier:etal:jcss99a,Dumortier:etal:jcss99b}.
Bodirsky et al. analyse the decidability of abstract properties of
constraint languages such as
whether certain relations are pp-definable or not.
Their results are based on a number of different techniques from model theory, universal algebra,
Ramsey theory, and topological dynamics.
Dumortier et al.~\cite{Dumortier:etal:jcss99a,Dumortier:etal:jcss99b} show
that it is decidable whether a given first-order formula using the binary functions
$*$ and $+$, and the binary relation $\leq$ over $\mathbb R$ with parameters from $\mathbb Q$ 
defines a semilinear relation. These results indicate that there are non-obvious properties
of semilinear relations that may be relevant for proving (un)decidability of the metaproblem.

\subsection*{Acknowledgements}
The authors thank Manuel Bodirsky for suggesting the relation $A_+$ as a relaxation of $R_+$ (cf.~Section~\ref{sec:generalisations}).

\end{document}